\documentclass[11pt,a4paper]{article}
\usepackage[a4paper]{geometry}
\usepackage{amssymb,latexsym,amsmath,amsfonts,amsthm}
\usepackage{graphicx}
\usepackage{epstopdf}
\usepackage{tikz}
\usepackage{pgflibraryshapes}
\usetikzlibrary{arrows,decorations.markings}
\usepackage{subfigure}
\usepackage{overpic}
\usepackage{fullpage}
\usepackage{comment,verbatim}
\usepackage{hyperref}
\usepackage{color}
\usepackage{mathrsfs}
\usepackage[title,titletoc]{appendix}
\usepackage{ulem}
\usepackage{xcolor}

\renewcommand{\Re}{\mathrm{Re}\,}
\renewcommand{\Im}{\mathrm{Im}\,}

\newcommand{\ud}{\,\mathrm{d}}

\newcommand{\Boh}{\mathcal{O}}
\def\red{\color{red}}
\def\blue{\color{blue}}

\newtheorem{theorem}{Theorem}[section]
\newtheorem{lemma}[theorem]{Lemma}
\newtheorem{proposition}[theorem]{Proposition}

\newtheorem{rhp}[theorem]{RH problem}

\newtheorem{Riemann-Hilbert Problem}{Definition}

\theoremstyle{definition}

\theoremstyle{remark}

\newtheorem{remark}[theorem]{Remark}

\numberwithin{equation}{section}

\begin{document}

\title{On The Eigenvalue Rigidity of the Laguerre Unitary Ensemble}

\author{Chenhao LU\footnotemark[1] \ and Xiaolu YUE\footnotemark[2]}

\renewcommand{\thefootnote}{\fnsymbol{footnote}}
\footnotetext[1]{Department of Mathematics, City University of Hong Kong, Tat Chee
Avenue, Kowloon, Hong Kong. E-mail: \texttt{xiaolyue@ciyu.edu.hk}}
\footnotetext[2]{Department of Mathematics, City University of Hong Kong, Tat Chee
Avenue, Kowloon, Hong Kong. E-mail: \texttt{chenhaolu3-c@my.cityu.edu.hk}}

\date{\today}

\maketitle

\begin{abstract}
In this paper, we establish an optimal global rigidity estimate for the eigenvalues of the Laguerre unitary ensemble. Using the central limit theorem, we first construct a random measure via the eigenvalue counting function and then prove its convergence to a Gaussian multiplicative chaos measure, which yields the desired rigidity result. To prove this convergence, we apply a sufficient condition due to Claeys et al.~\cite{CFL2021} and carry out an asymptotic analysis of the corresponding exponential moments. 
\end{abstract}

\noindent 2020 \textit{Mathematics Subject Classification}: Primary 60B20; Secondary 41A60, 47B35, 60G15, 60G57.

\noindent \textit{Keywords}: Eigenvalue rigidity, Gaussian multiplicative chaos, Laguerre unitary ensemble, Riemann-Hilbert problems, Hankel determinants.

 \tableofcontents

\section{Introduction}

\subsection{Main Results}

In this paper, we consider a generalized Laguerre Unitary Ensemble (LUE), whose density of the eigenvalues is given by
\begin{equation} \label{density of LUE}
\rho_N(\lambda_1, \lambda_2, \cdots, \lambda_N) = \frac{1}{Z_N} \prod_{1\leq \lambda_k<\lambda_j \leq N} |\lambda_k - \lambda_j|^2 \prod_{j=1}^N (1+\lambda_j)^\alpha e^{-NV(\lambda_j)}, \quad \lambda_j > -1
\end{equation}
where $Z_N$ is the normalization constant, and $V(x)$ in real-analytic on $[-1, \infty)$. We further assume that $V$ is one-cut and regular, with
\begin{equation}
    \lim_{x\to +\infty} \frac{V(x)}{\log x} = +\infty.
\end{equation}

The limiting eigenvalue distribution in this case is given by
\begin{equation} \label{LUE eq measure}
    d\mu_L = \psi_V(\lambda)\sqrt{\frac{1-\lambda}{1+\lambda}}d\lambda, \quad \lambda\in [-1, 1]
\end{equation}
where $\psi_V(x)$ is real-analytic on $[-1, \infty)$ and $\psi_V(x) > 0$ for $x \in [-1, 1]$. It implies that for any continuous $f$ with compact support on $[-1, +\infty)$, one has
\begin{equation}
\sum_{i=1}^N f(\lambda_i) \to \int_{-1}^\infty f(\lambda)d\mu_L
\end{equation}
in probability as $N\to\infty$. For more details, see \cite{CG2021, FPJ}. For the standard LUE where $V(x) = 2(x+1)$, we have $\psi_V(x) = \frac{1}{\pi}$. Here we use the same convention as that in \cite{CG2021} to make the support of the equilibrium measure be $[-1, 1]$, the same as the Gaussian unitary ensemble (GUE) and the Jacobi unitary ensemble(JUE); hence, we can make direct comparisons with the rigidity results on GUE in \cite{CFL2021} and the JUE in \cite{Dai-Lu-JUE}. As we may see below, our results show that the same optimal rigidity estimation holds for all three of these classic ensembles.

The corresponding distribution function is given as 
\begin{equation}\label{distribution of eq measure}
    F(x) = \int_{-1}^x d\mu_L.
\end{equation}
Define the $j$-th percentile $\kappa_j$ as
\begin{equation}\label{kappa}
\int_{-1}^{\kappa_j} d\mu_{L} = \frac{j}{N},
\end{equation}
which can be viewed as the theoretical limit of $\lambda_j$. We are particularly interested in how much $\lambda_j$ can fluctuate around $\kappa_j$. This kind of problem is called eigenvalue rigidity.


The eigenvalue rigidity of general Wigner matrices was established in a foundational series of works by by Erd\"{o}s, Yau et al. in a \cite{Erd:Sch:Yau2009-cmp, Erd:Sch:Yau2009, Erd:Yau:Yin2012-ptrf, Erd:Yau:Yin2012}. As a canonical case of complex Wigner matrices, the eigenvalue density of the classical GUE is given by
\begin{equation} \label{eq: GUE density}
\rho_N(\lambda_1, \lambda_2, \cdots, \lambda_N) = \frac{1}{Z_N} \prod_{1\leq \lambda_k<\lambda_j \leq N} |\lambda_k - \lambda_j|^2 \prod_{j=1}^N  e^{-2N \lambda_j^2}.
\end{equation}
From \cite[Thm. 2.2]{Erd:Yau:Yin2012}, one can see that for suitable constants $\alpha > \alpha' > 0$ and $C, c > 0$ 
\begin{equation}
\mathbb{P}\left(  \max_{j=1, \ldots, N} \left\{ \sqrt{1-\kappa_j^2} \, |\lambda_ k - \kappa_j| \right\} \geq \frac{(\log N)^{\alpha \log \log N}}{N} \right) \leq C \exp \left( -c (\log N)^{\alpha' \log \log N} \right).
\end{equation}
As a consequence, with high probability, the global eigenvalue fluctuations of a GUE matrix are bounded by $(\log N)^{\alpha \log \log N}/N$.

However, it is difficult to obtain the lower bound for the eigenvalue fluctuations; hence, one cannot decide whether the rigidity results are optimal. The first optimal rigidity result for random matrix models was obtained for the circular unitary ensemble, where the eigenvalues are distributed on the unit circle. Consider the ordered eigenangles $0\leq \theta_1\leq \cdots \leq \theta_N < 2\pi$ with density
\begin{equation}
\rho_N(\theta_1, \cdots, \theta_N) = \frac{1}{Z_N} \prod_{1\leq \theta_k < \theta_j \leq N}|e^{i\theta_k} - e^{i\theta_j}|^2.
\end{equation}
It is known that the limiting distribution of CUE is uniform on the unit circle; for instance, see \cite{FPJ}. 
The $j$-th eigenangle is classically located at $2 \pi j/N$. As shown by Arguin, Belius, and Bourgade \cite[Theorem 1.5]{ABB2017},  for any $\varepsilon>0$, one has
\begin{equation} \label{eq:cue-optimal}
\lim_{N \to \infty}
\mathbb{P}\left( (2-\varepsilon)\frac{\log N}{N} \leq \max_{j=1, \cdots, N} \left| \theta_j-\frac{2\pi j}{N}\right| \leq (2+\varepsilon)\frac{\log N}{N} \right) = 1.
\end{equation}
Similar results were also obtained in \cite{Paq:Zei2018}. This implies that the maximum fluctuation of the eigenangles $\theta_j$ around their classical location $2 \pi j/N$ is of order $O(\log N / N)$. The difference between the coefficients of the upper bound and lower bound can be arbitrarily small. Hence, we can claim that this estimation must be optimal.

It was only with the introduction of Gaussian multiplicative chaos (GMC) that more optimal estimations arising from random matrix theory were obtained, and we will discuss them in detail in Section \ref{subsubsec: GMC}.

Define the eigenvalue counting function as
\begin{equation}\label{definition of hN}
    h_N(x) = \sqrt{2}\pi \left(\sum_{1\leq j \leq N} 1_{\lambda_j \leq x} - N\int_{-1}^x d\mu_L\right).
\end{equation}
Then it follows directly that
\begin{equation}
    h_N(\lambda_j) = \sqrt{2}\pi N(F(\kappa_j) - F(\lambda_j)),
\end{equation}
where $d\mu_L$ and $F(\cdot)$ are defined in \eqref{LUE eq measure} and \eqref{distribution of eq measure}.

Our first result is regarding the maximum of $h_N(x)$. To be specific, we have the following theorem.
\begin{theorem} \label{maximum of hN}
For $h_N$ defined in \eqref{definition of hN}, we have
\begin{equation}
    \lim_{N\to\infty} \mathbb{P}\left(\sqrt{2}(1-\varepsilon)\log N \leq \max_{x\in[-1, \infty)}h_N(x) \leq \sqrt{2}(1+\varepsilon)\log N\right) = 1
\end{equation}
for any $\varepsilon > 0$.
\end{theorem}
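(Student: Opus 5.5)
We will follow the strategy of Claeys, Fahs, Lambert and Webb \cite{CFL2021} for the GUE. The plan is to show that the random measure
\begin{equation*}
\mu_N^{\gamma}(dx) = \frac{e^{\gamma h_N(x)}}{\mathbb{E}\, e^{\gamma h_N(x)}}\, dx, \qquad x\in(-1,1),
\end{equation*}
converges in law to a Gaussian multiplicative chaos measure $\mu^{\gamma}$ associated with the log-correlated field $h$. Here $h$ is the limit of $h_N$, with covariance $-\log\bigl|\frac{x-y}{1-xy+\sqrt{(1-x^2)(1-y^2)}}\bigr|$ after the $\sqrt{2}\pi$ normalization. This should hold for all $\gamma$ in the full $L^1$ phase $0<\gamma<\sqrt{2}$ (with the present normalization), and likewise for $-h_N$. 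The two bounds then come from the GMC convergence and from a first-moment estimate, respectively.

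\emph{Upper bound.} We use a first-moment (Markov) argument together with a priori control at the edges and outside the bulk. For $x$ in a mesh of spacing $N^{-2}$ in $[-1+\delta_N,1-\delta_N]$, the asymptotics of $\mathbb{E}e^{\gamma h_N(x)}\approx N^{\gamma^2/2}$ give
\begin{equation*}
\mathbb{P}\bigl(h_N(x)>\sqrt2(1+\varepsilon)\log N\bigr)\le N^{\gamma^2/2-\gamma\sqrt2(1+\varepsilon)}.
\end{equation*}
At the optimizer $\gamma=\sqrt2(1+\varepsilon)$ this equals $N^{-(1+\varepsilon)^2}$. That bound only beats a union over $N$ well-chosen points, and the count $\sum 1_{\lambda_j\le x}$ is monotone. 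So we exploit monotonicity: $h_N$ varies by at most $O(1)+O(N\,\Delta F)$ between grid points $x_k=F^{-1}(k/N)$. This reduces the problem to the $N$ points $x_k$, where the union bound gives $N\cdot N^{-(1+\varepsilon)^2}\to0$.

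The edge regions need separate treatment. Near the hard edge $-1$ and the soft edge $1$, and on $[1,\infty)$, we use the fact that $|h_N|\le C\log N$ with small constant there. This follows from uniform exponential moment asymptotics of Fisher--Hartwig type that degrade gracefully at the edges, or alternatively from large deviation bounds showing there are no eigenvalues beyond $1+N^{-2/3+\epsilon}$.

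\emph{Lower bound.} We use the standard GMC argument of \cite{CFL2021}. If $\max h_N\le\sqrt2(1-\varepsilon)\log N$ on a fixed compact set $K\subset(-1,1)$, then for $\gamma$ close to $\sqrt2$,
\begin{equation*}
\mu_N^\gamma(K)\le |K|\, N^{\gamma\sqrt2(1-\varepsilon)-\gamma^2/2+o(1)}\to0.
\end{equation*}
But $\mu_N^\gamma(K)\to\mu^\gamma(K)>0$ almost surely in law, which gives a contradiction with probability tending to one.

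To prove the GMC convergence we invoke the sufficient condition of \cite{CFL2021}. It requires asymptotics of $\mathbb{E}\prod_{k=1,2}e^{\gamma_k h_N(x_k)}$ that are uniform for $x_1,x_2$ in compacts of $(-1,1)$, including merging $|x_1-x_2|\to0$, with
\begin{equation*}
\mathbb{E}\, e^{\gamma_1h_N(x_1)+\gamma_2h_N(x_2)}=\mathbb{E}\, e^{\gamma_1h_N(x_1)}\,\mathbb{E}\, e^{\gamma_2h_N(x_2)}\, e^{\gamma_1\gamma_2 \Sigma(x_1,x_2)}(1+o(1)).
\end{equation*}
Here $\Sigma$ is the covariance above, and the asymptotics are needed for $\gamma_k$ in compacts, possibly complex. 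These exponential moments are Hankel determinants with Laguerre-type weight $(1+x)^\alpha e^{-NV}$ carrying two jump-type Fisher--Hartwig singularities $e^{\pi i\beta_k\,\mathrm{sgn}(x-x_k)}$ with $\beta_k$ imaginary, proportional to $\gamma_k$. We will compute them by Riemann--Hilbert steepest descent. The jumps at $x_k$ are handled by confluent hypergeometric parametrices, the point $-1$ by a Bessel parametrix, and the point $1$ by an Airy parametrix. A differential identity in $\beta_k$ then integrates to the moment formula.

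The main obstacle is uniformity as $x_1\to x_2$. Following \cite{CFL2021} and \cite{Dai-Lu-JUE}, we will handle the regime $|x_1-x_2|\lesssim N^{-1+\epsilon}$ with a merging parametrix built from a Painlev\'e V type model problem. The separated regime is covered by the standard two-singularity analysis, and the two regimes are then matched in the differential identity. The hard edge only changes the Bessel local analysis and the explicit constants, which we expect to cancel in the ratio above.
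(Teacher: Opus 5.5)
Your overall architecture matches the paper's. The lower bound comes from GMC convergence of $\mu_N^\gamma$ via the sufficient condition of \cite{CFL2021}, fed by Riemann--Hilbert asymptotics of Hankel determinants (Painlev\'e V in the merging regime). The upper bound comes from Markov's inequality plus a union bound over the percentiles $\kappa_k$, using monotonicity of the counting function.

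\textbf{The gap.} The lower-bound step, as you wrote it, does not work. On the event $\{\max_K h_N\le\sqrt2(1-\varepsilon)\log N\}$ your bound is $\mu_N^\gamma(K)\le |K|N^{\gamma\sqrt2(1-\varepsilon)-\gamma^2/2+o(1)}$. Its exponent tends to $1-2\varepsilon>0$ as $\gamma\uparrow\sqrt2$. It is negative only if $\gamma>2\sqrt2(1-\varepsilon)$, which for $\gamma<\sqrt2$ forces $\varepsilon>1/2$. So bounding $e^{\gamma h_N}$ by its supremum only gives $\max_K h_N\ge(\gamma/2-o(1))\log N\approx\frac{1}{\sqrt2}\log N$, a factor $2$ short.

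The missing idea is the thick-point argument, which is what \cite[Theorem 3.4]{CFL2021} does (the paper invokes it in Proposition \ref{lower bound of hN}): $\mu_N^\gamma$ asymptotically puts no mass where $h_N(x)\le(\gamma-\delta)\log N$. Using $1_{\{h\le a\}}\le e^{\delta(a-h)}$, one gets
\[
\mathbb{E}\,\mu_N^\gamma\bigl(\{x\in K:\ h_N(x)\le(\gamma-\delta)\log N\}\bigr)\le\int_K\frac{N^{\delta(\gamma-\delta)}\,\mathbb{E}\,e^{(\gamma-\delta)h_N(x)}}{\mathbb{E}\,e^{\gamma h_N(x)}}\,dx=O\bigl(N^{-\delta^2/2}\bigr).
\]
This requires two-sided one-point bounds $c\,N^{\gamma'^2/2}\le\mathbb{E}\,e^{\gamma' h_N(x)}\le C\,N^{\gamma'^2/2}$ on $K$, which is exactly Proposition \ref{Assumption 3.1}. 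Since $\mu_N^\gamma(K)\to\mu^\gamma(K)>0$ in law, with probability tending to one some $x\in K$ has $h_N(x)>(\gamma-\delta)\log N$. Then let $\gamma\uparrow\sqrt2$ and $\delta\downarrow0$.

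\textbf{Two smaller points.}

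First, your two-point factorization with $e^{\gamma_1\gamma_2\Sigma(x_1,x_2)}(1+o(1))$ cannot hold uniformly as $|x_1-x_2|\to0$. Below scale $1/N$ the true ratio saturates at order $N^{\gamma_1\gamma_2}$, while $e^{\gamma_1\gamma_2\Sigma}\sim|x_1-x_2|^{-\gamma_1\gamma_2}$ keeps growing. What is needed is only an $O(1)$-multiplicative control through the term $-\gamma_1\gamma_2\max\{0,\log(N|x_1-x_2|)\}$, which Proposition \ref{Asymptotics in the merging regime} supplies. The condition of \cite{CFL2021} also requires the moments twisted by smooth $w_N$ that are analytic only on a shrinking domain (Proposition \ref{Asymptotics in the Separated Regime}).

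Second, for the upper bound you do not need $|h_N|\le C\log N$ with small $C$ near the edges; that refinement is only used for Theorem \ref{main theorem}. It suffices to have $\mathbb{E}\,e^{\gamma h_N(x)}\le R_\gamma N^{\gamma^2/2}$ uniformly on $(-1,1)$. The paper gets this from the edge asymptotics of Proposition \ref{Asymptotics in the edge}, plus monotonicity or deterministic bounds in the windows $1+x<N^{-2}\log\log N$ and $1-x<N^{-2/3}\log\log N$. On $[1,\infty)$ one has $h_N\le 0$ trivially, since $F\equiv1$ there.
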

It is worth mentioning that this is the optimal estimation for the maximum of $h_N(x)$ on $[-1, \infty)$, since the upper bound and lower bound differ only by a sufficiently small constant.

Our next result concerns eigenvalue rigidity, i.e., the estimation of $\max_jF'(\kappa_j)|\lambda_j - \kappa_j|$. For $j$ not close to $-1$ or $1$, a simple Taylor expansion shows that it can be well approximated by $h_N(x)$. But near the edge, especially near $-1$, the term $F'(\kappa_j)$ explodes. Hence, further analytical techniques are necessary, applied in tandem with Theorem \ref{maximum of hN}.
\begin{theorem} \label{main theorem}
Let $\lambda_1\leq\cdots\leq \lambda_N$ be distributed with respect to \eqref{density of LUE}. Then for any $\varepsilon>0$, we have
\begin{equation} \label{eq:main-theorem}
\lim_{N\to\infty}\mathbb{P}\left((1-\varepsilon)\frac{\log N}{N} < \max_{j=1, \cdots, N}\left\{F'(\kappa_j)|\lambda_j-\kappa_j|\right\} < (1+\varepsilon)\frac{\log N}{N}\right) = 1,
\end{equation}
where $\kappa_j$ is defined in \eqref{kappa} and $F(x)$ is defined in \eqref{distribution of eq measure}.
\end{theorem}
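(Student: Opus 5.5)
The plan is to deduce Theorem~\ref{main theorem} from Theorem~\ref{maximum of hN}, following the strategy used for the GUE in \cite{CFL2021} and for the JUE in \cite{Dai-Lu-JUE}. Two ingredients are needed. First, Theorem~\ref{maximum of hN} applied to $h_N$ and, by the same GMC argument, to $-h_N$. This gives $\max_x |h_N(x)| \le \sqrt2(1+\varepsilon)\log N$ with high probability, together with a matching lower bound for $\max_x h_N(x)$. Second, a comparison between $|F(\lambda_j)-F(\kappa_j)|$ and $F'(\kappa_j)|\lambda_j-\kappa_j|$. The identity $h_N(\lambda_j)=\sqrt2\pi N(F(\kappa_j)-F(\lambda_j))$ (taking $\lambda_j$ as a right limit, which costs at most $O(1)$) gives
\begin{equation*}
\max_j N|F(\lambda_j)-F(\kappa_j)| \le \frac{1}{\sqrt2\pi}\max_x|h_N(x)| \le \frac{1+\varepsilon}{\pi}\log N
\end{equation*}
with probability tending to one. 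The normalization constants will need care, since in the statement the factor $1/\pi$ must be matched against $F'$. Near an edge the relevant object may be $\sqrt{1-\kappa^2}$-weighted, as in the GUE case, and I will follow the paper's convention.

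\textbf{Bulk.} Fix $\delta>0$ and consider $j$ with $\kappa_j\in[-1+\delta,1-\delta]$. On the high-probability event above, $|F(\lambda_j)-F(\kappa_j)|=O(\log N/N)$. Since $F'$ is bounded below on a neighbourhood of this interval, this forces $|\lambda_j-\kappa_j|=O(\log N/N)$. A Taylor expansion then gives $F(\lambda_j)-F(\kappa_j)=F'(\kappa_j)(\lambda_j-\kappa_j)(1+O(\log N/N))$. So in the bulk, $\max_j F'(\kappa_j)|\lambda_j-\kappa_j|$ is, up to a factor $1+o(1)$, the same as $\max_j|F(\lambda_j)-F(\kappa_j)|$, and hence as $\max|h_N(\lambda_j)|/(\sqrt2\pi N)$.

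\textbf{Lower bound.} The maximum of $h_N$ is attained in the bulk, because the GMC construction shows that the thick points of $h_N$ charge every interval of the bulk. Apply Theorem~\ref{maximum of hN} restricted to $[-1+\delta,1-\delta]$; the proof of Theorem~\ref{maximum of hN} gives this restricted version, since the GMC limit has positive mass on any interval. Then pass from a maximizer $x$ to the nearest eigenvalue $\lambda_j$. Between consecutive eigenvalues $h_N$ is monotone with slope $O(N)$, and the bulk gaps are $O(\log N/N)$ on the rigidity event, so the change in $h_N$ is only $O(\log N)\cdot o(1)$ relative to the scale. The loss here must be handled more carefully, using the fact that $h_N$ jumps by $\sqrt2\pi$ at eigenvalues and decreases with slope $\sqrt2\pi N F'$. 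Following \cite{CFL2021}, I will compare $\max_x h_N$ with $\max_j h_N(\lambda_j^{\pm})$, which differ by at most $O(1)$.

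\textbf{Edges (main obstacle).} For $\kappa_j$ within $\delta$ of $\pm1$, I will only need an upper bound. At the hard edge $-1$, $F'(\kappa)\sim c(1+\kappa)^{-1/2}$, and $F(\kappa)-F(-1)\sim c'(1+\kappa)^{1/2}$. Write $u=\sqrt{1+\lambda}$, so that $F$ is essentially linear in $u$. Then $|F(\lambda_j)-F(\kappa_j)|\le C\log N/N$ gives $|u_j-u_{\kappa_j}|\lesssim\log N/N$, and
\begin{equation*}
F'(\kappa_j)|\lambda_j-\kappa_j|\approx \frac{c\,|u_j-u_{\kappa_j}|\,(u_j+u_{\kappa_j})}{u_{\kappa_j}}.
\end{equation*}
This is harmless when $u_{\kappa_j}\gg\log N/N$. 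The dangerous range is $j\le(\log N)^2$, where $u_{\kappa_j}\sim j/N$ and the ratio can blow up. For these $j$ I will use $|F(\lambda_j)-F(\kappa_j)|\le C\log N/N$ again, now to get $u_j\lesssim \log N/N$. Since $F(\kappa_j)\asymp j/N$, the resulting bound on $F'(\kappa_j)|\lambda_j-\kappa_j|$ is of order $(N/j)(\log N/N)^2$, which is too weak when $j$ is small. So I will instead bound $\max_{j\le(\log N)^2}F'(\kappa_j)|\lambda_j-\kappa_j|$ directly, using the hard-edge Bessel-kernel asymptotics of the Riemann–Hilbert analysis. Smallest eigenvalues satisfy $\lambda_j+1=O_P(j^2/N^2)$, with tail bounds for the counting function at scale $N^{-2}$. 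These give $F'(\kappa_j)|\lambda_j-\kappa_j| = O(\sqrt{\log\log N}/N)$ with high probability, and that is $o(\log N/N)$. The soft edge $1$ is handled the same way with Airy-type scaling, where $F'$ vanishes and the estimate is easier.

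Combining the bulk lower bound with the upper bounds in all regions, and letting $\delta\to0$ slowly, yields \eqref{eq:main-theorem}.
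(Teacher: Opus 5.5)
Your lower-bound and bulk arguments are fine. The substitution $u=\sqrt{1+\lambda}$ is a genuine simplification: $F$ is a smooth diffeomorphism in $u$ near $u=0$, so the two-sided bound $|F(\lambda_j)-F(\kappa_j)|\le\frac{(1+\varepsilon)\log N}{\pi N}$ already gives $F'(\kappa_j)|\lambda_j-\kappa_j|\le\bigl(1+O(\log N/j)\bigr)\frac{(1+\varepsilon)\log N}{\pi N}$. This disposes of all $j\ge(\log N)^2$ on the hard-edge side without the iteration behind Proposition~\ref{Main theorem: Eigenvalue Rigidity on the Bulk: left}. (Your caution about constants is justified: Theorem~\ref{maximum of hN} and the propositions of Section~\ref{Eigenvalue Rigidity} produce $\frac{\log N}{\pi N}$, not the $\frac{\log N}{N}$ in the display.)

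The gap is the edge window $j\le(\log N)^2$. You correctly single it out as the main obstacle but do not actually handle it. Bessel-kernel convergence only controls finitely many rescaled eigenvalues. A bound $\lambda_j+1\le Cj^2/N^2$, even if it held simultaneously for all these $j$, only gives $F'(\kappa_j)|\lambda_j-\kappa_j|\lesssim Cj/N$, which is useless once $j\gtrsim\log N$. The sharp constant requires locating $\lambda_j$ relative to $\kappa_j$ at scale $j\log N/N^2$, simultaneously for polylogarithmically many $j$. In terms of the counting function, you need that for fixed $C>0$, with probability tending to one,
\begin{equation*}
\min_{x\le\kappa_{2(\log N)^2}}h_N(x)>-C\log N\qquad\text{and}\qquad\min_{x\le\kappa_{2\log N}}h_N(x)>-C\sqrt{\log N}.
\end{equation*}
That is, near $-1$ the field must be $o(\log N)$, and even $o(\sqrt{\log N})$ closer in, not merely $O(\log N)$. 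The phrase ``tail bounds for the counting function at scale $N^{-2}$'' names this estimate without a mechanism that survives a union bound over a growing mesoscopic window. The rate $O(\sqrt{\log\log N}/N)$ is simply asserted.

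The paper's mechanism is the exponential-moment bound $\mathbb{E}e^{\pm\gamma h_N(x)}\le R_\gamma(NF(x))^{\gamma^2/2}$ of \eqref{hN moment bulk}. It is uniform down to $x=-1+N^{-2}\log\log N$ because the Case~(III) Riemann--Hilbert analysis behind Proposition~\ref{Asymptotics in the edge} is uniform there. Below that point it is extended by monotonicity of the counting function, as in \eqref{hN moment edge -1}. Markov's inequality and a union bound over the grid $\{\kappa_j\}$ then give Lemmas~\ref{hN near edge first lemma} and~\ref{hN near edge second lemma}, which feed Proposition~\ref{main proposition for edge rigidity}. A Bernstein bound for the determinantal counting variable, with uniform Bessel asymptotics for its mean and variance, might be an alternative, but you would have to carry it out.

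Your claim that the soft edge is easier is also wrong. There $F$ is concave with $1-F(x)\asymp(1-x)^{3/2}$. Write $r=\bigl(\frac{1-\lambda}{1-\kappa}\bigr)^{3/2}$. To leading order, $F'(\kappa)(\lambda-\kappa)=\frac32(1-r^{2/3})(1-F(\kappa))$, while $F(\lambda)-F(\kappa)=(1-r)(1-F(\kappa))$. So the configuration $N-j\approx\frac{(1+\varepsilon)\log N}{\pi}$, $\lambda_j\approx1$ is compatible with $\max|h_N|\le\sqrt2(1+\varepsilon)\log N$. Yet it gives $F'(\kappa_j)|\lambda_j-\kappa_j|\approx\frac32\cdot\frac{(1+\varepsilon)\log N}{\pi N}$. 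You therefore also need $h_N=o(\log N)$ near $1$, which is the paper's Lemma~\ref{hN near soft edge first lemma}, obtained from the moment bounds \eqref{hN moment bulk right} and \eqref{hN moment edge 1}. You additionally need some control of $(\lambda_N-1)_+$.
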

For the same reason, we can see that the estimation of the maximum fluctuations of eigenvalues is optimal. And as we have mentioned above, this theorem is not a direct corollary of Theorem \ref{maximum of hN}, since there $-1$ is a singularity of $F'(x)$, which makes $F'(x)$ increase very quickly as $\kappa_j$ tends to $-1$, or for $j$ very small. We will elaborate on our method of addressing this issue in Section \ref{subsubsec: refinement}.

One can also see that the order of rigidity is $O(\log N/ N)$ in the case of LUE, which is the same as CUE in \cite{ABB2017}, GUE in \cite{CFL2021}, and JUE in \cite{Dai-Lu-JUE}. More research showed that such a bound holds for other ensembles, such as Wigner matrices and Gaussian $\beta$ ensemble (G$\beta$E); for instance, see \cite{BLZ-GAFA2025}. Thus, we believe that this result is universal in random matrices and holds uniformly for a wider class of ensembles.

\subsection{Main Methodology}

To prove our main theorem, i.e., Theorem \ref{main theorem}, we mainly follow the general framework given by \cite{CFL2021}. The main idea is to first adopt the central limit theorem for LUE, which implies that $h_N(x)$ defined in \eqref{definition of hN} converges to a log-correlated field. Then we prove that we can use $h_N$ to construct a Gaussian multiplicative chaos measure (GMC). To accomplish this, we need to verify that $h_N(x)$ satisfies the sufficient conditions given in \cite[Assumption 2.5]{CFL2021}. In this part, we need to make some detailed estimations on the asymptotics of the exponential moments of $h_N$, for which we used Heine's identity to transfer it into the estimation of the related Hankel determinants and then adopted Riemann-Hilbert (RH) analysis to perform the asymptotic analysis. Combining with the results in \cite[Sect. 4]{CG2021}, it implies Theorem \ref{maximum of hN}. In the last stage, we use the method of iteration to ``refine'' the estimation of $h_N(x)$ near $\pm 1$, which concludes the final step of the proof of Theorem \ref{main theorem}. It is worth mentioning that our method of refinement can be adopted in both soft edge $1$ and hard edge
$-1$, and is expected to extend to other ensembles with more general eigenvalue limiting distributions.
\subsubsection{Log-Correlated Fields and Gaussian Multiplicative Chaos} \label{subsubsec: GMC}

We start with a central limit theorem for LUE, which is illustrated in \cite[Corollary 2.2]{CG2021}: let $f: [-1, \infty) \to \mathbb{R}$ be an analytic function in a neighbourhood of $[-1, 1]$, locally H\"{o}lder-continuous, then we have
\begin{equation} \label{eq: LUE-Global-CLT}
-\frac{1}{\sqrt{2} \, \pi}\int_{-1}^{1} f'(x) h_N(x) dx  =
\sum_{j=1}^N f(\lambda_j) - N\int_{-1}^1 f(x) d\mu_L(x) \xrightarrow{\ d \ } \mathcal{N}(\mu(f), \sigma^2(f)),
\end{equation}
where 
\begin{eqnarray} \label{eq: LUE-Global-CLTmean}
\mu(f) &=& \frac{\alpha}{2\pi} \int_{-1}^{1}\frac{f(x)}{\sqrt{1-x^2}}dx - \frac{\alpha}{2}f(-1), 
\end{eqnarray}
and $\sigma^2(f) = \sigma^2(f; f)$ with
\begin{eqnarray}
\sigma^2(f; g) &=& \int\int_{ [-1, 1]^2} f'(x)g'(y)\frac{\Sigma(x, y)}{2\pi^2}dxdy, \\
\Sigma(x, y) &=& \log\left|\frac{1-xy+\sqrt{1-x^2}\sqrt{1-y^2}}{x-y}\right|. \label{correlation kernel of X}
\end{eqnarray}
Hence, $h_N(x)$ converges in distribution to a log-correlated Gaussian field, denoted as $X(x)$, with correlation kernel \eqref{correlation kernel of X}. 

Note that the variance of $X(x)$ is infinite; hence, for each $x$, $X(x)$ should be viewed as a generalized random variable. Such kinds of random fields occur in many disciplines. For a formal definition and more properties, see \cite{DRSV2017}. Starting from the pioneering work by \cite{Johansson1998}, where the log-correlated covariance structure for the Gaussian $\beta$ ensemble was established, the log-correlated fields arising from the central limit theorems for the eigenvalues of random matrices have been discovered in many other models. For instance, in \cite{DP2012} Dumitriu and Paquette studied the generalized Jacobi $\beta$ ensembles. And in \cite{BB-PAFA-2021, CG2021}, a summary of the central limit theorems for all three classic ensembles (GUE, JUE, and LUE) was given.

The study of the GMC measure can be traced back to \cite{Kahane1985} in 1985, where Kahane aimed to study the exponential of a random field with logarithmic singularities. This problem is nontrivial since the exponential of a generalized function is not guaranteed to exist. He proved that for $0 < \gamma < \sqrt{2}$, the measure
\begin{equation}
d\mu^\gamma = \frac{e^{\gamma X(x)}}{\mathbb{E}e^{\gamma X(x)}}dx
\end{equation}
exists and is nontrivial.

After this pioneering work, the GMC measure has been found to have many nice properties and is of great significance in many disciplines, even beyond mathematics; for instance, in Liouville quantum gravity theory \cite{Rhodes2016}. See also \cite{Rho:Var2014} for a comprehensive review on this topic.

Recently, the idea of GMC has been introduced into the field of random matrix theory to obtain many significant results, especially those related to the estimation of both upper and lower bounds arising from random matrices. For instance, in \cite{CFL2021}, Claeys, Fahs, Lambert, and Webb built a general framework on how GMC measures obtained from asymptotic Gaussian processes are related to the maximum of the corresponding log-correlated fields; hence, it is proved that for GUE, one has
\begin{equation}
\lim_{N\to\infty}\mathbb{P}\left((1-\varepsilon)\frac{\log N}{N} < \max_{j=1, \cdots, N}\left\{F'(\kappa_j)|\lambda_j-\kappa_j|\right\} < (1+\varepsilon)\frac{\log N}{N}\right) = 1,
\end{equation}
with $\kappa_j$ being the $j$-th percentile of the semicircle law and $\lambda_1, \cdots, \lambda_N$ distributed with respect to \eqref{eq: GUE density}. Here, the prime denotes differentiation with respect to 
$x$. The similar idea has been adopted in \cite{Dai-Lu-JUE} to prove the eigenvalue rigidity of the Jacobi unitary ensemble.

With the powerful tool of GMC, many optimal estimations of eigenvalue fluctuations and characteristic polynomials in random matrices have been obtained, and it is worth mentioning that such results are not limited to the unitary ensembles. For instance, see \cite{Bre:Web:Wong2018, BLZ-GAFA2025, KW2022, Kivimae2024, Lam:Ostr:Simm2018, Lambert2020, NSW-CUE-2020, Webb-CUE-2015} and so on.

In our paper, we follow the ideas in \cite{CFL2021} by constructing a random measure 
\begin{equation} \label{eq:dmu-N}
d\mu_N^\gamma(x) = \frac{e^{\gamma h_N(x)}}{\mathbb{E} [e^{\gamma h_N(x)} ] }, \qquad x \in [-1,1],
\end{equation}
and prove its convergence to a GMC measure as $N \to \infty$. Then we can adopt their results in \cite[Section 3]{CG2021} to obtain Theorem \ref{maximum of hN}, the estimation of $h_N(x)$. As we have mentioned before, it requires us to study the asymptotics of the exponential moments of $h_N(x)$ in detail.

\subsubsection{Asymptotics of the Hankel Determinants}

We are now at the stage of discussing the relationship between the exponential moments of $h_N(x)$ and the Hankel determinants by considering a very simple example, i.e., $\mathbb{E} e^{\gamma h_N(x)}$. From the definition of $h_N(x)$ in \eqref{definition of hN}, one has
\begin{equation}
\mathbb{E} [e^{\gamma h_N(x)}] = \mathbb{E} \big[e^{\sqrt{2}\pi\gamma \sum_{j=1}^N 1_{\lambda_j\leq x} - \sqrt{2}\pi\gamma N F(x)}\big] 
= e^{-\sqrt{2}\pi \gamma N F(x)} \mathbb{E} \big[e^{\sqrt{2}\pi\gamma \sum_{j=1}^N 1_{\lambda_j\leq x}}\big],
\end{equation}
where $F(x)$ is the distribution function defined in \eqref{distribution of eq measure}.

Recalling Heine's identity (see, for example, \cite[Prop 3.8]{Deift2000}), we have
\begin{equation}\label{Heine identity}
 \mathbb{E}e^{\sum_{j=1}^N w(\lambda_j)} = \frac{N!}{Z_N} D_N(e^{w}) =\frac{N!}{Z_N} \det \left(\int \lambda^{i+j} (1+\lambda)^\alpha e^{-NV(\lambda)} e^{w}d\lambda\right)_{i,j=0}^{N-1},
\end{equation}
with $\lambda_1, \cdots, \lambda_N$ distributed with respect to \eqref{density of LUE}. Here $D_N(f)$ is defined as
\begin{equation}
D_N(f) = \det \left( \int_{[-1, \infty)}\lambda^{i+j}f(\lambda)(1+\lambda)^\alpha e^{-NV(\lambda)} d\lambda\right)_{i, j=0}^{N-1}.
\end{equation}
Hence, it motivates us to define the Hankel determinant
\begin{equation}
D_N(x; \gamma; w) =\det\left(\int_{[-1, \infty)} \lambda^{i+j} e^{\sqrt{2}\pi\gamma 1_{\{\lambda\leq x\}}} (1+\lambda)^\alpha e^{-NV(\lambda)} e^{w(x)} d\lambda\right)_{i,j=0}^{N-1},\label{eq2}
\end{equation}
and $\mathbb{E} e^{\gamma h_N(x)}$ can be written as 
\begin{equation}
\mathbb{E} [e^{\gamma h_N(x)}] = e^{-\sqrt{2}\pi \gamma N F(x)} \frac{N!}{Z_N} D_N (x; \gamma; 0) = e^{-\sqrt{2}\pi \gamma N F(x)} \frac{D_N (x; \gamma; 0)}{D_N (0; 0; 0)}.  \label{EV-Heine Identity1}
\end{equation}
Note that the exponential moments of $h_N(x)$ introduce a Fisher-Hartwig (FH) singularity in the Hankel determinant. For our purposes, we will also need the asymptotics of the Hankel determinants with two FH singularities; hence, we define
\begin{align}
D_N(x_1, x_2; \gamma_1, \gamma_2; w) = \det\left(\int_{[-1, \infty)} \lambda^{i+j} e^{\sqrt{2}\pi\gamma_1 1_{\{\lambda\leq x_1\}} + \sqrt{2}\pi\gamma_2 1_{\{\lambda\leq x_2\}}} (1+\lambda)^\alpha e^{-NV(\lambda)}  e^{w(x)} d\lambda \right)_{i,j=0}^{N-1}, \label{eq1}
\end{align}
where $-1 < x_1 \leq x_2 < 1$, $\gamma_1, \gamma_2 \in \mathbb{R}$, and $w(\lambda)$ are real-analytic in $\mathbb{R}$.

One can see that if we consider the exponential moments of an analytic function $f(x)$ rather than $h_N(x)$, it will only contribute to the $w$-term in the Hankel determinants rather than the FH singularities.

In the next section, it will be shown that, with the aid of the differential identities \eqref{di1} and \eqref{di2}, the estimation of the Hankel determinants can be reduced to that of the corresponding Riemann-Hilbert (RH)  problem.
In \cite{DeiftZhou}, Deift and Zhou introduced the powerful nonlinear steepest descent method for the analysis of RH problems. The method employs a sequence of transformations to obtain the ``small-norm'' RH problem (for a standard textbook in this field, see \cite{Deift2000}), from which the asymptotics of the Hankel determinants are then derived.

\subsubsection{The Method of Refinement} \label{subsubsec: refinement}

By Taylor expansion and the definition of $h_N$ in \eqref{definition of hN}, one has
\begin{equation}\label{Taylor expansion of hN}
h_N(\lambda_j) = \sqrt{2}\pi N F'(\kappa_j)(\kappa_j - \lambda_j) + \frac{\sqrt{2}\pi N}{2} F''(\zeta_j)(\kappa_j - \lambda_j)^2,  
\end{equation}
where $\zeta_j$ is some value between $\kappa_j$ and $\lambda_j$. Then one can easily see that for $j\asymp N$ Theorem \ref{maximum of hN} implies the same estimation of $F'(\kappa_j)|\lambda_j - \kappa_j|$ as in Theorem \ref{main theorem}. But when $j$ or $1-j$ is of order $o(N)$, Theorem \ref{maximum of hN} cannot provide our final results directly.

The idea of refinement comes from the following observation: as $N$ becomes sufficiently large, the jumps in $h_N(x)$ are of order $1/N$; hence, $h_N(x)$ will look ``more like'' a continuous function. In addition, we have $h_N(-1) = 0$ and $\lim_{N\to\infty} \mathbb{P}(h_N(1) = 0) = 1$, which implies that the order of $h_N(x)$ is not likely to be larger than $O(\log N)$ as it goes to the edge regime. Therefore, we use the method of iteration, combined with the argument of contradiction, to refine the upper bound of $h_N(x)$ near $\pm 1$. To be specific, in Lemmas \ref{hN near edge first lemma}, \ref{hN near edge second lemma}, and \ref{hN near soft edge first lemma}, we can see $h_N(x)$ is of order $o(\log N)$ as $x$ approaches $-1$ or $1$. Then we use this new estimation to ``push'' the estimation of $F'(\kappa_j)|\lambda_j - \kappa_j|$ from the bulk regime to the edge regime and obtain our final result.

\subsection{Organization of the Paper}

The rest of the paper is arranged as follows. The RH problem and the differential identities for the Hankel determinants are introduced in Section 2. In Section 3, using the Deift–Zhou nonlinear steepest descent method, the original RH problem is transformed into small-norm ones. Based on the weight of the Hankel determinants, three distinct cases are considered: (i) with two jumps at $x_1$, $x_2$, separated from each other and from $\pm 1$; (ii) with two jumps at $x_1$, $x_2$, close together but away from $\pm 1$; (iii) with a single jump at $x$, near $-1$ or $1$. 
In Section 4, the asymptotics of the Hankel determinants are derived via steepest descent analysis for each of these three cases.
Section 5 examines the maximum of $h_N(x)$, provides eigenvalue rigidity and verifies the main theorem.

\section{The Riemann-Hilbert Problems for the Hankel Determinants}

We study the following RH problem for $Y$.
\begin{rhp}\label{rhp for Y}
 \hfill
 \begin{itemize}
 \item[(a)] $Y: \mathbb{C}\setminus [-1, \infty) \to \mathbb{C}^{2\times 2}$ is analytic.

 \item[(b)] $Y_+(x) = Y_-(x)\left(\begin{array}{cc} 1 & t(x)\\ 0 & 1 \end{array}\right)$ for $x\in [-1, \infty)\setminus\{x_1, x_2\}$, where
  \begin{equation}\label{RHP-Y-weight}
  t(x) = e^{\sqrt{2}\pi\gamma_1 1_{(-1, x_1]}(x)+\sqrt{2}\pi\gamma_2 1_{(-1, x_2]}} w_L(x),
  \end{equation}
  where $x\in [-1, \infty)$,
    \begin{equation}\label{weight}
    w_L(x) = e^{-NV(x)}(x+1)^{\alpha},
    \end{equation}
    and $\alpha> -1$.

 \item[(c)] As $z\to\infty$, $Y(z)$ has the following asymptotics
  \[
  Y(z) = \left( I+O\left(\frac{1}{z}\right)\right) \left(\begin{array}{cc} z^N & 0\\ 0 & z^{-N} \end{array}\right)
  \]

 \item[(d)] As $z\to x_j, j=1, 2$, we have
  \[
  Y(z) = O(\log|z-x_j|).
  \]

  \item[(e)] As $z\to -1$, we have
  \begin{equation}
  Y(z) = \left\{
  \begin{array}{ll}
  O\left(\begin{array}{cc} 1 & |z+1|^\alpha \\ 1 & |z+1|^\alpha \end{array}\right) & \text{if } \alpha<0;\\
  O\left(\begin{array}{cc} 1 & \log|z+1| \\ 1 & \log|z+1| \end{array}\right) & \text{if } \alpha=0,\\
  O\left(\begin{array}{cc} 1 & 1 \\ 1 & 1 \end{array}\right) & \text{if } \alpha>0.
  \end{array}
  \right.
  \end{equation}
 \end{itemize}
\end{rhp}

The solution to this RH problem is (e.g. see \cite{ABJ2004})
\begin{eqnarray} \label{Y}
Y(z) = Y_N(z; x_1, x_2; \gamma_1, \gamma_2;  w_L)=
\left(
\begin{array}{cc}
\frac{1}{ \varkappa_N}p_N(z) & \frac{1}{\varkappa_N} C_t p_N(z) \\
-2\pi i \varkappa_{N-1} p_{N-1}(z) & -2\pi i \varkappa_{N-1} C_t p_{N-1}(z)\\
\end{array}
\right),
\end{eqnarray}
where $p_n(x) =  \varkappa_n x^n + \cdots$ is the orthonormal polynomial with respect to the weight function  $t(x)$. This is analytic in $\mathbb{C}\setminus [-1, \infty)$ and is the unique solution to the Riemann-Hilbert problem for $Y$. Here $C_t$ is the Cauchy transform for $z\in \mathbb{C}\setminus [-1, \infty)$ given by
\begin{eqnarray}
C_t g(z) = \frac{1}{2\pi i}\int_{-1}^\infty g(x)t(x) \frac{dx}{x-z}.
\end{eqnarray}

We will then give two differential identities that will be used for the asymptotics of the Hankel determinants. It is similar to \cite[Prop 5.1]{CFL2021}. We state it as the following proposition.

\begin{proposition}\label{differential identities of Hankel determinants}
Given $\gamma_1, \gamma_2 \in \mathbb{R}$ and $w\equiv 0$, the following result holds for $-1 < x_1 < x_2 < 1$.
\begin{equation}\label{di1}
\frac{d}{dy} \log D_N(x_1, x_2 = x_1+y; \gamma_1, \gamma_2; 0) = - (1+x_2)^\alpha e^{-NV(x_2)} \frac{1-e^{\sqrt{2}\pi\gamma_2}}{2\pi i} \left(Y^{-1}(x_2) Y'(x_2)\right)_{2 1},
\end{equation}
where by $\left(Y^{-1}(x_2) Y'(x_2)\right)_{2 1}$ we mean the limit taken as $z\to x_2$ from both $\mathbb{C}^+$ and $\mathbb{C}^-$. If $x_1 = x_2 = x \in (-1, 1)$ and let $\gamma = \gamma_1+\gamma_2$, we have
\begin{equation}\label{di2}
\frac{d}{dx} \log D_N(x; \gamma; 0) = -(1+x)^\alpha e^{-NV(x)} \frac{1 - e^{\sqrt{2}\pi\gamma}}{2\pi i} \left(Y^{-1}(x) Y'(x)\right)_{2, 1}.
\end{equation}
\end{proposition}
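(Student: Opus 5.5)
We will derive both identities from the standard representation of a Hankel determinant through orthogonal polynomials, combined with the Christoffel--Darboux formula written in terms of $Y$. First, $D_N = \prod_{k=0}^{N-1}\varkappa_k^{-2}$, where $p_k = \varkappa_k x^k+\cdots$ are orthonormal with respect to $t(x)$. The weight depends on the parameter $s$ ($s=y$ in \eqref{di1}, $s=x$ in \eqref{di2}) only through the indicator $1_{(-1,x_2]}$, respectively $1_{(-1,x]}$. Differentiating the determinant via Jacobi's formula gives
\[
\frac{d}{ds}\log D_N=\sum_{k=0}^{N-1}\int p_k^2(\lambda)\,\partial_s t(\lambda)\,d\lambda .
\]
To justify this, we write $\log D_N = \log\det(M)$ with $M_{ij}$ the moments. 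Then $\partial_s\log D_N=\mathrm{tr}(M^{-1}\partial_s M)$, and $M^{-1}$ is expressed through the coefficients of the $p_k$, which yields exactly the reproducing-kernel sum above.

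Next we compute $\partial_s t$ in the sense of distributions. For \eqref{di1}, moving $x_2=x_1+y$ changes the weight by a jump at $x_2$:
\[
\partial_y t(\lambda)=\bigl(e^{\sqrt{2}\pi\gamma_2}-1\bigr)\,e^{\sqrt{2}\pi\gamma_1 1_{(-1,x_1]}(x_2)}\,w_L(x_2)\,\delta(\lambda-x_2).
\]
Since $x_1<x_2$, the factor $1_{(-1,x_1]}(x_2)$ vanishes. Hence
\[
\frac{d}{dy}\log D_N=\bigl(e^{\sqrt{2}\pi\gamma_2}-1\bigr)(1+x_2)^\alpha e^{-NV(x_2)}\,K_N(x_2,x_2),
\]
where $K_N(u,v)=\sum_{k<N}p_k(u)p_k(v)$. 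The case \eqref{di2} is identical, with $\gamma=\gamma_1+\gamma_2$ and a single jump at $x$.

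It remains to express $K_N(x,x)$ via $Y$. The Christoffel--Darboux formula gives
\[
K_N(u,v)=\frac{\varkappa_{N-1}}{\varkappa_N}\,\frac{p_N(u)p_{N-1}(v)-p_{N-1}(u)p_N(v)}{u-v}.
\]
By \eqref{Y}, this equals $\frac{1}{-2\pi i}\,\frac{Y_{11}(u)Y_{21}(v)-Y_{21}(u)Y_{11}(v)}{u-v}$. Letting $v\to u$ gives
\[
K_N(u,u)=\frac{1}{2\pi i}\bigl(Y_{11}Y_{21}'-Y_{21}Y_{11}'\bigr)(u).
\]
Since $\det Y\equiv1$, we have $(Y^{-1}Y')_{21}=-Y_{21}Y_{11}'+Y_{11}Y_{21}'$, so $K_N(u,u)=\frac{1}{2\pi i}(Y^{-1}Y')_{21}(u)$. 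Substituting this gives exactly \eqref{di1} and \eqref{di2}, with the sign from $(e^{\sqrt2\pi\gamma_2}-1)=-(1-e^{\sqrt2\pi\gamma_2})$.

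The first column of $Y$ is polynomial, so $(Y^{-1}Y')_{21}$ involves only entire functions. Its boundary values from $\mathbb{C}^\pm$ therefore coincide, which explains the phrase about limits from both half-planes. Strictly, $Y^{-1}Y'$ has second-column entries with logarithmic singularities at $x_2$, but the $21$ entry does not involve them.

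The main technical point is justifying the differentiation of the moments $\int\lambda^{i+j}t\,d\lambda$ in $s$. This is elementary: the moment equals $\int_{-1}^{x_2}(\cdots)+\int_{x_2}^\infty(\cdots)$ with integrable weight, so the fundamental theorem of calculus applies at $x_2\in(-1,1)$, where $w_L$ is continuous. Growth of $V$ guarantees the finiteness of all moments.
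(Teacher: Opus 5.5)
Your argument is correct and is essentially the standard derivation behind the references the paper cites for this proposition; the paper gives no proof of its own beyond pointing to \cite[Sec.~3]{Charlier-IMRN2019}, \cite{CG2021} and \cite{CFL2021}. There one obtains $\partial_s\log D_N=\frac{1}{2\pi i}\int (Y^{-1}Y')_{21}(\lambda)\,\partial_s t(\lambda)\,d\lambda$ from $D_N=\prod_k\varkappa_k^{-2}$ and the Christoffel--Darboux formula, and then specializes to the point-mass variation $\partial_s t=(e^{\sqrt{2}\pi\gamma_2}-1)w_L(x_2)\delta_{x_2}$, exactly as you do, with your signs, the confluent limit, and the observation that only the polynomial first column of $Y$ enters the $21$ entry all checking out.
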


\begin{proof}
The relevant proof is referred to \cite[Sec. 3]{Charlier-IMRN2019}; similar differential identities also appear in \cite{CG2021, CFL2021}.
\end{proof}

Next, a series of transformations is carried out on RH problem \ref{rhp for Y}, with the aim of converting it into the so-called ``small norm'' RH problem so that the asymptotics can be studied.

\section{Steepest Descent Analysis for the RH Problem} \label{Steepest Descent Analysis for the RH Problem}

\subsection{Normalization: $Y\mapsto T$}

Now, the transformation $Y\mapsto T$ is made to normalize the behavior of $Y$  at infinity. Define
\begin{equation}
\rho(x) = \psi_V(x) \sqrt{\frac{1-x}{1+x}},
\end{equation}
where $\psi_V(x)$ is given in \eqref{LUE eq measure}.
Then let
\begin{equation}
T(z) = e^{\frac{Nl}{2}\sigma_3} Y(z) e^{-Ng(z)\sigma_3} e^{-\frac{Nl}{2}\sigma_3}
\end{equation}
for $z\in \mathbb{C}\setminus [-1, \infty)$, where $\sigma_3$ is the third Pauli matrix, defined as
\begin{equation*}
\sigma_3 = \left(\begin{array}{cc} 1 & 0 \\ 0 & -1 \end{array}\right).
\end{equation*}
And
\begin{equation}
g(z) = \int_{-1}^1 \log (z-s) \rho(s) ds, \text{ for } z\in \mathbb{C} \setminus (-\infty, 1],
\end{equation}
where we choose the principal branch for the logarithm function. Hence $g(z)$ is analytic in $\mathbb{C}\setminus (-\infty, 1]$ and has the following properties.
\begin{eqnarray}
g_+(x) + g_-(x) &=& 2 \int_{-1}^1 \log |x-s|\rho(s)ds,  x\in \mathbb{R}; \\
g_+(x) - g_-(x) &=& 2\pi i,  x\in (-\infty, -1); \\
g_+(x) - g_-(x) &=& 2\pi i \int_x^1 \rho(s)ds,  x\in [-1, 1].
\end{eqnarray}

We also define
\begin{equation}
\tilde{\rho}(z) = -i \psi_V(z) \sqrt{\frac{z-1}{z+1}}
\end{equation}
for $z\in U_V \setminus [-1,1]$, where $U_V$ is a neighbourhood of $[-1, \infty)$. Then define
\begin{equation} \label{xi(z)}
\xi(z) = -\pi i \int_{1}^z \tilde{\rho}(s)ds, \quad z\in U_V \setminus (-\infty, 1),
\end{equation}
 where the path of integration lies in $U_V\setminus (-\infty, 1)$. It is direct to check that $\xi_+(x) + \xi_{-}(x) = 0$ for $x\in (-1, 1)$, and then 
\begin{equation}
2\xi_{\pm}(x) = g_{\pm}(x) - g_{\mp}(x) = 2g_{\pm}(x) - V(x) + l.
\end{equation}
By analytic continuation, one has
\begin{equation}
\xi(z) = g(z) + \frac{l}{2} - \frac{V(z)}{2}, \quad z\in U_V \setminus (-\infty, 1).
\end{equation}
Then we have the following RH problem for $T$.

\begin{rhp}
 \hfill
 \begin{itemize}
 \item[(a)] $T: \mathbb{C}\setminus [-1, \infty) \to \mathbb{C}^{2\times 2}$ is analytic.

 \item[(b)] $T$ satisfies the following jump relations
 \begin{eqnarray}
 T_+(x) = T_-(x)
 \left\{
 \begin{array}{ll}
 \left(\begin{array}{cc} e^{-2N\xi_+(x)} & e^{\sqrt{2}\pi\gamma_1+\sqrt{2}\pi\gamma_2}(x+1)^{\alpha} \\ 0 & e^{2N\xi_+(x)} \end{array} \right), & \text{ for } -1<x<x_1;\\
 \left(\begin{array}{cc} e^{-2N\xi_+(x)} & e^{\sqrt{2}\pi\gamma_2}(x+1)^{\alpha} \\ 0 & e^{2N\xi_+(x)} \end{array} \right), & \text{ for } x_1<x<x_2;  \\
 \left(\begin{array}{cc} e^{-2N\xi_+(x)} & (x+1)^{\alpha} \\ 0 & e^{2N\xi_+(x)} \end{array} \right), & \text{ for } x_2<x<1;\\
 \left(\begin{array}{cc} 1 & (x+1)^{\alpha}e^{2N\xi(x)} \\ 0 & 1 \end{array} \right), & \text{ for } x>1;
 \end{array}
 \right. \label{T-jump-2}
 \end{eqnarray}

 \item[(c)]  $T(z)$ exhibits the following behavior at infinity
  \[
  T(z) = I+O\left(\frac{1}{z}\right), \ z\to\infty
  \]

 \item[(d)] When $z\to x_j, j=1, 2$,
  \[
  T(z) = O(\log|z-x_j|).
  \]

 \item[(e)] $T(z)$ displays the following behavior at  $ -1$, we have
  \begin{equation}
  T(z) = \left\{
  \begin{array}{ll}
  O\left(\begin{array}{cc} 1 & |z+1|^\alpha \\ 1 & |z+1|^\alpha \end{array}\right) & \text{if } \alpha<0;\\
  O\left(\begin{array}{cc} 1 & \log|z+1| \\ 1 & \log|z+1| \end{array}\right) & \text{if } \alpha=0,\\
  O\left(\begin{array}{cc} 1 & 1 \\ 1 & 1 \end{array}\right) & \text{if } \alpha>0.
  \end{array}
  \right.
  \end{equation}
 \item[(f)]  As $z\to 1$, $T(z)$ is bounded.

 \end{itemize}
\end{rhp}

\subsection{Contour Deformation: $T\mapsto S$}

The strong oscillation of the diagonal entries $e^{\pm2N\xi_+(x)}$ when $N\to \infty$,  resulting from the condition $\Re \xi_{\pm}(x) = 0$ on $x\in (-1, 1)$, motivates a contour deformation and a second transformation $T\to S$ for the removal of the oscillatory terms. The procedure relies on the matrix factorization
\begin{eqnarray}
& & \left(\begin{array}{cc} e^{-2N\xi_+(x)} & e^{\sqrt{2}\pi\gamma}(x+1)^{\alpha} \\ 0 & e^{2N\xi_+(x)} \end{array} \right)\nonumber\\
&=& \left(\begin{array}{cc} 1 & 0 \\ e^{-\sqrt {2}\pi\gamma}(x+1)^{-\alpha} e^{2N\xi_{+}(x)} & 1\end{array}\right) \left(\begin{array}{cc} 0 & e^{\sqrt {2}\pi\gamma}(x+1)^{\alpha} \\ -e^{-\sqrt {2}\pi\gamma}(x+1)^{-\alpha} & 0\end{array}\right) \nonumber\\
& & \times\left(\begin{array}{cc} 1 & 0 \\ e^{-\sqrt {2}\pi\gamma}(x+1)^{-\alpha} e^{-2N\xi_+(x)} & 1 \end{array}\right) \nonumber\\
&=:& J_1(x; \gamma)J_2(x; \gamma)J_3(x; \gamma),
\end{eqnarray}
where the parameter $\gamma$ assumes the value $\gamma_1+\gamma_2$ on the interval $ x<x_1$,  $\gamma_2$ on $x_1< x<x_2$, and $0$ on $ x>x_2$.

The properties of the jump points in the original weight function $t(x)$ in \eqref{RHP-Y-weight} give rise to the following four cases in the subsequent analysis:
\begin{itemize}
\item[(I)] with two jumps at $x_1$, $x_2$, separated from each other and from $\pm 1$ (see Figure \ref{fig:contours S});

\item[(II)] with two jumps at $x_1$, $x_2$, close together but away from $\pm 1$ (see Figure  \ref{fig:contours 2});

\item[(III)] with a single jump at $x$, near $-1$ (see Figure \ref{fig:contours 3});

\item[(IV)] with a single jump at $x$, near $1$ (see Figure \ref{fig:contours 4}).
\end{itemize}

\tikzset{every picture/.style={line width=0.75pt}} 

\begin{figure}[htbp]
\centering

\tikzset{every picture/.style={line width=0.75pt}} 

\begin{tikzpicture}[x=0.75pt,y=0.75pt,yscale=-1,xscale=1]

\draw    (100,113) -- (176.58,113.25) -- (253.17,113.5) -- (329.75,113.75) -- (406.33,114) ;
\draw [shift={(143.29,113.14)}, rotate = 180.19] [fill={rgb, 255:red, 0; green, 0; blue, 0 }  ][line width=0.08]  [draw opacity=0] (8.93,-4.29) -- (0,0) -- (8.93,4.29) -- cycle    ;
\draw [shift={(219.87,113.39)}, rotate = 180.19] [fill={rgb, 255:red, 0; green, 0; blue, 0 }  ][line width=0.08]  [draw opacity=0] (8.93,-4.29) -- (0,0) -- (8.93,4.29) -- cycle    ;
\draw [shift={(296.46,113.64)}, rotate = 180.19] [fill={rgb, 255:red, 0; green, 0; blue, 0 }  ][line width=0.08]  [draw opacity=0] (8.93,-4.29) -- (0,0) -- (8.93,4.29) -- cycle    ;
\draw [shift={(373.04,113.89)}, rotate = 180.19] [fill={rgb, 255:red, 0; green, 0; blue, 0 }  ][line width=0.08]  [draw opacity=0] (8.93,-4.29) -- (0,0) -- (8.93,4.29) -- cycle    ;
\draw    (100,113) .. controls (129.33,81) and (152.33,82) .. (176.58,113.25) ;
\draw [shift={(143.91,89.69)}, rotate = 182.17] [fill={rgb, 255:red, 0; green, 0; blue, 0 }  ][line width=0.08]  [draw opacity=0] (8.93,-4.29) -- (0,0) -- (8.93,4.29) -- cycle    ;
\draw    (176.58,113.25) .. controls (205.92,81.25) and (228.92,82.25) .. (253.17,113.5) ;
\draw [shift={(220.5,89.94)}, rotate = 182.17] [fill={rgb, 255:red, 0; green, 0; blue, 0 }  ][line width=0.08]  [draw opacity=0] (8.93,-4.29) -- (0,0) -- (8.93,4.29) -- cycle    ;
\draw    (253.17,113.5) .. controls (282.5,81.5) and (305.5,82.5) .. (329.75,113.75) ;
\draw [shift={(297.08,90.19)}, rotate = 182.17] [fill={rgb, 255:red, 0; green, 0; blue, 0 }  ][line width=0.08]  [draw opacity=0] (8.93,-4.29) -- (0,0) -- (8.93,4.29) -- cycle    ;
\draw    (100,113) .. controls (126.33,142) and (153.33,142) .. (176.58,113.25) ;
\draw [shift={(143.71,134.53)}, rotate = 178.22] [fill={rgb, 255:red, 0; green, 0; blue, 0 }  ][line width=0.08]  [draw opacity=0] (8.93,-4.29) -- (0,0) -- (8.93,4.29) -- cycle    ;
\draw    (176.58,113.25) .. controls (202.92,142.25) and (229.92,142.25) .. (253.17,113.5) ;
\draw [shift={(220.29,134.78)}, rotate = 178.22] [fill={rgb, 255:red, 0; green, 0; blue, 0 }  ][line width=0.08]  [draw opacity=0] (8.93,-4.29) -- (0,0) -- (8.93,4.29) -- cycle    ;
\draw    (253.17,113.5) .. controls (279.5,142.5) and (306.5,142.5) .. (329.75,113.75) ;
\draw [shift={(296.87,135.03)}, rotate = 178.22] [fill={rgb, 255:red, 0; green, 0; blue, 0 }  ][line width=0.08]  [draw opacity=0] (8.93,-4.29) -- (0,0) -- (8.93,4.29) -- cycle    ;

\draw (81,114) node [anchor=north west][inner sep=0.75pt]   [align=left] {$\displaystyle -1$};
\draw (168,116) node [anchor=north west][inner sep=0.75pt]   [align=left] {$\displaystyle x_{1}$};
\draw (244,122) node [anchor=north west][inner sep=0.75pt]   [align=left] {$\displaystyle x_{2}$};
\draw (329.75,113.75) node [anchor=north west][inner sep=0.75pt]   [align=left] {$\displaystyle 1$};

\end{tikzpicture}
\caption{The jump contours for the RH problem for $S$}\label{fig:contours S}
\end{figure}

\tikzset{every picture/.style={line width=0.75pt}} 

\begin{figure}[htbp]
\centering

\tikzset{every picture/.style={line width=0.75pt}} 

\begin{tikzpicture}[x=0.75pt,y=0.75pt,yscale=-1,xscale=1]

\draw    (100,113) -- (206.33,114) -- (238.33,114) -- (329.75,113.75) -- (406.33,114) ;
\draw [shift={(158.17,113.55)}, rotate = 180.54] [fill={rgb, 255:red, 0; green, 0; blue, 0 }  ][line width=0.08]  [draw opacity=0] (8.93,-4.29) -- (0,0) -- (8.93,4.29) -- cycle    ;
\draw [shift={(227.33,114)}, rotate = 180] [fill={rgb, 255:red, 0; green, 0; blue, 0 }  ][line width=0.08]  [draw opacity=0] (8.93,-4.29) -- (0,0) -- (8.93,4.29) -- cycle    ;
\draw [shift={(289.04,113.86)}, rotate = 179.84] [fill={rgb, 255:red, 0; green, 0; blue, 0 }  ][line width=0.08]  [draw opacity=0] (8.93,-4.29) -- (0,0) -- (8.93,4.29) -- cycle    ;
\draw [shift={(373.04,113.89)}, rotate = 180.19] [fill={rgb, 255:red, 0; green, 0; blue, 0 }  ][line width=0.08]  [draw opacity=0] (8.93,-4.29) -- (0,0) -- (8.93,4.29) -- cycle    ;
\draw    (100,113) .. controls (129.33,81) and (182.08,82.75) .. (206.33,114) ;
\draw [shift={(158.64,89.93)}, rotate = 181.45] [fill={rgb, 255:red, 0; green, 0; blue, 0 }  ][line width=0.08]  [draw opacity=0] (8.93,-4.29) -- (0,0) -- (8.93,4.29) -- cycle    ;
\draw    (238.33,114) .. controls (267.67,82) and (305.5,82.5) .. (329.75,113.75) ;
\draw [shift={(289.33,90.28)}, rotate = 180.67] [fill={rgb, 255:red, 0; green, 0; blue, 0 }  ][line width=0.08]  [draw opacity=0] (8.93,-4.29) -- (0,0) -- (8.93,4.29) -- cycle    ;
\draw    (100,113) .. controls (126.33,142) and (183.08,142.75) .. (206.33,114) ;
\draw [shift={(158.6,135.1)}, rotate = 179.8] [fill={rgb, 255:red, 0; green, 0; blue, 0 }  ][line width=0.08]  [draw opacity=0] (8.93,-4.29) -- (0,0) -- (8.93,4.29) -- cycle    ;
\draw    (238.33,114) .. controls (264.67,143) and (306.33,146) .. (329.75,113.75) ;
\draw [shift={(289.62,136.75)}, rotate = 179.59] [fill={rgb, 255:red, 0; green, 0; blue, 0 }  ][line width=0.08]  [draw opacity=0] (8.93,-4.29) -- (0,0) -- (8.93,4.29) -- cycle    ;

\draw (81,114) node [anchor=north west][inner sep=0.75pt]   [align=left] {$\displaystyle -1$};
\draw (196,116) node [anchor=north west][inner sep=0.75pt]   [align=left] {$\displaystyle x_{1}$};
\draw (230,117) node [anchor=north west][inner sep=0.75pt]   [align=left] {$\displaystyle x_{2}$};
\draw (329.75,113.75) node [anchor=north west][inner sep=0.75pt]   [align=left] {$\displaystyle 1$};

\end{tikzpicture}

\caption{The jump contours for the RH problem for merging case}\label{fig:contours 2}
\end{figure}

\tikzset{every picture/.style={line width=0.75pt}} 

\begin{figure}[htbp]
\centering

\tikzset{every picture/.style={line width=0.75pt}} 

\begin{tikzpicture}[x=0.75pt,y=0.75pt,yscale=-1,xscale=1]

\draw    (100,109) -- (142.33,110) -- (260.22,111) -- (340.33,112) ;
\draw [shift={(126.17,109.62)}, rotate = 181.35] [fill={rgb, 255:red, 0; green, 0; blue, 0 }  ][line width=0.08]  [draw opacity=0] (8.93,-4.29) -- (0,0) -- (8.93,4.29) -- cycle    ;
\draw [shift={(206.28,110.54)}, rotate = 180.49] [fill={rgb, 255:red, 0; green, 0; blue, 0 }  ][line width=0.08]  [draw opacity=0] (8.93,-4.29) -- (0,0) -- (8.93,4.29) -- cycle    ;
\draw [shift={(305.28,111.56)}, rotate = 180.72] [fill={rgb, 255:red, 0; green, 0; blue, 0 }  ][line width=0.08]  [draw opacity=0] (8.93,-4.29) -- (0,0) -- (8.93,4.29) -- cycle    ;
\draw    (100,109) .. controls (110.33,87) and (131.33,86) .. (142.33,110) ;
\draw [shift={(126.6,93.25)}, rotate = 189.58] [fill={rgb, 255:red, 0; green, 0; blue, 0 }  ][line width=0.08]  [draw opacity=0] (8.93,-4.29) -- (0,0) -- (8.93,4.29) -- cycle    ;
\draw    (101.28,110.5) .. controls (107.33,133) and (133.33,136) .. (142.33,110) ;
\draw [shift={(126.74,127.6)}, rotate = 171.67] [fill={rgb, 255:red, 0; green, 0; blue, 0 }  ][line width=0.08]  [draw opacity=0] (8.93,-4.29) -- (0,0) -- (8.93,4.29) -- cycle    ;
\draw    (142.33,110) .. controls (174.33,77) and (221.33,77) .. (260.22,111) ;
\draw [shift={(206.15,85.78)}, rotate = 183.82] [fill={rgb, 255:red, 0; green, 0; blue, 0 }  ][line width=0.08]  [draw opacity=0] (8.93,-4.29) -- (0,0) -- (8.93,4.29) -- cycle    ;
\draw    (142.33,110) .. controls (175.33,143) and (222.33,145) .. (260.22,111) ;
\draw [shift={(206.26,135.44)}, rotate = 178.25] [fill={rgb, 255:red, 0; green, 0; blue, 0 }  ][line width=0.08]  [draw opacity=0] (8.93,-4.29) -- (0,0) -- (8.93,4.29) -- cycle    ;

\draw (79,114) node [anchor=north west][inner sep=0.75pt]   [align=left] {$\displaystyle -1$};
\draw (137,118) node [anchor=north west][inner sep=0.75pt]   [align=left] {$\displaystyle x$};
\draw (253,118) node [anchor=north west][inner sep=0.75pt]   [align=left] {$\displaystyle 1$};

\end{tikzpicture}
\caption{The jump contours for the RH problem for edge regime near $-1$}
\label{fig:contours 3}
\end{figure}

\tikzset{every picture/.style={line width=0.75pt}} 

\begin{figure}[htbp]
\centering

\tikzset{every picture/.style={line width=0.75pt}} 

\begin{tikzpicture}[x=0.75pt,y=0.75pt,yscale=-1,xscale=1]

\draw    (100,109) -- (215.33,110) -- (260.22,111) -- (340.33,112) ;
\draw [shift={(162.67,109.54)}, rotate = 180.5] [fill={rgb, 255:red, 0; green, 0; blue, 0 }  ][line width=0.08]  [draw opacity=0] (8.93,-4.29) -- (0,0) -- (8.93,4.29) -- cycle    ;
\draw [shift={(242.78,110.61)}, rotate = 181.28] [fill={rgb, 255:red, 0; green, 0; blue, 0 }  ][line width=0.08]  [draw opacity=0] (8.93,-4.29) -- (0,0) -- (8.93,4.29) -- cycle    ;
\draw [shift={(305.28,111.56)}, rotate = 180.72] [fill={rgb, 255:red, 0; green, 0; blue, 0 }  ][line width=0.08]  [draw opacity=0] (8.93,-4.29) -- (0,0) -- (8.93,4.29) -- cycle    ;
\draw    (100,109) .. controls (139.33,74) and (181.33,77) .. (215.33,110) ;
\draw [shift={(163.2,84.18)}, rotate = 181.95] [fill={rgb, 255:red, 0; green, 0; blue, 0 }  ][line width=0.08]  [draw opacity=0] (8.93,-4.29) -- (0,0) -- (8.93,4.29) -- cycle    ;
\draw    (100,109) .. controls (134.33,141) and (177.33,146) .. (215.33,110) ;
\draw [shift={(162.84,134.94)}, rotate = 179.2] [fill={rgb, 255:red, 0; green, 0; blue, 0 }  ][line width=0.08]  [draw opacity=0] (8.93,-4.29) -- (0,0) -- (8.93,4.29) -- cycle    ;
\draw    (215.33,110) .. controls (230.33,85) and (247.33,85) .. (260.22,111) ;
\draw [shift={(243.4,92.3)}, rotate = 188.49] [fill={rgb, 255:red, 0; green, 0; blue, 0 }  ][line width=0.08]  [draw opacity=0] (8.93,-4.29) -- (0,0) -- (8.93,4.29) -- cycle    ;
\draw    (215.33,110) .. controls (230.33,133) and (247.33,133) .. (260.22,111) ;
\draw [shift={(242.72,126.9)}, rotate = 176.34] [fill={rgb, 255:red, 0; green, 0; blue, 0 }  ][line width=0.08]  [draw opacity=0] (8.93,-4.29) -- (0,0) -- (8.93,4.29) -- cycle    ;

\draw (79,114) node [anchor=north west][inner sep=0.75pt]   [align=left] {$\displaystyle -1$};
\draw (210.33,113) node [anchor=north west][inner sep=0.75pt]   [align=left] {$\displaystyle x$};
\draw (253,118) node [anchor=north west][inner sep=0.75pt]   [align=left] {$\displaystyle 1$};

\end{tikzpicture}
\caption{The jump contours for the RH problem for edge regime near $1$}
\label{fig:contours 4}
\end{figure}

We will discuss the second transformation $T\mapsto S$ in Case (I) in detail. The weight $w_L(x)$ defined in \eqref{weight} can be analytically extended to $U_V \setminus (-\infty, -1]$, where $U_V$ is a neighboorhood of $[-1, \infty)$, and we still denote it by $w_L(z)$, which is given as 
\begin{equation}\label{eq: Laguerre weight}
w_L(z) = e^{-NV(z)}(z+1)^{\alpha}.
\end{equation}

Then we open the lens and define $S$ by
\begin{equation} \label{eq:T-S-map}
S(z) = \left\{\begin{array}{ll}T(z)J_3(z; \gamma)^{-1}, & z\in\Omega,\\ T(z)J_1(z;\gamma), & z\in\bar{\Omega},\\ T(z), & \text{elsewhere},
\end{array}\right.
\end{equation}
where $\Omega$ denotes the upper lens, and $\bar \Omega$ denotes the lower lens.

Now the RH problem for $S$ becomes the following.

\begin{rhp}
 \hfill
 \begin{itemize}
 \item[(a)] $S: \mathbb{C}\setminus ([-1, \infty)\cup \Gamma_1\cup\overline{\Gamma}_1\cup \Gamma_2\cup\overline{\Gamma}_2\cup\Gamma_3\cup\overline{\Gamma}_3) \to \mathbb{C}^{2\times 2}$ is analytic. Here $\Gamma_i$-s denote the contours of the upper lens and $\bar \Gamma_i$-s denote the contours of the lower lens.

 \item[(b)] $S$ satisfies the following jump relations
 \begin{equation}
 S_+(z) = S_-(z) J_S(z), \qquad z \in \Sigma_S
\end{equation} 
with
 \begin{eqnarray} \label{jumps for S}
 J_S(z) = \begin{cases}
J_1(z; \gamma), & z\in \overline{\Gamma}_j, j=1,2,3,\\
J_2(x; \gamma), & -1<x<1,\\
J_3(z; \gamma), & z\in \Gamma_j, j=1,2,3,\\
\left(\begin{array}{cc} 1 & (x+1)^{\alpha}e^{2N\xi(x)} \\ 0 & 1 \end{array} \right) & x>1.
 \end{cases}
 \end{eqnarray}

 \item[(c)] As $z\to\infty$, $S(z)$ has the following asymptotics
  \[
  S(z) = I+O\left(\frac{1}{z}\right)
  \]

 \item[(d)] As $z\to x_j, j=1, 2$, we have
  \[
  S(z) = O(\log|z-x_j|),
  \]
  where $z$ may approach $x_j$ from either inside or outside the lens.

 \item[(e)]
 For $\alpha<0$, the matrix function $S(z)$ has the following behavior as $z\to -1$:
 \begin{equation*}
 S(z) =
  O\left(\begin{array}{cc} 1 & |z+1|^\alpha \\ 1 & |z+1|^\alpha \end{array}\right), \text{ as }z\to -1, z\in \mathbb{C}\setminus\Sigma.
 \end{equation*}
 For $\alpha=0$, $S(z)$ has the following behavior as $z\to -1$:
 \begin{equation*}
 S(z) =
  O\left(\begin{array}{cc} \log|z+1| & \log|z+1| \\ \log|z+1| & \log|z+1| \end{array}\right), \text{ as }z\to -1, z\in \mathbb{C}\setminus\Sigma.
 \end{equation*}
 For $\alpha>0$, $S(z)$ has the following behavior as $z\to -1$:
 \begin{equation*}
 S(z) = \left\{\begin{array}{ll} O\left(\begin{array}{cc} 1 & 1\\ 1 & 1 \end{array}\right), & \text{as $z\to -1$ outside the lens},\\
 O\left(\begin{array}{cc} |z+1|^{-\alpha} & 1\\ |z+1|^{-\alpha} & 1 \end{array}\right), & \text{as $z\to -1$ inside the lens},
 \end{array}
 \right.
 \end{equation*}
 \item[(f)] As $z\to 1$, $S(z)$ is bounded.
 \end{itemize}
\end{rhp}

In Case (II), the two jump points $x_1, x_2$ are sufficiently close , so the lens over $[x_1,x_2]$ is left unopened (see Figure \ref{fig:contours 2}). Hence, $T(x)$ and $S(x)$ share the same jump matrix on this interval.
In Cases (III) and (IV), where the single jump point $x$ is close to the endpoints $-1$ and $1$, respectively. Accordingly, the transformation $T \mapsto S$ in \eqref{eq:T-S-map} is defined as shown in Figure \ref{fig:contours 3} and \ref{fig:contours 4}, respectively.


\subsection{The Global Parametrix}

For $z$ bounded away from $[-1, 1]$, one can see that all the jump matrices tend to the identity as $N \to \infty$. We approximate $S$ by the global parametrix $P^\infty(z)$. Note that in this case, we only have jumps on the interval $[-1, 1]$. Hence, we consider the following RH problem for $P^\infty(z)$.

\begin{rhp}\label{rhp for Pinfty}
\hfill
\begin{itemize}
\item[(a)] $P^\infty: \mathbb{C}\setminus [-1, 1] \to \mathbb{C}^{2\times 2}$ is analytic.

\item[(b)] $P^\infty$ satisfies the following jump relations
 \begin{eqnarray*}
 \begin{array}{lr}
 P^\infty_+(x) = P^\infty_-(x) J_2(x; \gamma_1+\gamma_2), & -1<x<x_1,\\
 P^\infty_+(x) = P^\infty_-(x) J_2(x; \gamma_2), & x_1<x<x_2,\\
 P^\infty_+(x) = P^\infty_-(x) J_2(x; 0), & x_2 < x <1.
 \end{array}
 \end{eqnarray*}

 \item[(c)] As $z\to\infty$, we have
 \begin{equation}
 P^{\infty}(z) = I + O(z^{-1}).
 \end{equation}

 \item[(d)] As $z\to x_j$ for $j=1, 2$, we have
 \begin{equation}
 P^{\infty}(z) = O(\log |z-x_j|).
 \end{equation}

 \item[(e)] As $z\to -1$, we have
 \begin{equation}
 P^{\infty}(z) = O\left(
                    \begin{array}{cc}
                      |z + 1|^{-1/4} & |z + 1|^{-1/4} \\
                      |z + 1|^{-1/4} & |z + 1|^{-1/4} \\
                    \end{array}
                  \right) (z+1)^{-\frac{\alpha}{2}\sigma_3}.
 \end{equation}

 \item[(f)] As $z\to 1$, we have 
 \begin{equation}
 P^{\infty}(z) = O((z-1)^{-1/4})
 \end{equation}

\end{itemize}
\end{rhp}

 Note that here we impose slightly different conditions as $z\to -1$, in order for $P^{\infty}(z)$ to be solvable. Our case is a special case considered in \cite[Sec. 5.4]{CG2021}; hence, the solution is explicitly given by
\begin{equation}\label{Pinfty}
P^\infty (z) = D_{\infty}^{\sigma_3} Q(z) D(z)^{-\sigma_3},
\end{equation}
where
\begin{eqnarray}\label{Q}
Q(z) = \left(
         \begin{array}{cc}
           \frac{1}{2}(a(z) + a(z)^{-1}) & -\frac{1}{2i}(a(z) - a(z)^{-1}) \\
           \frac{1}{2i}(a(z) - a(z)^{-1}) & \frac{1}{2}(a(z) + a(z)^{-1}) \\
         \end{array}
       \right)
\end{eqnarray}
and
\begin{equation}
a(z) = \left(\frac{z+1}{z-1}\right)^{1/4}.
\end{equation}
And $D(z)$ takes the form of
\begin{equation}\label{D}
D(z) = D_{\alpha}(z) D_w(z) D_{\gamma}(z),
\end{equation}
where
\begin{eqnarray}
D_{\alpha}(z) &=& (z+\sqrt{z^2-1})^{-\alpha / 2} (z+1)^{\alpha / 2}; \\
D_{t}(z) &=&  \exp\left(\frac{(z^2-1)^{1/2}}{2\pi} \int_{-1}^1 \frac{t(x)}{\sqrt{1-x^2}} \frac{dx}{z-x}\right),\\
D_\gamma (z) &=& \exp\left(\frac{(z^2-1)^{1/2}}{\sqrt{2}}\sum_{j=1}^2 \gamma_j \int_{-1}^{x_j} \frac{1}{\sqrt{1-x^2}} \frac{dx}{z-x}\right), \label{Dgamma}
\end{eqnarray}
and $\displaystyle D_{\infty} = \lim_{z\to\infty} D(z)$ is a constant. This is the global parametrix of the RH problem for $S(z)$.

The relevant steepest descent analysis for case (I) has been presented in \cite{CG2021}. Building upon this, we provide the asymptotic expression for the Hankel determinant $D_N(x_1, x_2; \gamma_1, \gamma_2; w)$ in Proposition \ref{Asymptotics in the Separated Regime}.


We now proceed to a detailed analysis of cases (II), (III), and (IV) . In case (II), as previously mentioned, the lens over $[x_1,x_2]$ is left unopened; we retain the global parametrix from \eqref{Pinfty} construct a local parametrix near $x_1$ associated with Painlev\'{e} V, defined in a neighbourhood containing $x_2$. For cases (III) and (IV), with only one jump point $x$, we set $x_1=x_2=x$ and $\gamma_1+\gamma_2 = \gamma$, modifying the global parametrix in \eqref{Pinfty} accordingly, and construct the local parametrix near $x$ using confluent hypergeometric functions.


\subsection{Local Parametrix in the Merging Case}

This section is devoted to the construction of the local parametrix for the merging case (II).
Fix $\delta>0$ and let $B(z_0, \delta):= \{ z \,| \, |z-z_0| < \delta\}$ denote the neighbourhood of $z_0$. First, We consider the RH problem in the neighbourhoods of the endpoints $\pm 1$. For $z$ near $-1$, the local parametrix satisfies the following RH problem:

\begin{rhp} \label{sec3.4-RHP1}
\hfill
\begin{itemize}
\item[(a)] $P_{-1}: B(-1, \delta) \setminus ([-1, \infty) \cup \Gamma_1 \cup \overline{\Gamma}_1) \to \mathbb{C}^{2\times 2}$ is analytic.
\item[(b)] $P_{-1}$ satisfies the same jump relations as $S$ on $B(-1, \delta)\cap ([-1, \infty) \cup \Gamma_1 \cup \overline{\Gamma}_1 \cup \Gamma_2 \cup \overline{\Gamma}_2)$.
\item[(c)] For $z\in \partial B(-1, \delta)$, we have the matching condition
\begin{equation} \label{sec3.3-matching1}
 P_{-1}(z)P^\infty(z)^{-1} = I+O(1/N) \quad \text{ as } N\to\infty.
\end{equation}

\item[(d)] As $z\to -1$, $P_{-1}(z)$ has the same asymptotic behaviours as $S(z)$.
\end{itemize}
\end{rhp}


A solution to the above RH problem is given by the Bessel parametrix; see \cite[Sec. 6]{ABJ2004} for details. The RH problem near $1$ is as follows.

\begin{rhp} \label{sec3.4-RHP-Airy}
\hfill
\begin{itemize}
\item[(a)] $P_{1}: B(1, \delta) \setminus ([-1, \infty) \cup \Gamma_2 \cup \overline{\Gamma}_2)\to \mathbb{C}^{2\times 2}$ is analytic.
\item[(b)] $P_{1}$ satisfies the same jump relations as $S$ on $B(1, \delta)\cap ([-1, \infty) \cup   \Gamma_2 \cup \overline{\Gamma}_2)$.
\item[(c)] For $z\in \partial B(1, \delta)$, we have the matching condition
\begin{equation} \label{sec3.3-matching1-Airy}
P_{1}(z)P^\infty(z)^{-1} = I+O(1/N) \quad \text{ as } N\to\infty.
\end{equation}
\item[(d)] As $z\to 1$, $P_{1}(z)$ is bounded.
\end{itemize}
\end{rhp}

The Airy parametrix provides a solution to the above RH problems; see the explicit construction in \cite{DeiftZhou}.

We will start our discussion on the local parametrix near $x_1$. Let us consider a neighbourhood of $B(x_1, \delta)$ with $\delta > 0$ , which encloses the other jump point $x_2$. We look for a function $P_{x_1}(z)$ satisfying the following RH problem in $B(x_1, \delta)$. 

\begin{rhp} \label{rhp:local-merge}
\hfill
\begin{itemize}
\item[(a)] $P_{x_1}: B(x_1, \delta)\setminus ([-1, 1] \cup \Gamma_1 \cup \overline{\Gamma}_1 \cup \Gamma_2 \cup \overline{\Gamma}_2) \to \mathbb{C}^{2\times 2}$ is analytic.
\item[(b)] $P_{x_1}$ satisfies the same jump condition as  $S$ in $B(x_1, \delta) \cap ([-1, 1]\cup \Gamma_1 \cup \overline{\Gamma}_1 \cup \Gamma_2 \cup \overline{\Gamma}_2)$.
\item[(c)] For $z\in\partial B(x_1, \delta)$, we have the matching condition
\begin{equation} \label{eq: p-p-infty-math}
P_{x_1}(z)P^\infty(z)^{-1} = I + O((\delta N)^{-1})
\end{equation}
as $N\to\infty$, uniformly for all relevant parameters and for $0< y \leq  \delta / 2$ with $y=x_2 - x_1$.
\item[(d)] As $z\to x_j$, $j=1, 2$, we have
\begin{equation}
P_{x_1}(z) = O(\log|z-x_j|).
\end{equation}
\end{itemize}
\end{rhp}


As in \cite[Sec. 7]{CFL2021}, we construct the solution to the above RH problem using the Painlevé V functions. First, in a neighborhood of $z = x_1$, we define the following conformal mapping:
\begin{equation} \label{lambdayz}
\lambda_y(z) = \pm \frac{2N}{s_{N, y}}(\xi(z) - \xi_{\pm} (x_1)), \quad \pm\Im z>0,
\end{equation}
where $\xi(z)$ is defined by \eqref{xi(z)} and $s_{N, y}$ is the quantity measuring the distance between $x_2$ and $x_1$, defined as 
\begin{eqnarray}  \label{sNy} 
s_{N, y} =  2N(\xi_+(x_2) - \xi_+(x_1)) .
\end{eqnarray}
It follows from \eqref{lambdayz} that $\lambda_y(z)$ maps the two jump points $x_1$ and $x_2$ to $0$ and $1$, respectively. For $x \in (-1, 1)\cap B(x_1, \delta)$,
\begin{eqnarray*}
\lambda_{y, +}(x) = \frac{2N}{s_{N, y}}(\xi_{+}(x) - \xi_+(x_1)) = -\frac{2N}{s_{N, y}}(\xi_{-}(x) - \xi_{-}(x_1)) =\lambda_{y, -}(x).
\end{eqnarray*}
Consequently, $\lambda_y(z)$ is analytic in $B(x_1, \delta)$. In addition, \eqref{sNy} gives
\begin{eqnarray}
s_{N, y} = -2Nyi\sqrt{\frac{1-x_1}{1+x_1}} + O(Ny^2), \quad \textrm{as } N\to\infty,
\end{eqnarray}
uniformly for  $0< y \leq \frac{\delta}{2}$. This gives us
\begin{equation}
 \lambda'_y(x_1) = \frac{2N}{s_{N, y}}(\xi_+)'(x_1) = \frac{1}{y} + O(1)
\end{equation}
as $N\to\infty$. In what follows, we present the solution to RH problem \ref{rhp:local-merge}.

\begin{lemma}
Let $\xi(z)$ and $w_L(z)$ be defined in \eqref{xi(z)} and \eqref{eq: Laguerre weight}, and $\widehat{\Phi}_{PV}$ be the Painlev\'e V parametrix given in Appendix \ref{Appendix: Section: PV model RHP}. Then, the solution to RH problem \ref{rhp:local-merge} is given by
\begin{equation}\label{Pmerge}
 P_{x_1}(z) = E_{N, y} (z) \widehat{\Phi}_{PV}(\lambda_y(z); s_{N, y}) (1+z)^{-\frac{\alpha}{2}\sigma_3} e^{-N\xi(z) \sigma_3},
\end{equation}
where $E_{N, y}(z)$ is an analytic function in $B(x_1, \delta)$ given as 
\begin{equation}\label{Emerge}
 E_{N, y}(z) = P^\infty(z) (1+z)^{\frac{\alpha}{2}\sigma_3}\widehat{\Phi}^\infty(\lambda_y(z))^{-1} e^{-N\xi_+(x_1)\sigma_3},
\end{equation}
with
\begin{equation}\label{hatpsiinfty}
\widehat{\Phi}^{\infty}(\lambda) = \widehat{\Phi}^{\infty}(\lambda; s) = \left\{\begin{array}{ll}
e^{-\sqrt{2}\pi\gamma_2\sigma_3} |s|^{\frac{\gamma_1+\gamma_2}{\sqrt{2}i}\sigma_3} \lambda^{\frac{\gamma_1}{\sqrt{2}i}\sigma_3} (1-\lambda)^{\frac{\gamma_2}{\sqrt{2}i}\sigma_3}, & \Im \lambda<0, \\
|s|^{\frac{\gamma_1+\gamma_2}{\sqrt{2}i}\sigma_3} \lambda^{\frac{\gamma_1}{\sqrt{2}i}\sigma_3} (1-\lambda)^{\frac{\gamma_2}{\sqrt{2}i}\sigma_3}\sigma_3\sigma_1, & \Im \lambda>0.
\end{array}
\right.
\end{equation}
In the above formula,  the principal branches are chosen such that $\arg \lambda, \arg(1-\lambda) \in (-\pi,\pi)$.
\end{lemma}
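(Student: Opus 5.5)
The plan is to check the four conditions of RH problem \ref{rhp:local-merge} directly for the candidate $P_{x_1}$ in \eqref{Pmerge}. The key facts about $\widehat{\Phi}_{PV}$ from Appendix \ref{Appendix: Section: PV model RHP} are: its jumps on the real line and on the four rays from $0$ and $1$; its large-$\lambda$ behaviour $\widehat{\Phi}_{PV}(\lambda;s)=(I+O(\lambda^{-1}))\widehat{\Phi}^\infty(\lambda;s)e^{\pm\frac{s}{2}\lambda\sigma_3}$ in the upper/lower half planes, uniformly in $s$ on the relevant ray; and its logarithmic behaviour at $\lambda=0,1$.

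\textbf{Analyticity of $E_{N,y}$.} First I will show that $E_{N,y}$ in \eqref{Emerge} has no jumps in $B(x_1,\delta)$. On $(-1,1)\cap B(x_1,\delta)$, $P^\infty$ jumps by $J_2(x;\gamma)$, with $\gamma=\gamma_1+\gamma_2$, $\gamma_2$ or $0$ on $(-\infty,x_1)$, $(x_1,x_2)$, $(x_2,\infty)$. Conjugating by $(1+z)^{\frac{\alpha}{2}\sigma_3}$ removes the factor $(x+1)^{\pm\alpha}$, leaving the off-diagonal matrix with entries $\pm e^{\pm\sqrt2\pi\gamma}$. Because $\lambda_y$ is conformal and real on the real axis, $\widehat\Phi^\infty(\lambda_y(z))$ jumps only across $\mathbb{R}$. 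From \eqref{hatpsiinfty}, the factor $\sigma_3\sigma_1$ and the boundary values of $\lambda^{\frac{\gamma_1}{\sqrt2 i}\sigma_3}(1-\lambda)^{\frac{\gamma_2}{\sqrt2 i}\sigma_3}$ produce the factors $e^{\sqrt2\pi\gamma_1}$ on $\lambda<0$ and $e^{\sqrt2\pi\gamma_2}$ on $\lambda>1$. Combined with the prefactor $e^{-\sqrt2\pi\gamma_2\sigma_3}$, this gives exactly the same off-diagonal jump on each of the three subintervals. Hence $E_{N,y}$ has no jump, and the constant $e^{-N\xi_+(x_1)\sigma_3}$ does not affect this. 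At $z=x_1,x_2$ the singularities are at most of power type $|z-x_j|^{\pm \Re(\cdot)}$, which here is purely oscillatory since $\gamma_j$ is real. So these are removable isolated singularities, and $E_{N,y}$ is analytic.

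\textbf{Jumps and endpoint behaviour.} Next I check conditions (b) and (d). The jumps of $S$ near $x_1,x_2$ are $J_1,J_2,J_3$ with entries $(x+1)^{\mp\alpha}e^{\mp\sqrt2\pi\gamma}e^{\pm 2N\xi}$. Since $E_{N,y}$ is analytic, the jumps of $P_{x_1}$ are those of $\widehat\Phi_{PV}(\lambda_y)$, conjugated by $(1+z)^{-\frac{\alpha}{2}\sigma_3}e^{-N\xi\sigma_3}$. These match provided the lens contours are mapped by $\lambda_y$ onto the model contours; I will deform the lenses locally to arrange this. I also need the identity $\pm 2N(\xi(z)-\xi_\pm(x_1))=s\lambda_y(z)$, which converts $e^{\frac{s}{2}\lambda\sigma_3}$ into $e^{N(\xi(z)-\xi_\pm(x_1))\sigma_3}$ in each half plane. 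The boundary behaviour (d) then follows directly from that of $\widehat\Phi_{PV}$ at $0$ and $1$.

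\textbf{Matching condition.} Finally, on $\partial B(x_1,\delta)$ I have $|\lambda_y(z)|\gtrsim \delta/y$, and $s_{N,y}$ is purely imaginary with $|s_{N,y}|\asymp Ny$. Substituting the asymptotics of $\widehat\Phi_{PV}$, the factors $\widehat\Phi^\infty$, $e^{\pm\frac s2\lambda\sigma_3}$ and $e^{-N\xi\sigma_3}$ cancel against $E_{N,y}$. This leaves
\[
P_{x_1}P^{\infty,-1}=E_{N,y}\bigl(I+O(\lambda_y^{-1})\bigr)E_{N,y}^{-1}.
\]
The main obstacle is making this error $O((\delta N)^{-1})$ uniformly for $0<y\le\delta/2$. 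The naive bound $O(y/\delta)$ is too weak when $y\gg 1/N$. Instead I will use the uniform-in-$s$ estimate of \cite[Sec.~7]{CFL2021} for the PV model problem, $\widehat\Phi_{PV}\widehat\Phi^{\infty,-1}e^{\mp\frac s2\lambda\sigma_3}=I+O\bigl(1/(|s|\,|\lambda|)+1/|\lambda|\cdot\text{small}\bigr)$, in the form $I+O(1/|s\lambda|)$ for large $|s|$ together with the regime $|\lambda|\to\infty$ for bounded $s$. Since $|s\lambda|\asymp N\delta$, this gives the required bound. It also requires checking that $E_{N,y}$ and its inverse are bounded uniformly in $N,y$ on the circle. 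That holds because $|s|^{\frac{\gamma}{\sqrt2 i}\sigma_3}$ and the $\lambda$-powers have modulus bounded, $\gamma_j$ being real, apart from the bounded factor $e^{\sqrt2\pi\gamma_2}$ and the branch constants.
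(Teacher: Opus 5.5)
Your proposal is correct and follows essentially the same route as the paper. You show that $E_{N,y}$ has no jump on $(x_1-\delta,x_1+\delta)$ and only bounded, hence removable, singularities at $x_1,x_2$ (since $\gamma_1,\gamma_2$ are real), read off the jumps of $P_{x_1}$ from those of $\widehat{\Phi}_{PV}$, and get the matching $I+O((N\delta)^{-1})$ from the uniform-in-$s$ estimate adapted from \cite[Lemma 7.1]{CFL2021} together with uniform boundedness of $E_{N,y}^{\pm 1}$ on $\partial B(x_1,\delta)$. One small correction: in the merging case the lens over $[x_1,x_2]$ is not opened, so on $(x_1,x_2)$ the jump of $S$ is the upper-triangular $T$-jump rather than $J_2(x;\gamma_2)$, and that is exactly what the model jump $\widehat J_5$ on $(0,1)$ reproduces after conjugation by $(1+z)^{-\frac{\alpha}{2}\sigma_3}e^{-N\xi(z)\sigma_3}$, using $\xi_++\xi_-=0$.
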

\begin{proof}
First, we prove that $E_{N, y}$ is analytic in $B(x_1, \delta)$. From its explicit expression, we only need to establish the analyticity of $E_{N, y}(z)$ on the interval $[-1,1] \cap B(x_1, \delta)$. When $ x \in (x_1-\delta, x_1+\delta)$, it is readily seen that $E_{N, y}(x)_{-}^{-1} E_{N, y}(x)_+ = I$. Given that $z \to x_1$ as $\lambda_y(z) \to 0$, and that $\gamma_1$ is real, the factor $\lambda^{\frac{\gamma_1}{\sqrt{2}i}\sigma_3} $ in \eqref{hatpsiinfty} stays bounded as $z \to x_1$. This yields $E_{N, y}(z) = O(1)$ as $z\to x_1$, revealing that $z = x_1$ is a removable singularity. An analogous argument works for $z=x_2$. Consequently, $E_{N, y}(z)$ is analytic in $B(x_1, \delta)$.

Next, using \eqref{jump hatPsi} and the analyticity of $E_{N, y}$, one can easily verify that $P_{x_1}(z)$ defined in \eqref{Pmerge} satisfies the same jump conditions as $S(z)$ in $B(x_1, \delta)$. 

It remains to check the matching condition \eqref{eq: p-p-infty-math}. From \eqref{Pmerge} and \eqref{Emerge}, it follows that
\begin{equation*}
P_{x_1}(z) P^{\infty}(z)^{-1}  =  E_{N, y} (z) \widehat{\Phi}_{PV}(\lambda_y(z); s_{N, y})e^{-N\xi(z)\sigma_3} \widehat{\Phi}^\infty(\lambda_y(z))^{-1} e^{-N\xi_+(x_1)\sigma_3} E_{N, y}(z)^{-1}.
\end{equation*}
For $\Im z < 0$, the diagonality of  $\widehat{\Phi}^{\infty}(\lambda(z))$ (see \eqref{hatpsiinfty}) ensures that $e^{-N\xi(z)\sigma_3} \widehat{\Phi}^\infty(\lambda_y(z))^{-1} e^{-N\xi_+(x_1)\sigma_3}$ mutually commute. Moreover, it follows from \eqref{xi(z)} and \eqref{lambdayz} that
\begin{eqnarray*}
e^{-N\xi(z)\sigma_3} e^{-N\xi_+(x_1)\sigma_3} = e^{-N\xi(z)\sigma_3} e^{N\xi_-(x_1)\sigma_3} = e^{\frac{s_{N, y}}{2} \lambda_y(z) \sigma_3}.
\end{eqnarray*}
Using the above two formulas, we get, for $\Im z < 0$,
\begin{eqnarray}
P_{x_1}(z) P^{\infty}(z)^{-1} 
= E_{N, y} (z) \widehat{\Phi}_{PV}(\lambda_y(z); s_{N, y}) e^{\frac{s_{N, y}}{2} \lambda_y(z) \sigma_3} \widehat{\Phi}^\infty(\lambda_y(z))^{-1} E_{N, y}(z)^{-1}.
\end{eqnarray}
For $\Im z > 0$, using the fact $\sigma_1 a^{\sigma_3} = a^{-\sigma_3} \sigma_1$, a similar computation gives us
\begin{eqnarray}
P_{x_1}(z) P^{\infty}(z)^{-1} = E_{N, y} (z) \widehat{\Phi}_{PV}(\lambda_y(z); s_{N, y}) e^{-\frac{s_{N, y}}{2} \lambda_y(z) \sigma_3} \widehat{\Phi}^\infty(\lambda_y(z))^{-1} E_{N, y}(z)^{-1}.
\end{eqnarray}
To approximate $\widehat{\Phi}_{PV}(\lambda_y(z); s_{N, y}) e^{\pm \frac{s_{N, y}}{2} \lambda_y(z) \sigma_3} \widehat{\Phi}^\infty(\lambda_y(z))^{-1}$ in the formulas above, we adopt an argument similar to that in \cite[Lemma 7.1]{CFL2021} to obtain
\begin{equation}
\widehat{\Phi}_{PV}(\lambda_y(z); s_{N, y}) e^{\mp \frac{s_{N, y}}{2} \lambda_y(z) \sigma_3} \widehat{\Phi}^\infty(\lambda_y(z))^{-1} = I+O((N\delta)^{-1}), \qquad N \to \infty,
\end{equation}
uniformly for $z\in \partial B(x_1, \delta)$, with $0<y< \delta/2$ and $x_1$ in a fixed compact subset of $(-1, 1)$. Regarding the behavior of $E_{N, y}(z)$ for $z\in \partial B(x_1, \delta)$, both $N$-dependent terms, $\widehat{\Phi}^\infty(\lambda_y(z))$ and $e^{-N \xi_+(x_1) \sigma_3}$, are bounded as $N \to \infty$. From \eqref{hatpsiinfty}, we know that $\widehat{\Phi}^\infty(\lambda_y(z))$ is bounded in $\partial B(x_1, \delta)$. In addition, $P^\infty (z)$ and $w_L(z)$ have no singularities on $\partial B(x_1, \delta)$.By the definition in \eqref{Emerge}, we have $E_{N, y}(z)^{\pm 1} = O(1)$ as $N \to \infty$, uniformly for $z \in \partial B(x_1, \delta)$. Combining this with the formulas above yields the required matching condition \eqref{eq: p-p-infty-math}.

We have thus completed the proof.
\end{proof}

Having constructed all the global and local parametrices, we now give the final transformation as follows:
\begin{eqnarray} \label{merging-R-def}
R(z) = \left\{\begin{array}{ll}
                S(z)P_{-1}(z)^{-1} & z\in B(-1, \delta), \\
                S(z)P_{1}(z)^{-1} & z\in B(1, \delta), \\
                S(z)P_{x_1}(z)^{-1} & z\in B(x_1, \delta), \\
                S(z)P^\infty(z)^{-1} & z\in\mathbb{C}\setminus (B(x_1, \delta)\cup B(-1, \delta)\cup B(1, \delta)).
              \end{array}
\right.
\end{eqnarray}
One can readily verify that $R(z)$ satisfies the following RH problem.

\begin{rhp}
\hfill
\begin{itemize}
\item[(a)] $R: \mathbb{C}\setminus \Sigma_R \to \mathbb{C}^{2\times 2}$ is analytic, where 
\begin{equation}
\Sigma_R = \Sigma_S \cup \partial B(\pm 1, \delta) \cup \partial B(x_1, \delta) \setminus \{  [-1, 1] \cup B(\pm 1, \delta) \cup B(x_1, \delta)\}.
\end{equation}
The orientations on $\partial B(\pm 1, \delta)$ and $\partial B(x_1, \delta)$ are taken to be clockwise.

\item[(b)] $R_+(z) = R_-(z)J_R(z)$, where
\begin{eqnarray}
J_R(z) = \left\{\begin{array}{ll}
                P_{-1}(z)P^\infty(z)^{-1} & z\in  \partial B(- 1, \delta), \\
                P_{1}(z)P^\infty(z)^{-1} & z\in  \partial B(1, \delta), \\
                P_{x_1}(z)P^\infty(z)^{-1} & z\in \partial B(x_1, \delta), \\
                P^\infty(z)J_S(z)P^\infty(z)^{-1} & z\in\Sigma_R\setminus (B(x_1, \delta)\cup B(-1, \delta)\cup B(1, \delta)).
              \end{array}
\right.
\end{eqnarray}
\item[(c)] As $z\to\infty$, we have
\begin{equation}
R(z) = I+O(z^{-1}).
\end{equation}
\end{itemize}
\end{rhp}

From \eqref{jumps for S}, one can see that, there exists a positive constant $c>0$ such that
\begin{equation*}
J_{R}(z) =  I+O\left(e^{-2c N}\right), \qquad z\in\Sigma_R\setminus (B(x_1, \delta)\cup B(-1, \delta)\cup B(1, \delta)).
\end{equation*}
For $z \in \partial B(x_1, \delta)\cup \partial B(\pm 1, \delta)$, it follows from \eqref{sec3.3-matching1} and \eqref{eq: p-p-infty-math} that
\begin{eqnarray}
J_{R}(z) =  I+O\left(\frac{1}{N \delta}\right), \quad z \in \partial B(x_1, \delta),  \qquad J_{R}(z) = I+O\left(\frac{1}{N}\right), \quad z \in \partial B(\pm 1, \delta).
\end{eqnarray}
Then, by the standard result for small-norm RH problems (see \cite[Section 7]{DeiftZhou}), we have
\begin{eqnarray} \label{Asym for R in merging case}
R(z) =  I + O\left(\frac{1}{N(|z|+1)}\right) \quad \textrm{and} \quad  R'(z) = O\left(\frac{1}{N(|z|+1)}\right), \quad \text{ as } N \to \infty,
\end{eqnarray}
uniformly for $z \in \mathbb{C} \setminus \Sigma_R$.

\subsection{Local Parametrix near $-1$}

This section is devoted to the construction of the local parametrix for Case (III) depicted in Figure \ref{fig:contours 3}. To begin, we seek a function $P_{1}(z)$ satisfying the following RH problem in $B(1, \delta)$.

\begin{rhp} \label{rhp for local parametrix near 1 in the edge case}
\hfill
\begin{itemize}
\item[(a)] $P_{1}: B(1, \delta) \setminus ([-1, \infty) \cup \Gamma_2 \cup \overline{\Gamma}_2)\to \mathbb{C}^{2\times 2}$ is analytic.
\item[(b)] $P_{1}$ satisfies the same jump relations as $S$ on $B(1, \delta)\cap ([-1, \infty) \cup \Gamma_1 \cup \overline{\Gamma}_1)$.
\item[(c)] For $z\in \partial B(1, \delta)$, we have the matching condition
\begin{equation}\label{matching condition: P-1Pinfty}
P_{1}(z)P^\infty(z)^{-1} = I+O(1/N), \qquad \textrm{as } N\to\infty.
\end{equation}
\item[(d)] As $z\to 1$, $P_{1}(z)$ satisfies the same asymptotic behaviours as $S$.
\end{itemize}
\end{rhp}

The local parametrix near $z=1$  is provided by the Airy parametrix. We now turn to the local parametrix near $z=-1$, which contains the jump point $x$. A function $P_{-1}(z)$ is sought such that it satisfies the following RH problem in $B(-1, \delta)$.

\begin{rhp} \label{rhp of the local parametrix for the edge regime near -1}
\hfill
\begin{itemize}
\item[(a)] $P_{-1}: B(-1, \delta) \setminus ([-1, \infty) \cup \Gamma_1 \cup \overline{\Gamma}_1 \cup  \Gamma_2 \cup \overline{\Gamma}_2)\to \mathbb{C}^{2\times 2}$ is analytic.
\item[(b)] $P_{-1}$ satisfies the same jump relations as $S$ on $B(-1, \delta)\cap ([-1, \infty) \cup \Gamma_1 \cup \overline{\Gamma}_1  \cup \Gamma_2 \cup \overline{\Gamma}_2)$.
\item[(c)] For $z\in \partial B(-1, \delta)$, we have the matching condition
\begin{equation} \label{mathcing condition near -1}
P_{-1}(z)P^\infty(z)^{-1} = I+O(1/N), \qquad \text{ as } N\to\infty.
\end{equation}
\item[(d)] As $z\to -1$, $P_{-1}(z)$ remains bounded.
\end{itemize}
\end{rhp}

The above RH problem can be solved by first constructing an auxiliary model RH problem.

\begin{figure}[htbp]
    
\centering

\tikzset{every picture/.style={line width=0.75pt}} 

\begin{tikzpicture}[x=0.75pt,y=0.75pt,yscale=-1,xscale=1]

\draw    (100,105) -- (250.33,104.8) -- (378.33,105.8) ;
\draw [shift={(180.17,104.89)}, rotate = 179.92] [fill={rgb, 255:red, 0; green, 0; blue, 0 }  ][line width=0.08]  [draw opacity=0] (8.93,-4.29) -- (0,0) -- (8.93,4.29) -- cycle    ;
\draw [shift={(319.33,105.34)}, rotate = 180.45] [fill={rgb, 255:red, 0; green, 0; blue, 0 }  ][line width=0.08]  [draw opacity=0] (8.93,-4.29) -- (0,0) -- (8.93,4.29) -- cycle    ;
\draw [shift={(100,105)}, rotate = 359.92] [color={rgb, 255:red, 0; green, 0; blue, 0 }  ][fill={rgb, 255:red, 0; green, 0; blue, 0 }  ][line width=0.75]      (0, 0) circle [x radius= 3.35, y radius= 3.35]   ;
\draw    (100,105) .. controls (135.33,160.8) and (217.33,161.8) .. (250.33,104.8) ;
\draw [shift={(181.26,147.07)}, rotate = 178.81] [fill={rgb, 255:red, 0; green, 0; blue, 0 }  ][line width=0.08]  [draw opacity=0] (8.93,-4.29) -- (0,0) -- (8.93,4.29) -- cycle    ;
\draw    (100,105) .. controls (136.33,47.8) and (216.33,46.8) .. (250.33,104.8) ;
\draw [shift={(181.22,61.83)}, rotate = 181.14] [fill={rgb, 255:red, 0; green, 0; blue, 0 }  ][line width=0.08]  [draw opacity=0] (8.93,-4.29) -- (0,0) -- (8.93,4.29) -- cycle    ;
\draw    (250.33,104.8) -- (344.33,210.8) ;
\draw [shift={(300.65,161.54)}, rotate = 228.43] [fill={rgb, 255:red, 0; green, 0; blue, 0 }  ][line width=0.08]  [draw opacity=0] (8.93,-4.29) -- (0,0) -- (8.93,4.29) -- cycle    ;
\draw    (250.33,104.8) -- (347.33,19.8) ;
\draw [shift={(302.59,59.01)}, rotate = 138.77] [fill={rgb, 255:red, 0; green, 0; blue, 0 }  ][line width=0.08]  [draw opacity=0] (8.93,-4.29) -- (0,0) -- (8.93,4.29) -- cycle    ;

\draw (88,110) node [anchor=north west][inner sep=0.75pt]   [align=left] {0};
\draw (247,116) node [anchor=north west][inner sep=0.75pt]   [align=left] {1};
\draw (156,167) node [anchor=north west][inner sep=0.75pt]   [align=left] {$\displaystyle \overline{\Sigma _{\phi ,\ 1}}$};
\draw (151,23) node [anchor=north west][inner sep=0.75pt]   [align=left] {$\displaystyle \Sigma _{\phi ,\ 1}$};
\draw (334,37) node [anchor=north west][inner sep=0.75pt]   [align=left] {$\displaystyle \Sigma _{\phi ,\ 2}$};
\draw (326,163) node [anchor=north west][inner sep=0.75pt]   [align=left] {$\displaystyle \overline{\Sigma _{\phi ,\ 2}}$};

\end{tikzpicture}

\caption{The jump contours for the model RH problem for $\Phi$}
\end{figure}

\begin{rhp} \label{model-rhp-hard-edge}
\hfill
\begin{itemize}
\item[(a)] $\Phi = \Phi(\cdot; u)$ is analytic on $\mathbb{C}\setminus ([0, \infty)\cup \Sigma_{\phi, 1}\cup \Sigma_{\phi, 2}\cup \overline{\Sigma_{\phi, 1}}\cup \overline{\Sigma_{\phi, 2}})$. The contours are depicted in the figure.

\item[(b)] On $[0, \infty)\cup \Sigma_{\phi, 1}\cup \Sigma_{\phi, 2}\cup \overline{\Sigma_{\phi, 1}}\cup \overline{\Sigma_{\phi, 2}}\setminus \{0, 1\}$,  $\Phi$ satisfies the following jump conditions:
    \begin{eqnarray}
    \begin{array}{ll}
    \Phi_+(\lambda) = \Phi_-(\lambda)\left(\begin{array}{ll} 1 & 0\\ e^{-\sqrt{2}\pi\gamma}e^{-4(-\lambda u)^{1/2}}e^{-\alpha\pi i} & 1
     \end{array}\right) & \text{for } \lambda\in\Sigma_{\Phi, 1},\\
     \Phi_+(\lambda) = \Phi_-(\lambda)\left(\begin{array}{ll} 1 & 0\\ e^{-\sqrt{2}\pi\gamma} e^{-4(-\lambda u)^{1/2}}e^{\alpha\pi i} & 1
     \end{array}\right) & \text{for } \lambda\in\overline{\Sigma_{\Phi, 1}},\\
     \Phi_+(\lambda) = \Phi_-(\lambda)\left(\begin{array}{ll} 1 & 0\\ e^{-4(-\lambda u)^{1/2} }e^{-\alpha\pi i} & 1
     \end{array}\right) & \text{for } \lambda\in\Sigma_{\Phi, 2},\\
     \Phi_+(\lambda) = \Phi_-(\lambda)\left(\begin{array}{ll} 1 & 0\\ e^{-4(-\lambda u)^{1/2} }e^{\alpha\pi i} & 1
     \end{array}\right) & \text{for } \lambda\in\overline{\Sigma_{\Phi, 2}},\\
     \Phi_+(\lambda) = \Phi_-(\lambda)\left(\begin{array}{ll} 0 & e^{\sqrt{2}\pi \gamma}\\ -e^{-\sqrt{2}\pi \gamma} & 0
     \end{array}\right) & \text{for } \lambda\in (0, 1),\\
     \Phi_+(\lambda) = \Phi_-(\lambda)\left(\begin{array}{ll} 0 & 1\\ -1 & 0
     \end{array}\right) & \text{for } \lambda\in(1, \infty),
    \end{array}
    \end{eqnarray}
    The principal branches are chosen in the square roots, and $u>0$ and $\gamma\in\mathbb{R}$ are parameters.

\item[(c)] As $\lambda\to\infty$,
\begin{equation}\label{Phiasym}
\Phi(\lambda) = (I+O(\lambda^{-1}))\left(\begin{array}{cc} 1 & -\sqrt{2}\gamma\\ 0 & 1\end{array}\right)(-\lambda)^{\frac{1}{4}\sigma_3} B 
\end{equation}
where the principal branches are chosen, and
\begin{equation}\label{A}
B = \frac{1}{\sqrt 2}\left(\begin{array}{cc} 1 & i\\ i & 1\end{array}\right).
\end{equation}

\item[(d)] As $\lambda\to 1$, $\Phi(\lambda) = O(\log |\lambda-1|)$.

\item[(e)] For $\alpha<0$, $\Phi(\lambda)$ has the following behaviour as $\lambda\to 0$:
\begin{eqnarray}
\Phi(\lambda) = O\left(
                   \begin{array}{cc}
                     |\lambda|^{\alpha / 2} & |\lambda|^{\alpha / 2} \\
                     |\lambda|^{\alpha / 2} & |\lambda|^{\alpha / 2} \\
                   \end{array}
                 \right), \text{ as } \lambda\to 0.
\end{eqnarray}
For $\alpha=0$, $\Phi(\lambda)$ has the following behaviour as $\lambda\to 0$:
\begin{eqnarray}
\Phi(\lambda) = O\left(
                   \begin{array}{cc}
                     \log |\lambda| & \log |\lambda| \\
                     \log |\lambda| & \log |\lambda| \\
                   \end{array}
                 \right), \text{ as } \lambda\to 0.
\end{eqnarray}
For $\alpha>0$, $\Phi(\lambda)$ has the following behaviour as $\lambda\to 0$:
\begin{eqnarray}
\Phi(\lambda) = \left\{\begin{array}{ll}
O\left(
   \begin{array}{cc}
     |\lambda|^{\alpha/2} & |\lambda|^{-\alpha/2} \\
     |\lambda|^{\alpha/2} & |\lambda|^{-\alpha/2} \\
   \end{array}
 \right), & \text{as } \lambda\to 0 \text{ outside the lens,}\\
O\left(
   \begin{array}{cc}
     |\lambda|^{-\alpha/2} & |\lambda|^{-\alpha/2} \\
     |\lambda|^{-\alpha/2} & |\lambda|^{-\alpha/2} \\
   \end{array}
 \right), & \text{as } \lambda\to 0 \text{ inside the lens,}
\end{array}
\right.
\end{eqnarray}
\end{itemize}
\end{rhp}

As stated in Section \ref{Appendix:subsec:modelrhp-hardedge}, or more specifically, in \cite[3.5.1]{Dai-Lu-JUE}, the solution to this model RH problem exists as $u\to\infty$, although the directions of the contours are reversed.

Now we are at the stage of solving the RH problem \ref{rhp of the local parametrix for the edge regime near -1}. We first define
\begin{eqnarray}\label{f}
f(z) = -\frac{1}{4}(\pi i - \xi(z))^2,
\end{eqnarray}
where $\xi(z)$ is given in \eqref{xi(z)}. A direct computation (see also \cite[(5.59)]{CG2021} yields
\begin{eqnarray} \label{eq:f-expan}
f(z) = 2(z+1) - \frac{1}{3}(z+1)^2 + O((z+1)^3), \quad \text{ as } z\to -1, 
\end{eqnarray}
which shows that $f(z)$ is a conformal map in a neighborhood of $z=-1$. Denote
\begin{equation} \label{def:un-x}
u_{N, x} = N^2 f(x)_+ = -\frac{N^2}{4} (\pi i - \xi_+(x))^2,
\end{equation}
where $x$ is the jump point in Case (III); see Figure \ref{fig:contours 3} for an illustration. Obviously, we have 
\begin{equation}\label{u1}
u_{N, x} = O(N^2(1+x)), \quad \text{ as } N\to\infty,
\end{equation}
uniformly for $x\in (-1+N^{-2}\log\log N, -1+ \varepsilon)$, so that $u_{N, x} \to \infty$ as $N\to \infty$ for $x$ in this regime.

We further define
\begin{equation}\label{lambdadef}
\lambda_{x}(z) = \frac{f(z)}{f(x)},
\end{equation}
which is also a conformal map in a neighborhood of $z = -1$, mapping $z=-1$ to $0$ and $z = x$ to $1$, respectively. In addition, the following approximation holds:
\begin{equation}\label{lambda}
\lambda_{x}(z) = O((1+x)^{-1}), \quad \text{ as } x\to -1,
\end{equation}
uniformly for $z\in\partial B(-1, \delta)$.  We also define 
\begin{equation} \label{W(z)}
W(z) = (-z - 1)^{\alpha / 2}, \quad \text{ for } z\in \mathbb{C} \setminus [-1, \infty),
\end{equation}
 with the branch of the square root taken so that $W(z)>0$ for $z < -1$. This definition then yields
\begin{eqnarray} \label{eq:W-wj-relation}
W^2(z) = \left\{\begin{array}{ll}
                  e^{-\alpha\pi i} (z+1)^\alpha, & \Im z>0; \\
                  e^{\alpha\pi i} (z+1)^\alpha, & \Im z<0.
                \end{array}
\right.
\end{eqnarray}


The jump contour $\Sigma_S$ of $S$ near $-1$ is now chosen in such a way that $\lambda_x$ maps $\Sigma_S\cap B(-1, \delta)$ to the jump contour $\Sigma_{\Phi,1}$ and $\Sigma_{\Phi, 2}$ appearing in RH Problem \ref{model-rhp-hard-edge}. The following lemma provides the solution to RH Problem \ref{rhp of the local parametrix for the edge regime near -1}.

\begin{lemma} \label{lemma: local parametrix near -1}
Let $\lambda_x(z)$ and $W(z)$ be defined in \eqref{lambdadef} and \eqref{W(z)},  and $\Phi$ be the solution to the model RH problem \ref{model-rhp-hard-edge}. Then, the solution to RH problem \ref{rhp of the local parametrix for the edge regime near -1} is given by
\begin{equation}\label{P}
P_{-1}(z) = E_{-1}(z)\Phi(\lambda_x(z), u_{N, x}) W(z)^{-\sigma_3},
\end{equation}
where  $E_{-1}(z)$ is an analytic function in $B(-1, \delta)$ defined as
\begin{equation}\label{E}
E_{-1}(z) = P^{\infty}(z) W(z)^{\sigma_3} M(\lambda_x(z))^{-1},
\end{equation}
with $P^{\infty}(z)$ given in \eqref{Pinfty} and $M(\lambda)$ given by
\begin{equation}\label{def of tilde M}
 M(\lambda) = (-\lambda)^{\sigma_3/4} B \left(\frac{1+e^{-\pi i / 2}\sqrt{-1/\lambda}}{\sqrt{-\frac{1}{\lambda}+1}}\right)^{-i\sqrt{2}\gamma \sigma_3},\quad \lambda \in \mathbb{C} \setminus (0, \infty),
\end{equation}
where 
\begin{equation}\label{A}
B = \frac{1}{\sqrt 2}\left(\begin{array}{cc} 1 & i\\ i & 1\end{array}\right).
\end{equation}
\end{lemma}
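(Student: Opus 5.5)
The argument follows the same three steps as the merging-case lemma for \eqref{Pmerge}. First we show that $E_{-1}$ is analytic in $B(-1,\delta)$. Next we check that $P_{-1}$ has the jumps of $S$ and the right behaviour at $z=-1$. Finally we verify the matching condition \eqref{mathcing condition near -1}.

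\textbf{Analyticity of $E_{-1}$.} By construction, $\lambda_x$ maps $B(-1,\delta)\cap[-1,1]$ into $[0,\infty)$. So the only possible jump of $E_{-1}$ in the disc is on $(-1,-1+\delta)$, and we compute $E_{-1,-}^{-1}E_{-1,+}$ there.

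On $(-1,x)$ the global parametrix has jump $J_2(\cdot;\gamma)$. By \eqref{eq:W-wj-relation}, conjugating by $W^{\sigma_3}$ turns the entries $e^{\pm\sqrt2\pi\gamma}(x+1)^{\pm\alpha}$ into $e^{\pm\sqrt2\pi\gamma}$, with the phases $e^{\pm\alpha\pi i}$ cancelling between the two sides. For $M$ we use that $(-\lambda)^{\sigma_3/4}B$ has jump $\begin{pmatrix}0&1\\-1&0\end{pmatrix}$ on $(0,\infty)$. We also use that the scalar factor $\big(\frac{1+e^{-\pi i/2}\sqrt{-1/\lambda}}{\sqrt{1-1/\lambda}}\big)^{-i\sqrt2\gamma\sigma_3}$ has its boundary values on $(0,1)$ differing by $e^{\pm\sqrt2\pi\gamma}$, while it is continuous across $(1,\infty)$. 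Together these show that $M$ has exactly the jumps $\begin{pmatrix}0&e^{\sqrt2\pi\gamma}\\-e^{-\sqrt2\pi\gamma}&0\end{pmatrix}$ on $(0,1)$ and $\begin{pmatrix}0&1\\-1&0\end{pmatrix}$ on $(1,\infty)$. Since $\lambda_x(x)=1$, these match the jumps of $P^\infty W^{\sigma_3}$, so $E_{-1}$ has no jump.

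It remains to rule out isolated singularities at $z=-1$ and $z=x$. Near $-1$, $P^\infty=O(|z+1|^{-1/4})(z+1)^{-\alpha\sigma_3/2}$ while $W^{\sigma_3}M^{-1}=O(|z+1|^{-1/4})$, so $E_{-1}=O(|z+1|^{-1/2})$ and the singularity is removable. Near $x$, $\gamma$ is real, so the factor $(\cdot)^{-i\sqrt2\gamma\sigma_3}$ and the $D_\gamma$ factor in $P^\infty$ both have bounded oscillatory behaviour. Hence $E_{-1}$ is bounded there and the singularity is removable.

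\textbf{Jumps of $P_{-1}$ and behaviour at $-1$.} Since $E_{-1}$ is analytic, the jumps of $P_{-1}$ are those of $\Phi(\lambda_x(z))W(z)^{-\sigma_3}$. On the lens boundaries we need to verify
\[
W^{\sigma_3}\begin{pmatrix}1&0\\ e^{-\sqrt2\pi\gamma}e^{-4(-\lambda_x u_{N,x})^{1/2}}e^{\mp\alpha\pi i}&1\end{pmatrix}W^{-\sigma_3}=J_{1,3}(z;\gamma).
\]
This reduces to the identity $4(-\lambda_x(z)u_{N,x})^{1/2}=\pm 2N(\xi(z)-\pi i)$. That identity follows from $\lambda_xu_{N,x}=N^2f(z)$ and \eqref{f}, with the sign fixed by the branch choices. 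The jumps on the real line follow in the same way from \eqref{eq:W-wj-relation}. Boundedness (or the prescribed behaviour) as $z\to-1$ then follows from condition (e) of RH problem \ref{model-rhp-hard-edge} combined with the factor $W^{-\sigma_3}$.

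\textbf{Matching condition (main obstacle).} On $\partial B(-1,\delta)$ we write
\[
P_{-1}P^{\infty,-1}=E_{-1}\,\Phi(\lambda_x;u_{N,x})\,e^{\mp 2(-\lambda_x u_{N,x})^{1/2}\sigma_3}\,M(\lambda_x)^{-1}E_{-1}^{-1}.
\]
By \eqref{lambda}, $|\lambda_x|\asymp(1+x)^{-1}$ is large there, and $|\lambda_xu_{N,x}|\asymp N^2$. So we need the large-$\lambda$, large-$u$ asymptotics of $\Phi$ uniformly in $u$:
\[
\Phi(\lambda)e^{\mp2(-\lambda u)^{1/2}\sigma_3}M(\lambda)^{-1}=I+O\big((|\lambda u|)^{-1/2}\big),
\]
as established in \cite[Sec. 3.5.1]{Dai-Lu-JUE} and Appendix \ref{Appendix:subsec:modelrhp-hardedge}. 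Here $M$ is precisely the leading term of that expansion, including the $\begin{pmatrix}1&-\sqrt2\gamma\\0&1\end{pmatrix}$ correction in \eqref{Phiasym}, rewritten through the $\gamma$-dependent scalar.

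We must also control $E_{-1}^{\pm1}$ on $\partial B(-1,\delta)$. The factor $M(\lambda_x)^{-1}$ grows like $|\lambda_x|^{1/4}\sim(1+x)^{-1/4}$, and $P^\infty$ has $D_\gamma$ factors that are uniformly bounded since $\gamma$ is real. The product $E_{-1}(\cdots)E_{-1}^{-1}$ therefore carries a conjugation cost of at most $|\lambda_x|^{1/2}$. Against the error $O(N^{-1}|\lambda_x|^{-1/2}\cdot|\lambda_x|^{...})$, this cost must be checked to net out to $O(1/N)$, uniformly for $x\in(-1+N^{-2}\log\log N,-1+\varepsilon)$.

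The expected difficulty is this uniformity as $u_{N,x}\to\infty$ only slowly, like $\log\log N$. If the conjugation cost does not cancel outright, we would first try to absorb the $(-\lambda_x)^{\sigma_3/4}$ growth into a rescaled $E_{-1}$, using the triangular structure of the first correction term. This is the standard trick in the Bessel-type analyses of \cite{CG2021}.
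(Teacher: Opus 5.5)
The gap is in your third step: you never establish the matching condition \eqref{mathcing condition near -1}, which is the substantive part of the lemma, and the estimate you propose for it is too weak to give $O(1/N)$. Your first two steps (analyticity of $E_{-1}$, the jumps, the behaviour at $-1$) are essentially fine.

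First, a bookkeeping slip. By \eqref{P} and \eqref{E} one has exactly $P_{-1}(P^{\infty})^{-1}=E_{-1}\,\Phi(\lambda_x;u_{N,x})\,M(\lambda_x)^{-1}E_{-1}^{-1}$. In RH problem \ref{model-rhp-hard-edge} the exponentials $e^{-4(-\lambda u)^{1/2}}$ are built into the lens jumps, and $\Phi$ carries no exponential at infinity. So the factor $e^{\mp 2(-\lambda_x u_{N,x})^{1/2}\sigma_3}=e^{\mp N(\xi(z)-\pi i)\sigma_3}$ that you insert does not belong there; it is exponentially large on $\partial B(-1,\delta)$.

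Second, and more seriously, write $E_{-1}=G(z)\,(-\lambda_x(z))^{-\sigma_3/4}$ with $G=P^\infty W^{\sigma_3}h(\lambda_x)^{i\sqrt2\gamma\sigma_3}B^{-1}$, where $h$ is the scalar in \eqref{def of tilde M}. Then $G$ is bounded with bounded inverse on $\partial B(-1,\delta)$, uniformly in $x$. Hence conjugation by $E_{-1}$ multiplies the $(2,1)$ entry of $\Phi M^{-1}-I$ by $|\lambda_x|^{1/2}\asymp(1+x)^{-1/2}$. With your bound $\Phi M^{-1}-I=O(|\lambda u|^{-1/2})$, the result is only $O(u_{N,x}^{-1/2})$. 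At $x=-1+N^{-2}\log\log N$ this is of order $(\log\log N)^{-1/2}$, not $O(1/N)$. The remedy is not a rescaling of $E_{-1}$ but a sharper estimate on the model problem.

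The missing ingredient is the large-$u$ steepest descent analysis of RH problem \ref{model-rhp-hard-edge} itself; the paper relies on this for solvability, cf.\ \cite[Sec.~3.5.1]{Dai-Lu-JUE}. It goes as follows.
\begin{itemize}
\item Take $M$ as the outer parametrix: it has exactly the constant jumps on $(0,1)$ and $(1,\infty)$ and the correct growth at $\infty$.
\item Put a Bessel parametrix at $\lambda=0$ in the variable $u\lambda$.
\item Put a confluent hypergeometric parametrix (Appendix \ref{Appendix: Section: HG model RHP}) at $\lambda=1$ in a variable of size $\sqrt{u}\,(\lambda-1)$. This is needed because at $\lambda=1$ the quantity $(-\lambda u)^{1/2}$ is purely imaginary, so the lens jumps there oscillate rather than decay.
\end{itemize}
The residual jumps are $I+O(u^{-1/2})$ on fixed circles around $0$ and $1$, and exponentially close to $I$ elsewhere. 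The Cauchy-integral representation then gives $\Phi(\lambda;u)M(\lambda)^{-1}=I+O\big(u^{-1/2}(1+|\lambda|)^{-1}\big)$ outside those circles, for $u\ge u_0$.

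The extra decay $|\lambda|^{-1}$ is exactly what pays for the conjugation. On $\partial B(-1,\delta)$ the error becomes $O\big(|\lambda_x|^{1/2}u_{N,x}^{-1/2}|\lambda_x|^{-1}\big)=O\big((u_{N,x}|\lambda_x(z)|)^{-1/2}\big)=O\big((N^2|f(z)|)^{-1/2}\big)=O(1/N)$. Since $u_{N,x}\lambda_x(z)=N^2f(z)$ does not depend on $x$, this is automatically uniform in $x$. The slow growth of $u_{N,x}$, which you flagged as the main obstacle, enters only through the requirement $u_{N,x}\ge u_0$.

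A minor correction to your first step: the scalar $h$ is not continuous across $(1,\infty)$, since its boundary values there are $(\sqrt\lambda\pm1)/\sqrt{\lambda-1}$. Because the jump of $(-\lambda)^{\sigma_3/4}B$ is off-diagonal, what matters is the product $h_+h_-$. This equals $1$ on $(1,\infty)$, and $-1$ on $(0,1)$ with $\arg h_\pm=-\pi/2$. That is what produces the jumps you state for $M$.
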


\begin{proof}
The proof is similar to \cite[Sec 9.1]{CFL2021} and \cite[Lemma 3.17]{Dai-Lu-JUE}. We omit the details here.
\end{proof}

With all the global and local parametrices constructed, we define the final transformation as follows:
\begin{eqnarray}\label{transform map: edge S to R}
R(z) = \left\{\begin{array}{ll} S(z)P^\infty(z)^{-1} & \text{ for } z\in \mathbb{C}\setminus (\Gamma_S\cup B(1, \delta)\cup B(-1, \delta)),\\
S(z)P_{1}(z)^{-1} & \text{ for } z\in B(1, \delta), \\
S(z)P_{-1}(z)^{-1} & \text{ for } z\in B(-1, \delta).
\end{array}\right.
\end{eqnarray}
Then, $R(z)$ satisfies the following RH problem.

\begin{rhp}
\hfill
\begin{itemize}
\item[(a)] $R: \mathbb{C}\setminus \Sigma_R \to \mathbb{C}$ is analytic, where
\begin{equation}
 \Sigma_R = \Sigma_S \cup \partial B(\pm 1, \delta) \setminus \{[-1, \infty) \cup B(\pm 1, \delta)\} 
\end{equation}
and the orientations on $\partial B(\pm 1, \delta)$ are taken to be clockwise.
\item[(b)] On $\Sigma_R$, $R$ satisfies the following jump conditions
\begin{eqnarray}
R_+(z) = 
\left\{
\begin{array}{ll}
R_{-}(z) P^\infty(z)
    \left(
      \begin{array}{cc}
        1 & 0 \\
        e^{-\sqrt{2}\pi\gamma} (z+1)^{-\alpha} e^{-2N\xi(z)} & 1 \\
      \end{array}
    \right) (P^\infty(z))^{-1}, & z\in \Gamma_2 \cup \overline{\Gamma}_{2}, \\
R_{-}(z) P_{-1}(z)P^{\infty}(z)^{-1}, & z\in B(-1, \delta),\\
R_{-}(z) P_1(z)P^{\infty}(z)^{-1}, & z\in B(1, \delta),\\
R_{-}(z) P^\infty(z) \left(\begin{array}{ll}
1 & (x+1)^\alpha e^{2N\xi(x)}\\
0 & 1
\end{array}
\right) (P^\infty(z))^{-1} & x>1.
\end{array}
\right.
\end{eqnarray}
\item[(c)] As $z\to\infty$,
\begin{equation}
R(z) = I+O(z^{-1}).
\end{equation}
\end{itemize}
\end{rhp}

With the definition of $\xi(z)$ given in \eqref{xi(z)} and the matching relations in  \eqref{matching condition: P-1Pinfty} and \eqref{mathcing condition near -1} taken into account, one can readily verify that the jump matrices on $\Gamma_2$, $\overline{\Gamma}_2$ and $(1, \infty)$ are $I + O(e^{-N})$ as $N\to\infty$, while those on $\partial B(-1, \delta) \cup \partial B(1, \delta)$ are $I + O(1/N)$ uniformly for $-1 + N^{-2} \log\log N < x < -1 + \varepsilon$. Consequently, the standard small-norm RH problem directly implies 
\begin{equation}\label{Rasymp}
R(z) = I+O(N^{-1}) \quad \textrm{and} \quad R'(z) = O(N^{-1}), \qquad \textrm{as } N\to\infty, 
\end{equation}
uniformly for $z\in \mathbb{C}\setminus \Gamma_R$ and $-1 + N^{-2}\log\log N < x < -1 + \varepsilon$. Here $\varepsilon > 0$ is a constant that can be sufficiently small.

\subsection{Local Parametrix near $1$}

In this subsection, we construct the local parametrix near $1$, i.e., the Case (IV) illustrated in Figure \ref{fig:contours 4}, where the jump point $x$ is close to $1$. First, we consider the following RH problem in $B(-1, \delta)$.

\begin{rhp} \label{rhp for local parametrix near -1 in the edge case}
\hfill
\begin{itemize}
\item[(a)] $P_{-1}: B(-1, \delta) \setminus ([-1, 1] \cup \Gamma_1 \cup \overline{\Gamma}_1)\to \mathbb{C}^{2\times 2}$ is analytic.
\item[(b)] $P_{-1}$ satisfies the same jump relations as $S$ on $B(-1, \delta)\cap ([-1, 1] \cup \Gamma_1 \cup \overline{\Gamma}_1)$.
\item[(c)] For $z\in \partial B(-1, \delta)$, we have the matching condition
\begin{equation}\label{matching condition: P-1Pinfty-soft}
P_{-1}(z)P^\infty(z)^{-1} = I+O(1/N), \qquad \textrm{as } N\to\infty.
\end{equation}

\item[(d)] As $z\to - 1$, $P_{-1}(z)$ has the same asymptotic behaviours as $S(z)$.
\end{itemize}
\end{rhp}

As in RH problem \ref{sec3.4-RHP1}, the local parametrix near $-1$  is given by the Bessel parametrix. 
Next, we discuss the local parametrix near  $z=1$, which encloses the jump point $x$. The corresponding RH problem is as follows.

\begin{rhp} \label{rhp of the local parametrix for the edge regime near 1}
\hfill
\begin{itemize}
\item[(a)] $P_{1}: B(1, \delta) \setminus ([-1, \infty) \cup \Gamma_1 \cup \overline{\Gamma}_1 \cup \Gamma_2 \cup \overline{\Gamma}_2)\to \mathbb{C}^{2\times 2}$ is analytic.
\item[(b)] $P_{1}$ satisfies the same jump relations as $S$ on $B(1, \delta)\cap ([-1, 1] \cup \Gamma_1 \cup \overline{\Gamma}_1 \cup \Gamma_2 \cup \overline{\Gamma}_2)$.
\item[(c)] For $z\in \partial B(1, \delta)$, we have the matching condition
\begin{equation} \label{mathcing condition near 1}
P_{1}(z)P^\infty(z)^{-1} = I+O(N^{-1/3}), \qquad \text{ as } N\to\infty.
\end{equation}

\item[(d)] As $z\to 1$, $P_{1}(z)$ has the same asymptotic behaviours as $S(z)$.

\end{itemize}
\end{rhp}

Next, we solve this RH problem by means of the model RH problem in Appendix \ref{Appendix:subsec:modelrhp-softedge}. 

We first define
\begin{eqnarray}\label{f-soft}
f(z) = \big(-\frac{3}{2}\xi(z)\big)^{2/3},
\end{eqnarray}
where $\xi(z)$ is given in \eqref{xi(z)}. A direct computation (see also \cite[(5.51)]{CG2021} yields
\begin{eqnarray} \label{eq:f-expan-soft}
f(z) = 2^{-1/3} (z-1) - \frac{10}{2^{1/3}}(z-1)^2 + O((z-1)^3), \quad \text{ as } z\to 1,
\end{eqnarray}
which shows that $f(z)$ is a conformal map in a neighborhood of $z=1$. Denote
\begin{equation} \label{def:un-x-soft}
u_{N, x} = -N^{2/3} f(x) = \big(\frac{3N}{2}|\xi_+(x)|\big)^{2/3},
\end{equation}
where $x$ is the jump point in Case (IV); see Figure \ref{fig:contours 4} for an illustration. Obviously, we have 
\begin{equation}\label{u}
u_{N, x} = O(N^{2/3}(1-x)), \quad \text{ as } N\to\infty,
\end{equation}
uniformly for $x\in (1-\varepsilon, 1- N^{-2/3}\log \log N)$, so that $u_{N, x} \to \infty$ as $N\to \infty$ for $x$ in this regime.

Define
\begin{equation}\label{lambdadef-soft}
\lambda_{x}(z) = -\frac{f(z)}{f(x)},
\end{equation}
which is also a conformal map near $z = 1$, mapping $z=1$ and $z = x$ to $0$ and $-1$, respectively. Furthermore, the following approximation holds.
\begin{equation}\label{lambda-soft}
\lambda_{x}(z) = O((1-x)^{-1}), \quad \text{ as } x\to 1,
\end{equation}
uniformly for $z\in\partial B(1, \delta)$. 

The jump contour $\Sigma_S$ of $S$ near $1$  is now chosen so that its intersection with $ B(1, \delta)$ is mapped by $\lambda_x$ to the jump contours $\Sigma_{\Phi,1}$ and $\Sigma_{\Phi, 2}$ appearing in RH problem \ref{model-rhp-soft-edge}. Then the following lemma provides the solution to RH problem \ref{rhp of the local parametrix for the edge regime near 1}


\begin{lemma}
Let $\lambda_x(z)$ be defined in \eqref{lambdadef-soft},  and $\Phi = \Phi_{\text{soft}}$ be the solution to the model RH problem \ref{model-rhp-soft-edge}. Then, the solution to RH problem \ref{rhp of the local parametrix for the edge regime near 1} is given by
\begin{equation}\label{P-soft}
P_{1}(z) = E_{1}(z)\Phi(\lambda_x(z), u_{N, x}) (z-1)^{-\frac{\alpha}{2}\sigma_3},
\end{equation}
where  $E_{1}(z)$ is an analytic function in $B(1, \delta)$ defined as
\begin{equation}\label{E-soft}
E_{1}(z) = P^{\infty}(z) (z+1)^{\frac{\alpha}{2}\sigma_3} M_1(\lambda_x(z))^{-1},
\end{equation}
with 
\begin{equation}\label{def of M}
M_1(\lambda) = \lambda^{-\frac{1}{4}\sigma_3} B\left(\frac{1+e^{-\pi i/2}\sqrt{\lambda}}{\sqrt{\lambda+1}}\right)^{-i\sqrt{2}\gamma \sigma_3}, \quad \lambda \in \mathbb{C}\setminus (-\infty, 0],
\end{equation}
and $P^\infty$ defined in \eqref{Pinfty}.
\end{lemma}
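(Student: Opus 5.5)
The proof has three parts: show that $E_1$ is analytic in $B(1,\delta)$, check that $P_1$ has the jumps of $S$, and verify the matching condition \eqref{mathcing condition near 1} on $\partial B(1,\delta)$. This is the same scheme as Lemma \ref{lemma: local parametrix near -1} and as \cite[Sec. 9]{CFL2021}, but with the soft-edge model RH problem \ref{model-rhp-soft-edge} in place of the hard-edge one.

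\textbf{Analyticity of $E_1$.} From \eqref{E-soft}, $E_1$ can only jump on $(1-\delta,1+\delta)$, where $\lambda_x$ is real.
\begin{itemize}
\item On $(1,1+\delta)$, $\lambda_x>0$ because $f(x)<0$ for $x<1$ and $f(z)>0$ for $z>1$. There $P^\infty$ has no jump, and neither does $M_1$.
\item On $(1-\delta,1)$, $\lambda_x\in(-\infty,0)$. Here $P^\infty$ jumps by $J_2(x;\gamma)$ for $x<x_{\rm jump}$ and by $J_2(x;0)$ for $x>x_{\rm jump}$.
\item The branch cut of $\lambda^{-\sigma_3/4}$ on $(-\infty,0)$ gives $M_{1,-}^{-1}M_{1,+}$ a jump of the form $\begin{pmatrix}0&1\\-1&0\end{pmatrix}$, up to the $e^{\pm\alpha\pi i}$ factors absorbed by $(z+1)^{\alpha\sigma_3/2}$.
\item The factor $\bigl(\frac{1+e^{-\pi i/2}\sqrt\lambda}{\sqrt{\lambda+1}}\bigr)^{-i\sqrt2\gamma\sigma_3}$ has a branch point at $\lambda=-1$, i.e. at $z=x$. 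It produces the extra $e^{\pm\sqrt2\pi\gamma\sigma_3}$ jump on $(-\infty,-1)$, which reproduces the $\gamma$-dependence of $J_2$ for $x<x_{\rm jump}$.
\end{itemize}
I will verify $E_{1,-}^{-1}E_{1,+}=I$ on each subinterval by direct multiplication, using $\sigma_1a^{\sigma_3}=a^{-\sigma_3}\sigma_1$. The points $z=1$ and $z=x$ are at worst $O(|z-1|^{-1/2})$ and bounded singularities respectively, since $\gamma$ is real and the power $-i\sqrt2\gamma$ is purely imaginary. They are therefore removable.

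\textbf{Jumps of $P_1$.} Since $E_1$ is analytic and the contour is chosen so that $\lambda_x$ maps $\Sigma_S\cap B(1,\delta)$ onto the model contours, the jumps of $P_1$ are $(z-1)^{\frac\alpha2\sigma_3}J_\Phi(z-1)^{-\frac\alpha2\sigma_3}$. To compare these with $J_S$, I will use
$$N\xi(z)=-\tfrac23N f(z)^{3/2}=-\tfrac23 u_{N,x}^{3/2}(-\lambda_x(z))^{3/2},$$
which follows from \eqref{f-soft}, \eqref{def:un-x-soft} and \eqref{lambdadef-soft}. With this, the entries $e^{\pm 2N\xi}$ in $J_1,J_3$ become exactly the exponentials $e^{\pm\frac43 u^{3/2}(\cdot)^{3/2}}$ in the model RH problem. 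The $(x+1)^{\pm\alpha}$ factors are handled by the analytic, nonvanishing function $(z+1)^{\alpha}$ near $1$. Here I would double-check the normalisation in \eqref{P-soft}, which writes $(z-1)^{-\frac\alpha2\sigma_3}$, against the $(z+1)^{\frac\alpha2\sigma_3}$ appearing in $E_1$. The local behaviour (d) follows from boundedness of $\Phi$ near $0$ and its logarithmic behaviour at $-1$.

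\textbf{Matching condition (main obstacle).} On $\partial B(1,\delta)$ we have
$$P_1P^{\infty,-1}=E_1\,\Phi(\lambda_x;u)\,e^{\frac23u^{3/2}(-\lambda_x)^{3/2}\sigma_3}M_1(\lambda_x)^{-1}E_1^{-1}.$$
Two ingredients are needed.
\begin{itemize}
\item \emph{Large-$u$ asymptotics of $\Phi$.} As $u\to\infty$, uniformly for $|\lambda|\ge c$,
$$\Phi(\lambda;u)e^{\frac23u^{3/2}(-\lambda)^{3/2}\sigma_3}=\bigl(I+O(u^{-3/2}|\lambda|^{-1})\bigr)M_1(\lambda)$$
up to conjugation by $u^{\pm\sigma_3/4}$. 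This comes from the model RH problem's own small-norm analysis in the appendix: rescale $\lambda\mapsto u\lambda$, build Airy parametrices at $0$, and handle $-1$ by a confluent-hypergeometric or direct argument.
\item \emph{Boundedness of $E_1$ on the circle.} One needs $E_1^{\pm1}=O(u^{1/4})$-type bounds there, coming from $\lambda_x^{\sigma_3/4}$ with $|\lambda_x|\asymp(1-x)^{-1}$.
\end{itemize}
Combining these, the error is $O(u^{-3/2}|\lambda_x|^{-1}\cdot|\lambda_x|^{1/2})$. Since $u_{N,x}\asymp N^{2/3}(1-x)$ and $|\lambda_x|\asymp(1-x)^{-1}$, this gives $O(N^{-1}(1-x)^{-1})$, which is at most $O(N^{-1/3})$ for $1-x\ge N^{-2/3}\log\log N$. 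That establishes \eqref{mathcing condition near 1}.

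The delicate point is making the model-problem estimate uniform in $u\to\infty$ while keeping track of the conjugating powers $u^{\sigma_3/4}$. I would first try to quote the corresponding estimate from \cite{CFL2021} or \cite{Dai-Lu-JUE}, adapted to the reversed contour orientation.
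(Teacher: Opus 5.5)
Your three steps match the paper's argument. They are: analyticity of $E_1$ (the jumps of $P^\infty(z)(z+1)^{\frac{\alpha}{2}\sigma_3}$ on $(1-\delta,x)$ and $(x,1)$ coincide with those of $M_1(\lambda_x(z))$ on $(-\infty,-1)$ and $(-1,0)$, and $z=1$, $z=x$ are removable); the jumps of $P_1$ via RH problem \ref{model-rhp-soft-edge}; and the matching on $\partial B(1,\delta)$, which the paper omits altogether. Your final bound $O(u_{N,x}^{-3/2}|\lambda_x|^{-1/2})=O(N^{-1}(1-x)^{-1})$, hence $O(N^{-1/3})$ for $1-x\geq N^{-2/3}\log\log N$, is the right one. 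However, your normalizations are inconsistent, and as literally written both the jump check and the matching computation fail.

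\textbf{The exponent identity has the wrong sign.} From $f(z)=-\lambda_x(z)f(x)$ and $u_{N,x}=-N^{2/3}f(x)$ you get $u_{N,x}\lambda_x(z)=N^{2/3}f(z)$, so $N\xi(z)=-\frac{2}{3}(u_{N,x}\lambda_x(z))^{3/2}$. With your $(-\lambda_x)^{3/2}$ the exponent is not even real on $(1,1+\delta)$, where $\lambda_x>0$ and $\xi<0$.

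\textbf{The matching formula carries a spurious exponential.} $\Phi_{\text{soft}}$ has $e^{\pm\frac{4}{3}(\lambda u)^{3/2}}$ in its jumps. A short expansion of $M_1$ shows that its normalization at infinity is exactly $(I+O(\lambda^{-1}))M_1(\lambda)$, with no exponential and no $u$-dependence. Since $P_1$ contains no factor $e^{-N\xi\sigma_3}$, one simply has $P_1(P^{\infty})^{-1}=E_1\Phi(\lambda_x;u)M_1(\lambda_x)^{-1}E_1^{-1}$, with no factor $e^{\frac{2}{3}u^{3/2}(\cdot)^{3/2}\sigma_3}$. The input you need is $\Phi(\lambda;u)M_1(\lambda)^{-1}=I+O(u^{-3/2}|\lambda|^{-1})$, uniformly as $u\to\infty$ for $\lambda$ away from $0,-1$, with no $u^{\pm\sigma_3/4}$ conjugation. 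You are mixing in the constant-jump normalization used for $P_{x_1}$, where $e^{-N\xi\sigma_3}$ sits outside. This uniform estimate comes from the large-$u$ analysis of the model problem in \cite{CFL2021}, not from the paper's appendix. That analysis uses an Airy parametrix at $0$ and a confluent hypergeometric parametrix at $-1$, both with error $O(u^{-3/2})$.

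\textbf{The size of $E_1$ is misstated.} On the circle, $E_1^{\pm1}=O(|\lambda_x|^{1/4})=O((1-x)^{-1/4})$, not $O(u^{1/4})$. The former is what your final count actually uses.

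\textbf{The $(z-1)$ factor must be $(z+1)$.} This settles the point you flagged: the last factor in $P_1$ must be $(z+1)^{-\frac{\alpha}{2}\sigma_3}$, which is analytic and nonvanishing in $B(1,\delta)$. With $(z-1)^{-\frac{\alpha}{2}\sigma_3}$, the matching would leave a factor $\left(\frac{z+1}{z-1}\right)^{\frac{\alpha}{2}\sigma_3}$ that is not close to $I$, and $P_1$ would acquire a spurious cut. For the same reason no $e^{\pm\alpha\pi i}$ factors occur at the soft edge: $(z+1)^{\frac{\alpha}{2}\sigma_3}$ only strips the $(x+1)^{\pm\alpha}$ from $J_2$.

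With these corrections the argument goes through.
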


\begin{proof}
We first need to establish the analyticity of $E_{1}(z)$ in $B(1, \delta)$.  In view of its expression, it suffices to verify the analyticity of $E_{1}(z)$ across $[-1, \infty)\cap B(1, \delta)$. Since  $P^\infty(z)$ and $M_1(\lambda_x(z))$ obey the same jump conditions in $B(1, \delta)$, it follows that $E_{1}(z)_{-}^{-1}E_{1}(z)_{+} = I$ for $z\in (1-\delta, 1+\delta)$. We also note that $\lambda_x(z) \to 0$ as $z\to 1$, it follows that $E_{1}(z)$ is of order $O((z-1)^{-1/4})$; as $z\to x$, $E_{1}(z)$ is of order $O(1)$. Thus $x$ and $1$ are removable singularities, so $E_{1}(z)$ is analytic in $B(1, \delta)$. 

Then, using RH problem \ref{model-rhp-soft-edge}, it is not difficult to see that $P_{1}(z)$ satisfies the same jump conditions as $S(z)$ in $B(-1, \delta)$.

It therefore remains to establish that $P_{1}(z)$ fulfills the matching condition  \eqref{mathcing condition near 1}. It is almost the same as that of Lemma \ref{lemma: local parametrix near -1}, so we omit the details. This concludes the proof of the lemma.

\end{proof}

Now that all the global and local parametrices have been constructed, we proceed to define the final transformation by
\begin{eqnarray}\label{transform map: edge S to R}
R(z) = \left\{\begin{array}{ll} S(z)P^\infty(z)^{-1} & \text{ for } z\in \mathbb{C}\setminus (\Gamma_S\cup B(1, \delta)\cup B(-1, \delta)),\\
S(z)P_{1}(z)^{-1} & \text{ for } z\in B(1, \delta), \\
S(z)P_{-1}(z)^{-1} & \text{ for } z\in B(-1, \delta),
\end{array}\right.
\end{eqnarray}
which satisfies the following RH problem.

\begin{rhp}
\hfill
\begin{itemize}
\item[(a)] $R: \mathbb{C}\setminus \Sigma_R \to \mathbb{C}$ is analytic, where
\begin{equation}
\Sigma_R = \Sigma_S \cup \partial B(\pm 1, \delta) \setminus \{[-1, \infty) \cup B(\pm 1, \delta)\}
\end{equation}
and the orientations on $\partial B(\pm 1, \delta)$ are taken to be clockwise.
\item[(b)] On $\Sigma_R$, $R$ satisfies the following jump conditions
\begin{eqnarray}
R_+(z) = 
\left\{
\begin{array}{ll}
R_{-}(z) P^{\infty}(z)
    \left(
      \begin{array}{cc}
        1 & 0 \\
        e^{-\sqrt{2}\pi\gamma} (z+1)^{-\alpha} e^{-2N\xi(z)} & 1 \\
      \end{array}
    \right) (P^\infty(z))^{-1}, & z\in \Gamma_1 \cup \overline{\Gamma}_1, \\
R_{-}(z) P_{-1}(z)P^{\infty}(z)^{-1}, & z\in B(-1, \delta),\\
R_{-}(z) P_1(z)P^{\infty}(z)^{-1}, & z\in B(1, \delta),\\
R_{-}(z) P^{\infty}(z) \left(\begin{array}{ll}
1 & (x+1)^\alpha e^{2N\xi(x)}\\
0 & 1
\end{array}
\right) (P^\infty(z))^{-1}  & x>1.
\end{array}
\right.
\end{eqnarray}
\item[(c)] As $z\to\infty$,
\begin{equation}
R(z) = I+O(z^{-1}).
\end{equation}
\end{itemize}
\end{rhp}

Using the definition of $\xi(z)$ in \eqref{xi(z)} together with \eqref{matching condition: P-1Pinfty-soft} and \eqref{mathcing condition near 1}, we know that as $N\to\infty$, the jumps on $\Gamma_1$, $\overline{\Gamma}_1$ and $(1, \infty)$ are $I + O(e^{-N})$ as $N\to\infty$; while the jumps on $\partial B(-1, \delta) \cup \partial B(1, \delta)$ are uniformly $I + O(N^{-1/3})$ in the range $1 - \varepsilon < x < 1- N^{-2/3}\log\log N$. By the standard theory of small-norm RH problems, it follows that
\begin{equation}\label{Rasymp}
R(z) = I+O(N^{-1/3}) \quad \textrm{and} \quad R'(z) = O(N^{-1/3}), \qquad \textrm{as } N\to\infty, 
\end{equation}
uniformly for $z\in \mathbb{C}\setminus \Gamma_R$ and $1 - \varepsilon < x < 1 - N^{-2/3}\log\log N$, where $\varepsilon > 0$ is a sufficiently small constant.

\section{Asymptotics of Hankel Determinants} \label{Sec:Asy-hankle}


Below, we will use the steepest descent analysis and the differential identities provided in Proposition \ref{differential identities of Hankel determinants} to obtain the asymptotic formulae for the corresponding Hankel determinants. These results in  \ref{Asymptotics in the Separated Regime}, \ref{Asymptotics in the merging regime}, and \ref{Asymptotics in the edge} may be compared with Theorems 1.8, 1.9, and 1.10 in \cite{CFL2021}, respectively.

\subsection{Asymptotics of Hankel Determinants in the Separated Regime}

We begin with Case (I), in which the weight function has jump discontinuities at $x_1$  and $x_2$, neither close to each other nor to the endpoints $\pm 1$. The asymptotics of the corresponding Hankel determinants follow from \cite[Theorem 1.2]{CG2021}, which applies to a family of Laguerre-type weight functions with finitely many Fisher-Hartwig singularities in $(-1,1)$. The weight function in \cite[Theorem 1.2]{CG2021} is required to be analytic in a fixed neighborhood of $[-1, \infty)$ except for these Fisher-Hartwig singularities. We need a slight generalization of their result to allow the weight to possess finitely many additional singularities in the complex plane that may be close to $[-1,1]$. More precisely, we consider weight functions that are analytic and uniformly bounded on the following domain:
\begin{eqnarray}
  \mathcal{S}_N &=& \left\{ z\in \mathbb{C}: |\Re z| \leq 1-3\delta_N, |\Im z| < \varepsilon_N / 2\right\} \nonumber\\
& & \cup \left\{ z\in\mathbb{C}: \Re z \geq 1-3\delta_N, |\Im z| < 3\delta_N \right\}\nonumber\\
& & \cup \left\{ z\in\mathbb{C}: -1-3\delta_N \leq \Re z \leq -1+3\delta_N, |\Im z| < 3\delta_N \right\}. 
\end{eqnarray}
where $\varepsilon_N = N^{-1+\alpha}$ and $\delta_N = N^{-\alpha / 2}$ with $0<\alpha<2/3$.

\begin{proposition}\label{Asymptotics in the Separated Regime}
Let $\gamma_1, \gamma_2 \in \mathbb{R}$ and  $-1<x_1<x_2<1$. Assume that $w=w_N$ is a sequence of functions that are real analytic on $[-1, \infty)$ and uniformly bounded on $\mathcal{S}_N$. Then, we have
\begin{eqnarray} \label{equation: Asymptotics in seperated regime}
& & \log\frac{D_N(x_1, x_2; \gamma_1, \gamma_2; w_N)}{D_N(x_1, x_2; \gamma_1, \gamma_2; 0)} \nonumber\\
&=& N\int \frac{w_N(x) \sqrt{1-x}}{\pi \sqrt{1+x}}dx + \frac{\alpha}{2\pi} \int_{-1}^1 \frac{w_N(x)}{\sqrt{1-x^2}}dx - \frac{\alpha}{2}w_N(-1) \nonumber\\
& & + \frac{1}{2}\sigma^2(w_N) + \sum_{j=1}^2 \frac{\gamma_j}{\sqrt{2}} \sqrt{1-x_j^2} \mathcal{U} w_N(x_j) + o(1)
\end{eqnarray}
as $N\to\infty$. Here $\mathcal{U}w_N$ is the finite Hilbert transform defined by
\begin{equation} \label{def-Hilbert-transform}
    \mathcal{U}w_N(x) = \frac{1}{\pi} P. V. \int_{-1}^1 \frac{w_N(t)}{x-t}\frac{dt}{\sqrt{1-t^2}},
\end{equation}
and
\begin{equation}
\sigma^2(w; t) = \frac{1}{2\pi^2} \int\int_{[-1,1]^2} w'(x)t'(y) \Sigma(x, y)dxdy =-\frac{1}{2\pi} \int_{-1}^1 w'(t) \mathcal{U} t(s) \sqrt{1-s^2}ds, \label{eq:sigma-def}
\end{equation}
with $\sigma^2(w) := \sigma^2(w; w)$ and $\Sigma(x, y)$ given in \eqref{correlation kernel of X}.
\end{proposition}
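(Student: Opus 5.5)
We will derive the formula by deforming the potential from $w\equiv 0$ to $w=w_N$ along the linear family $w_s = s\,w_N$, $s\in[0,1]$, and integrating a differential identity in $s$. This is the approach of \cite[Sec. 5]{CFL2021} and \cite{CG2021}. For the weight $e^{s w_N} t(x)$, the standard identity reads
\begin{equation*}
\frac{d}{ds}\log D_N(x_1,x_2;\gamma_1,\gamma_2;s w_N) = \frac{1}{2\pi i}\int_{-1}^{\infty} w_N(x)\, e^{s w_N(x)} t(x)\,\bigl(Y^{-1}(x)Y'(x)\bigr)_{21}\,dx .
\end{equation*}
Here $Y=Y_N(\cdot;s)$ is the solution of RH problem \ref{rhp for Y} with $t$ replaced by $e^{sw_N}t$. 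It follows from Heine's identity \eqref{Heine identity} together with the Christoffel--Darboux representation of the one-point density $\frac{1}{2\pi i}e^{sw_N}t\,(Y^{-1}Y')_{21}$. The analysis of \cite{CG2021} applies with $w$ replaced by $s w_N$, and we run the steepest descent of Section \ref{Steepest Descent Analysis for the RH Problem}, Case (I), with this weight. The global parametrix \eqref{Pinfty} then carries the Szeg\H{o}-type factor $D_{t}$ with $t=sw_N$. We use Airy and Bessel parametrices at $\pm1$ and the confluent hypergeometric parametrices of \cite{CG2021} at $x_1,x_2$.

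The key step is to make this analysis uniform in $N$ for weights that are only analytic and bounded on the shrinking domain $\mathcal{S}_N$. First, we open the lenses inside $\mathcal{S}_N$. Over $|\Re z|\le 1-3\delta_N$ the lens has height of order $\varepsilon_N=N^{-1+\alpha}$. On the lips, $|e^{-2N\xi}|\le e^{-cN\varepsilon_N\sqrt{1-|x|}}$, which is $e^{-cN^{\alpha}\delta_N^{1/2}}=e^{-cN^{3\alpha/4}}$ and hence still superpolynomially small. Near $\pm1$ we take discs of radius $2\delta_N$ rather than $\delta$. The Airy and Bessel matching conditions then give jumps $I+O\bigl((N\delta_N^{3/2})^{-1}\bigr)$ and $I+O\bigl((N\delta_N)^{-1}\bigr)$ respectively. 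Both are $o(1)$ because $\alpha<2/3$; this is exactly where that constraint enters. The discs around $x_1,x_2$ keep a fixed radius, but there $w_N$ is analytic only in a strip of width $\varepsilon_N$. We therefore shrink those discs to radius of order $\varepsilon_N$. The matching error becomes $O\bigl((N\varepsilon_N)^{-1}\bigr)=O(N^{-\alpha})$. Since $D_{t}$ is bounded on $\mathcal{S}_N$, because $w_N$ is uniformly bounded there, the small-norm theory gives $R=I+o(1)$ uniformly in $s\in[0,1]$.

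Next, we substitute $Y = e^{-\frac{Nl}{2}\sigma_3}R\,P\,(\text{lens factors})\,e^{Ng\sigma_3}e^{\frac{Nl}{2}\sigma_3}$ into the identity and split the integral into the bulk part and the small neighbourhoods. The leading term comes from $(P^{\infty})^{-1}(P^\infty)'$ together with the $g$-function. Integrated against $w_N$ in $x$ and then in $s$, it yields the four terms in order. The term of order $N$ comes from $\rho$. The $\alpha$-terms come from $D_\alpha$, i.e. the mean $\mu(w_N)$ in \eqref{eq: LUE-Global-CLTmean}. The term $\frac12\sigma^2(w_N)$ comes from $D_{sw_N}$ via $\int_0^1 s\,ds$. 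The Hilbert-transform terms come from $D_\gamma$, since the logarithmic derivative of $D_\gamma$ integrated against $w_N$ produces $\frac{\gamma_j}{\sqrt2}\sqrt{1-x_j^2}\,\mathcal{U}w_N(x_j)$. These are exactly the computations of \cite[Theorem 1.2]{CG2021}, so we cite them rather than redo them.

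The main obstacle is the error terms. Contributions from the discs near $\pm1$ and $x_j$, and from $R-I$, must be shown to be $o(1)$ after integration against $w_N$. Since $w_N$ is uniformly bounded and the discs have measure $O(\delta_N)$ or $O(\varepsilon_N)$, we would bound each integrand by its local parametrix size, e.g. $O(N)$ times the disc size. That is not obviously small, so instead we would deform the $x$-integral off the real line into $\mathcal{S}_N$ to a contour at distance of order $\varepsilon_N$ or $\delta_N$. There $Y^{-1}Y'$ is controlled by $P^\infty$ and $R$ alone, and the $O(N^{-\alpha})$ errors then sum to $o(1)$. This last step is the one we would check most carefully, following the analogous argument in \cite[Sec. 6]{CFL2021}.
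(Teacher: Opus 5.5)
The overall architecture is sound. The paper's own proof subtracts two instances of \cite[Thm.~1.2]{CG2021} (with $W=w_N$ and $W=0$) and then asserts that a \cite{CFL2021}-type argument degrades the error to $o(1)$ on $\mathcal{S}_N$. Your $s$-interpolation, with lenses and discs shrunk into $\mathcal{S}_N$, is the natural way to carry out that deferred step.

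The gap is in the step you flag, and as you describe it the step fails: the real-line integrand $w_Ne^{sw_N}t\,(Y^{-1}Y')_{21}$ cannot simply be pushed off $\mathbb{R}$. There are two obstructions.
\begin{itemize}
\item The function $t$ jumps at $x_1,x_2$. Any deformation therefore stays pinned at $x_j$, inside the hypergeometric discs, and you are back to the crude bound of $O(N)$ times the disc size.
\item Since $(Y^{-1}Y')_{21}=e^{2Ng+Nl}(T^{-1}T')_{21}$, the continuation of the integrand is, up to constants, $(z+1)^{\alpha}e^{sw_N}e^{2N\xi}(T^{-1}T')_{21}$. Near the bulk $\Re\xi\approx\pi\rho(x)|\Im z|>0$ on both sides of $(-1,1)$, and $(T^{-1}T')_{21}$ is of order one outside the lenses. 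So on a contour at height $\asymp\varepsilon_N$ the integrand has size $e^{cN\varepsilon_N}$. It is not controlled by $P^\infty$ and $R$ alone.
\end{itemize}

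The missing idea is to pass to the $(2,2)$ entry. From $Y_+=Y_-\begin{pmatrix}1&e^{sw_N}t\\0&1\end{pmatrix}$ one gets, on $(-1,\infty)$,
\[
e^{sw_N}t\,(Y^{-1}Y')_{21}=(Y^{-1}Y')_{22,+}-(Y^{-1}Y')_{22,-}.
\]
Here $(Y^{-1}Y')_{22}$ is analytic off $[-1,\infty)$, is $O(\log|z-x_j|)$ at $x_j$, and is integrable at $-1$. Hence
\[
\partial_s\log D_N=-\frac{1}{2\pi i}\oint_{\mathcal{C}}w_N(Y^{-1}Y')_{22}\,dz
\]
for a suitably oriented closed contour $\mathcal{C}\subset\mathcal{S}_N$ enclosing $[-1,\infty)$. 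The portion of $\mathcal{C}$ along $(1+2\delta_N,\infty)$ contributes only exponentially small terms.

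Moreover, $(Y^{-1}Y')_{22}=(T^{-1}T')_{22}-Ng'$ carries no exponential factors. Since nothing is pinned any more, $\mathcal{C}$ can run at height $\asymp\varepsilon_N$ outside the lenses and outside the $x_j$-discs (take their radius a small fixed fraction of $\varepsilon_N$), and at distance $\asymp\delta_N$ around $\pm1$. On $\mathcal{C}$ one has $T=RP^\infty$.

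Because $(Q^{-1}Q')_{22}=0$, and conjugation by $D^{\sigma_3}$ does not change the $(2,2)$ entry, the $(2,2)$ entry of $(P^\infty)^{-1}(P^\infty)'$ is exactly $D'/D$. So $-Ng'$ together with $D'/D$, where $D=D_\alpha D_{sw_N}D_\gamma$, gives your four groups of terms directly after $\int_0^1ds$. You do not need to import them from \cite{CG2021}.

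The remainder $\bigl((P^\infty)^{-1}R^{-1}R'P^\infty\bigr)_{22}$ contributes $O(N^{-\alpha}+N^{-1}\delta_N^{-2})=o(1)$, with $\alpha$ here the exponent in $\varepsilon_N$. The local parametrices thus enter only through $R$, which is why the shrinking discs are harmless.

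Two small slips:
\begin{itemize}
\item The Bessel matching error is $O\bigl((N\delta_N^{1/2})^{-1}\bigr)$, not $O\bigl((N\delta_N)^{-1}\bigr)$.
\item Both edge matchings are $o(1)$ for any exponent below $4/3$, so they are not where the restriction $\alpha<2/3$ enters.
\end{itemize}
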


\begin{proof}

Taking the parameters in \cite[Theorem 1.2]{CG2021} to be  $m=2, \alpha_0 = \alpha, \alpha_1 = \alpha_2 = 0; V(x)=2(x+1); \psi_V(x) = 1/ \pi$ and $t_k = x_k, \beta_k = -\frac{i\sqrt{2}\gamma_k}{2}$ for $k=1, 2$, one immediately gets the asymptotics. The formula in \eqref{equation: Asymptotics in seperated regime} is subsequently deduced by contrasting the asymptotics for the two weight functions $W(x)=w_N(x)$ and $W(x) = 0$, with $W(x)$ being the weight function in \cite[Theorem 1.2]{CG2021}.

According to \cite[Theorem 1.2]{CG2021}, when $w_N$ is analytic in a fixed neighborhood of $[-1, \infty)$, the error term in \eqref{equation: Asymptotics in seperated regime} is $O(\frac{\log N}{N})$. In our situation, however, we merely assume that $w_N$ is analytic and uniformly bounded in a shrinking neighborhood $\mathcal{S}_N$ of $[-1, \infty)$. Therefore, by employing a proof analogous to that of \cite[Prop. 7.5]{CFL2021}, we can prove that the error term deteriorates slightly and decays to $o(1)$ as $N\to\infty$.
\end{proof}

\subsection{Asymptotics of the Hankel Determinants in the Merging Regime}


The asymptotic behaviour of the Hankel determinants in the merging regime is described below for Case (II), where the two jumps in the weight function are close together but not near $\pm 1$. These results are obtained by following the methods of \cite[Sec. 7.4-7.5]{CFL2021} and \cite[Sec. 4.2]{Dai-Lu-JUE}, and are stated in the proposition that follows; the proofs are not included here.
\begin{proposition}\label{Asymptotics in the merging regime}
Let $\gamma_1, \gamma_2\in\mathbb{R}$, $-1<x_1<x_2<1$, $t(x)$ be real analytic on $[-1,1]$, and $D_N(x_1, x_2; \gamma_1, \gamma_2; t)$ be defined in \eqref{eq1}. Then, as $N\to\infty$, we have
\begin{eqnarray}
 \log \frac{D_N(x_1, x_2; \gamma_1, \gamma_2; t)}{D_N(x_1; \gamma_1+\gamma_2; t)} = 
\sqrt{2} \pi N\gamma_2 \int_{x_1}^{x_2}\psi_V(u) \sqrt{\frac{1-u}{1+u}}du - \gamma_1\gamma_2\max\{0, \log |x_1-x_2|N\} + O(1),
\end{eqnarray}
where the error term is uniform for $-1+\delta<x_1<x_2<1-\delta, 0<x_2-x_1<\delta$ for $\delta$ sufficiently small.
\end{proposition}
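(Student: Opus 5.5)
The plan is to integrate the differential identity \eqref{di1} in the variable $y=x_2-x_1$, from $y=0$ (where $D_N(x_1,x_1;\gamma_1,\gamma_2;t)=D_N(x_1;\gamma_1+\gamma_2;t)$) up to $y=x_2-x_1$. A nonzero $t$ only contributes an analytic factor $e^{t}$ in the weight, which is absorbed into $D_w$ in $P^\infty$. The right-hand side of \eqref{di1} is evaluated through the chain of transformations $Y\mapsto T\mapsto S\mapsto R$ of Section \ref{Steepest Descent Analysis for the RH Problem}. Since $x_2$ lies inside $B(x_1,\delta)$ and the lens over $[x_1,x_2]$ is unopened, for $z$ near $x_2$ we have
\[
Y(z)=e^{-\frac{Nl}{2}\sigma_3}R(z)E_{N,y}(z)\widehat{\Phi}_{PV}(\lambda_y(z);s_{N,y})(1+z)^{-\frac{\alpha}{2}\sigma_3}e^{-N\xi(z)\sigma_3}e^{Ng(z)\sigma_3}e^{\frac{Nl}{2}\sigma_3}.
\]
We then compute $(Y^{-1}Y')_{21}(x_2)$. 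The factors $e^{N(g-\xi)\sigma_3}$ combine with the prefactor $(1+x_2)^\alpha e^{-NV(x_2)}$ so that all exponentials cancel. This gives a decomposition
\[
\frac{d}{dy}\log D_N=\frac{1-e^{\sqrt2\pi\gamma_2}}{2\pi i}\Big[\lambda_y'(x_2)\big(\widehat\Phi_{PV}^{-1}\partial_\lambda\widehat\Phi_{PV}\big)_{21}(1)+\big(\widehat\Phi_{PV}^{-1}E_{N,y}^{-1}E_{N,y}'\widehat\Phi_{PV}\big)_{21}+\big(\widehat\Phi_{PV}^{-1}E^{-1}R^{-1}R'E\widehat\Phi_{PV}\big)_{21}\Big],
\]
with the entries taken in the appropriate (regularized) sense at $\lambda=1$.

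The first term is the model Painlevé V quantity. As in \cite[Sec.~7]{CFL2021}, it equals $\frac{d}{dy}$ of $\log$ of a PV tau-function $F(s_{N,y})$, with $s_{N,y}=-2Niy\sqrt{(1-x_1)/(1+x_1)}(1+O(y))$. The second term is expressed via $P^\infty$ and $\widehat\Phi^\infty$; since $P^\infty$ has $D_\gamma$ depending on $x_2$, it yields $\frac{d}{dy}$ of the leading $\sqrt2\pi N\gamma_2\int_{x_1}^{x_2}d\mu_L$, plus $O(1)$ contributions from $\partial_y\log D_\infty$ and similar terms, together with a term $\frac{\gamma_1\gamma_2}{y}$-type coming from the $\lambda^{\gamma_1/(\sqrt2 i)}$ factor. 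The third term is $O(1/N)\cdot|\widehat\Phi_{PV}|^2$. Using $R'=O(1/N)$ from \eqref{Asym for R in merging case} and boundedness of $\widehat\Phi_{PV}(1;s)$ uniformly in $s$ on the relevant ray, its integral over $y\le\delta$ is $O(1)$.

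The main obstacle is the uniform large- and small-$|s|$ asymptotics of the PV tau-function. For $|s|\to\infty$, i.e.\ $Ny\gg1$, one needs $\log F(s)=-\gamma_1\gamma_2\log|s|+O(1)$, uniformly along the imaginary ray. For $|s|\lesssim1$, $\log F=O(1)$. Once these are in hand, integrating from $0$ to $y$ gives $-\gamma_1\gamma_2\max\{0,\log(Ny)\}+O(1)$, because $\log|s_{N,y}|=\log(Ny)+O(1)$. I would import these asymptotics from \cite[Sec.~7.4--7.5]{CFL2021} (also \cite{Dai-Lu-JUE}), since the model problem $\widehat\Phi_{PV}$ is the same and only the conformal map $\lambda_y$ and $E_{N,y}$ change. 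A check is that $\lambda_y'(x_2)=1/y+O(1)$, so the singular $1/y$ pieces from the first and second terms combine into exactly $\frac{d}{dy}$ of the PV contribution.

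The remaining care is the behaviour of the integrand near $y=0$, where $x_1$ and $x_2$ merge. There $\widehat\Phi_{PV}$ is analytic in $s$ near $0$ and the total derivative is integrable. This gives the stated formula with an error uniform for $x_1,x_2\in(-1+\delta,1-\delta)$.
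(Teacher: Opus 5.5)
Your overall route is the one the paper relies on when it defers to \cite[Sec.~7.4--7.5]{CFL2021}: integrate \eqref{di1} in $y$ from the merged configuration $D_N(x_1,x_1;\gamma_1,\gamma_2;t)=D_N(x_1;\gamma_1+\gamma_2;t)$, express $(Y^{-1}Y')_{21}(x_2)$ through $R$, $E_{N,y}$ and $\widehat\Phi_{PV}$, and import the large- and small-$s$ behaviour of the Painlev\'e V quantity. However, your accounting of the three terms is wrong, and as written it does not add up.

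\textbf{The $E_{N,y}$-term is bounded.} The factor $E_{N,y}$ is analytic in $B(x_1,\delta)$ and has determinant one. It is also bounded on $\partial B(x_1,\delta)$ uniformly in $N$ and $y$. Indeed, $P^\infty$ is bounded there because the circle stays away from $x_1,x_2,\pm1$. The remaining $N$- or $y$-dependent factors, namely $|s|^{\frac{\gamma_1+\gamma_2}{\sqrt2 i}\sigma_3}$, $\lambda_y^{\frac{\gamma_1}{\sqrt2 i}\sigma_3}$, $(1-\lambda_y)^{\frac{\gamma_2}{\sqrt2 i}\sigma_3}$ and $e^{-N\xi_+(x_1)\sigma_3}$, all have purely imaginary exponents. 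Since $|x_2-x_1|\le\delta/2$, Cauchy's estimate gives $E_{N,y}^{-1}E_{N,y}'(x_2)=O(1)$. A $(2,1)$ entry of the form $(A^{-1}MA)_{21}$ with $\det A=1$ involves only the first column of $A$. So your second term is $O(1)$, under the same boundedness of $\widehat\Phi_{PV}(1;s)$ that you already use for the third term. It therefore cannot produce the density term $\sqrt2\pi N\gamma_2\,\psi_V(x_2)\sqrt{(1-x_2)/(1+x_2)}$. Differentiating in $z$ at $z=x_2$ does not see the $x_2$-dependence of $D_\gamma$; you are conflating $\partial_z$ with $\partial_y$. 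Nor can it produce any $\gamma_1\gamma_2/y$ singularity.

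\textbf{Both effects sit in the model term.} Since $\lambda_y'(x_2)=s_{N,y}^{-1}\,\frac{d s_{N,y}}{dy}$, your first term is exactly $\frac{d}{dy}G(s_{N,y})$ with $G'(s)=H(s)/s$. Here $H(s)$ is $\big(\widehat\Phi_{PV}^{-1}\partial_\lambda\widehat\Phi_{PV}\big)_{21}(1;s)$ times the prefactor in \eqref{di1}. Because $s_{N,y}=-2\pi i N\int_{x_1}^{x_2}d\mu_L$, what has to be imported is
$G(s)=\frac{i\gamma_2}{\sqrt2}\,s-\gamma_1\gamma_2\log|s|+O(1)$ as $s\to-i\infty$. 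The linear piece is real on $-i\mathbb{R}_+$ and equals $\sqrt2\pi N\gamma_2\int_{x_1}^{x_2}d\mu_L$ exactly. You also need integrability of $H(s)/s$ at $s=0$.

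Your stated target $\log F(s)=-\gamma_1\gamma_2\log|s|+O(1)$ drops the linear piece. Since the second term is in fact $O(1)$, the argument as you wrote it loses the leading term entirely.

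\textbf{Near $y=0$.} What is needed is not analyticity of $\widehat\Phi_{PV}$ in $s$. That is not even true of $\widehat\Phi_{PV}$ itself when $\gamma_1+\gamma_2\neq0$, because of the left factor $|s|^{\frac{\gamma_1+\gamma_2}{\sqrt2 i}\sigma_3}$. What is needed is enough decay of $H(s)$ as $s\to0$ to make $\int_0 H(\tau)\tau^{-1}\,d\tau$ converge, again imported from \cite{CFL2021}.

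With the leading and $\gamma_1\gamma_2$ contributions reassigned to the model term, your plan becomes the argument the paper cites.
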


\subsection{Asymptotics of Hankel Determinants near $\pm 1$}


Finally, we consider Case (III) and (IV), in which the weight function has only one jump point located near one of the endpoints $1$ or $-1$.  By employing the methods from \cite[Sec 4.3]{Dai-Lu-JUE} for Case (III) and \cite[Sec. 9.2]{CFL2021} for Case (IV), we obtain the following proposition.

\begin{proposition}\label{Asymptotics in the edge}
Let $\gamma\in [-M, M]$ with $M > 0$ be a positive constant, $w(x)$ be real analytic on $[-1,1]$, and $D_N(x; \gamma; w)$ be defined in \eqref{eq2}. Then, as $N\to\infty$, we have
\begin{equation}
\log \frac{D_N(x; \gamma; 0)}{D_N(x; 0; 0)} = \sqrt{2}\gamma N\int_{-1}^x \sqrt{\frac{1-s}{1+s}}ds+ \frac{\gamma^2}{2}\log N+\frac{\gamma^2}{4}\log (1+x) + O(1), \label{eq: aysmptotics near -1}
\end{equation}
where the error term is uniform for $ -1 + N^{-2}\log\log N \leq x \leq 1 - \varepsilon$. Similarly, as $N\to\infty$, we have
\begin{equation}
\log \frac{D_N(x; \gamma; 0)}{D_N(x; 0; 0)} = \sqrt{2}\gamma N\int_{-1}^x \sqrt{\frac{1-s}{1+s}}ds+ \frac{\gamma^2}{2}\log N+\frac{3\gamma^2}{4}\log (1-x) + O(1), \label{eq: aysmptotics near 1}
\end{equation}
where the error term is uniform for $-1+\varepsilon \leq x \leq  1 - N^{-2/3}\log\log N$.
\end{proposition}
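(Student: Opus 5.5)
We will prove Proposition \ref{Asymptotics in the edge} by integrating the differential identity \eqref{di2} in $x$, starting from a reference point where the asymptotics are already known. For the hard-edge statement \eqref{eq: aysmptotics near -1} we fix the reference point $x_0=-1+\varepsilon$, where Proposition \ref{Asymptotics in the Separated Regime} (with $\gamma_1=\gamma$, $\gamma_2=0$, or equivalently \cite[Theorem 1.2]{CG2021} with a single Fisher--Hartwig jump) gives the expansion with an $O(1)$ error. We then write
\begin{equation*}
\log\frac{D_N(x;\gamma;0)}{D_N(x;0;0)}=\log\frac{D_N(x_0;\gamma;0)}{D_N(x_0;0;0)}-\int_x^{x_0}\frac{d}{dt}\log D_N(t;\gamma;0)\,dt ,
\end{equation*}
noting that $D_N(t;0;0)$ does not depend on $t$. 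The integrand is evaluated through \eqref{di2} using the transformations $Y\mapsto T\mapsto S\mapsto R$ of Case (III). For the soft-edge statement \eqref{eq: aysmptotics near 1} we proceed symmetrically, with reference point $1-\varepsilon$ and the Case (IV) analysis.

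In Case (III), unwinding the transformations for $z$ near $t$ inside $B(-1,\delta)$ gives $Y=e^{-\frac{Nl}{2}\sigma_3}R\,E_{-1}\Phi(\lambda_t(z);u_{N,t})W^{-\sigma_3}(\cdots)e^{Ng\sigma_3}e^{\frac{Nl}{2}\sigma_3}$, with the lens factors $J_1,J_3$ included. The quantity $(Y^{-1}Y')_{21}$ then splits into three contributions.
\begin{enumerate}
\item A main term $\lambda_t'(t)\,(\Phi^{-1}\Phi')_{21}$, computed from the model problem \ref{model-rhp-hard-edge}.
\item A term involving $E_{-1}^{-1}E_{-1}'$.
\item A term involving $R^{-1}R'=O(N^{-1})$ by \eqref{Rasymp}.
\end{enumerate}
Since $\lambda_t(z)=f(z)/f(t)$, we have $\lambda_t'(t)\asymp (1+t)^{-1}$. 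The needed input is therefore an asymptotic expansion of $\Phi$ near $\lambda=1$ as $u\to\infty$, which we take from the appendix (\cite[Sec.~3.5.1]{Dai-Lu-JUE}). Its structure should be the following. For large $u$, $\Phi$ is approximated by a confluent hypergeometric local parametrix at $\lambda=1$, with scaling variable $\sqrt{u}(\lambda-1)$, glued to an outer Bessel-type function. The resulting derivative then has the form
\begin{equation*}
\partial_t\log D_N=\sqrt2\gamma N\psi_V(t)\sqrt{\tfrac{1-t}{1+t}}\cdot\pi+\frac{\gamma^2}{4}\frac{d}{dt}\log\big(u_{N,t}\big)+\text{(terms integrable in $t$)} .
\end{equation*}
The first term integrates to the $N$-linear term. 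Since $u_{N,t}\asymp N^2(1+t)$, the second integrates to $\frac{\gamma^2}{4}\log(1+t)$ plus a constant in $t$. Combining this with the reference value at $x_0$, which already contains $\frac{\gamma^2}{2}\log N$, produces exactly $\frac{\gamma^2}{2}\log N+\frac{\gamma^2}{4}\log(1+x)$. The main task in the proof is to show the remainder is $O(1)$ after integration. This requires the error in the $\Phi$-asymptotics to be $O\big((1+t)^{-1}u_{N,t}^{-1/2}\big)$, which integrates to $O(1)$ over $[-1+N^{-2}\log\log N,\,x_0]$, and the $E_{-1}$ and $R$ contributions to be $O(1)$ per unit $t$.

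In Case (IV) the argument is the same with $\lambda_t=-f/f(t)$, $u_{N,t}\asymp N^{2/3}(1-t)$, and the soft-edge model problem \ref{model-rhp-soft-edge}. The $\log u$ coefficient combined with $\lambda_t'\asymp(1-t)^{-1}$ now yields $\frac{3\gamma^2}{4}\log(1-x)$, since $\log u_{N,t}=\frac23\log N+\log(1-t)+O(1)$. The $\log N$ accounting changes correspondingly, and matching at $1-\varepsilon$ fixes the constants. The $R$ error here is only $O(N^{-1/3})$. This still gives $O(1)$ after multiplying by the prefactor, which is of size $\lambda_t'\sim(1-t)^{-1}$, and integrating over $t\ge 1-\varepsilon$, because $N^{-1/3}\log N\to0$.

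I expect the main obstacle to be the uniform large-$u$ asymptotics of the model solution $\Phi(\lambda;u)$ at the merging point $\lambda=1$, in particular obtaining the error bound uniformly down to $u\sim\log\log N$. Near the lower limit of the $x$-range the small-norm argument inside the model problem is weakest. I would first try to control this by an explicit secondary steepest descent for $\Phi$, with a Bessel parametrix at $0$ and a confluent hypergeometric parametrix at $1$, giving errors $O(u^{-1/2})$. The cumulative integral near the lower endpoint is then bounded by $\int_{\log\log N}^{\infty}u^{-3/2}\,du$, which is finite.
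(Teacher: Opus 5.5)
Your proposal is correct and follows the argument the paper defers to \cite[Sec.~4.3]{Dai-Lu-JUE} and \cite[Sec.~9.2]{CFL2021}: you integrate the $x$-identity \eqref{di2} from a fixed bulk point, feeding in the edge local parametrix and the large-$u$ asymptotics of the model problem (Bessel or Airy at $0$, confluent hypergeometric at the jump). Three small fixes: the reference value at $x_0$ must come from the full Fisher--Hartwig asymptotics of \cite[Theorem 1.2]{CG2021}, since Proposition \ref{Asymptotics in the Separated Regime} only controls the $w$-dependence; in Case (IV) the factor $3$ arises because the confluent hypergeometric variable scales like $u^{3/2}$, so the coefficient of $\frac{d}{dt}\log u_{N,t}$ is $\frac{3\gamma^2}{4}$ rather than $\frac{\gamma^2}{4}$; and the $E_{-1}$ contribution is $O((1+t)^{-1/2})$, by a Cauchy estimate on $\partial B(-1,\delta)$, so it is integrable but not bounded.
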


\begin{remark}
The difference in the coefficients of the logarithmic terms between \eqref{eq: aysmptotics near -1} and \eqref{eq: aysmptotics near 1} arises from the fact that the equilibrium measure behaves as $O((1+x)^{-1/2})$ near $-1$,  whereas its integral is of order $O((1+x)^{1/2})$.
On the other hand, near $1$ the equilibrium measure behaves as $O((1-x)^{1/2})$, hence the integral is of order $O((1-x)^{3/2})$. This fact accounts for the three-time difference of the logarithmic term in the proposition.
\end{remark}

\section{Eigenvalue Rigidity} \label{Eigenvalue Rigidity}


It follows from \eqref{Heine identity} that the exponential moments are related to the Hankel determinants. In the previous section, we presented the asymptotic expressions for the Hankel determinants, which can be used to prove that the random measure $d\mu_N^\gamma(x) $ in \eqref{eq:dmu-N} converges to a GMC measure as $N \to \infty$.

\subsection{Estimation on the Exponential Moments of $h_N(x)$}


To begin with, we verify that $h_N$ in \eqref{definition of hN} satisfies \cite[Assumption 3.1]{CFL2021}. More precisely, we prove the proposition below.

\begin{proposition}\label{Assumption 3.1}
Let $A$ be any compact set in $(-1, 1)$ with positive Lebesgue measure. For any $\gamma>0$ and sufficiently large $N$, there exists a constant $C_{\gamma, A} > 0$ such that
\begin{equation} \label{prop5-1-eq1}
\mathbb{E} [ e^{\gamma h_N(x)}] \geq C_{\gamma, A} N^{\gamma^2 / 2} \qquad \textrm{ for all } x\in A.
\end{equation}
Moreover, there exists a constant $R_{\gamma} > 0$ such that 
\begin{equation} \label{prop5-1-eq2}
\mathbb{E} [ e^{\gamma h_N(x)}] \leq R_{\gamma} N^{\gamma^2 / 2} \qquad \textrm{for } x \in (-1, 1).
\end{equation}
The same bounds hold for $-h_N(x)$.
\end{proposition}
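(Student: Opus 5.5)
Via \eqref{EV-Heine Identity1}, the plan is to write
\[
\log \mathbb{E}[e^{\gamma h_N(x)}] = -\sqrt{2}\pi\gamma N F(x) + \log\frac{D_N(x;\gamma;0)}{D_N(x;0;0)},
\]
and to read off the right-hand side from the single-jump asymptotics of Proposition \ref{Asymptotics in the edge}. Here we use that $D_N(x;0;0)$ does not depend on $x$ and equals the normalizing determinant. The $N$-linear terms must cancel. In Proposition \ref{Asymptotics in the edge} the linear term is $\sqrt{2}\gamma N\int_{-1}^x \sqrt{(1-s)/(1+s)}\,ds$, written for the normalisation $\psi_V=1/\pi$. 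For general $V$ the same RH analysis produces $\sqrt{2}\pi\gamma N\int_{-1}^x d\mu_L = \sqrt{2}\pi\gamma N F(x)$, since the leading term arises from $e^{\sqrt{2}\pi\gamma\cdot N\int \rho}$ through the $g$-function. So I will first check that the constant in front matches $\sqrt2\pi\gamma NF(x)$ (equivalently, I will redo the leading-order term with $\rho$ in place of $1/\pi$). After cancellation we are left with
\[
\log \mathbb{E}[e^{\gamma h_N(x)}] = \frac{\gamma^2}{2}\log N + \frac{\gamma^2}{4}\log(1+x) + O(1)
\]
for $-1+N^{-2}\log\log N\le x\le 1-\varepsilon$, and with the analogous formula carrying $\frac{3\gamma^2}{4}\log(1-x)$ on $[-1+\varepsilon, 1-N^{-2/3}\log\log N]$. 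Both error terms are uniform for $\gamma$ in compacts, and in particular they hold for $-\gamma$, which gives the statement for $-h_N$.

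The lower bound \eqref{prop5-1-eq1} follows at once. On a compact $A\subset(-1,1)$ both $\log(1+x)$ and $\log(1-x)$ are bounded, so the formulas give $\mathbb{E}[e^{\gamma h_N(x)}]\ge e^{-C}N^{\gamma^2/2}$ uniformly on $A$. (Alternatively, one can use Proposition \ref{Asymptotics in the Separated Regime} with $\gamma_2=0$, $w=0$ in the bulk.)

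For the upper bound \eqref{prop5-1-eq2} on all of $(-1,1)$, the correction terms $\frac{\gamma^2}{4}\log(1+x)$ and $\frac{3\gamma^2}{4}\log(1-x)$ are nonpositive for $x\in(-1,1)$, since then $1\pm x<2$ up to bounded constants. Hence on the two overlapping ranges we get $\mathbb{E}[e^{\gamma h_N(x)}]\le e^{C}N^{\gamma^2/2}$. The main obstacle is the two extreme edge windows $x\in(-1,-1+N^{-2}\log\log N)$ and $x\in(1-N^{-2/3}\log\log N,1)$, where Proposition \ref{Asymptotics in the edge} does not apply. There I plan a direct probabilistic bound. Near $-1$, $NF(x)=O(N\sqrt{1+x})=O(\sqrt{\log\log N})$, and $\sum_j 1_{\lambda_j\le x}\le \#\{\lambda_j\le -1+N^{-2}\log\log N\}=:K$. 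Hence
\[
e^{\gamma h_N(x)}\le e^{\sqrt2\pi\gamma K},
\]
and for $-h_N$ we get $e^{-\gamma h_N(x)}\le e^{\sqrt2\pi\gamma NF(x)}=e^{O(\sqrt{\log\log N})}=N^{o(1)}$. For $K$, I apply the already-proved asymptotics at the endpoint $x_*=-1+N^{-2}\log\log N$, which give $\mathbb{E}[e^{\sqrt2\pi\gamma K}]=e^{\sqrt2\pi\gamma NF(x_*)}\mathbb{E}[e^{\gamma h_N(x_*)}]\le e^{O(\sqrt{\log\log N})}N^{\gamma^2/2}(\log\log N/N^2)^{\gamma^2/4}$. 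This is $o(N^{\gamma^2/2})$, since the last factor kills the $e^{O(\sqrt{\log\log N})}$ growth. Near $1$, I use the complement: $h_N(x)=-\sqrt2\pi(\#\{\lambda_j>x\}-N(1-F(x)))$, with $N(1-F(x))=O(N(1-x)^{3/2})=O((\log\log N)^{3/2}/N^{0})$, which is bounded by a power of $\log\log N$. By monotonicity, $\#\{\lambda_j>x\}\le\#\{\lambda_j>x^*\}$ with $x^*=1-N^{-2/3}\log\log N$, and I bound the latter's exponential moment via the formula at $x^*$, where the factor $(1-x^*)^{3\gamma^2/4}=(N^{-2/3}\log\log N)^{3\gamma^2/4}$ beats the $e^{(\log\log N)^{3/2}}$ loss. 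If this crude step fails because of these polylog-exponential factors, I will instead extend the uniformity of Proposition \ref{Asymptotics in the edge} to $u_{N,x}$ bounded using the small-$u$ behaviour of the model problems. Combining the ranges yields $R_\gamma$.
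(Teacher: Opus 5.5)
Your proposal is correct and follows essentially the same route as the paper. It uses Heine's identity plus Proposition \ref{Asymptotics in the edge} on $[-1+N^{-2}\log\log N,\,1-N^{-2/3}\log\log N]$, where the $\log(1\pm x)$ corrections are bounded above. In the two extreme windows it compares with $x_*=-1+N^{-2}\log\log N$, via monotonicity of the counting function, and with $x^*=1-N^{-2/3}\log\log N$, via the deterministic bound $h_N(x)\le\sqrt{2}\pi N(1-F(x^*))$. The resulting losses $e^{O(\sqrt{\log\log N})}$ and $e^{O((\log\log N)^{3/2})}$ are $o(N^{\gamma^2/2})$.

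You also work out the $-h_N$ edge cases that the paper only calls ``similar'', and your crude estimates there already succeed, so the fallback to small-$u_{N,x}$ model-problem asymptotics is unnecessary. Your parenthetical alternative via Proposition \ref{Asymptotics in the Separated Regime} with $w=0$ is vacuous, since that proposition only compares $w_N$ against $w=0$, but nothing in your argument depends on it.
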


\begin{proof}
Since the case of $-h_N(x)$ is similar to that of $h_N(x)$, it suffices to prove the estimates for $h_N(x)$.

First, we compute the exponential moment of $h_N(x)$. By definition \eqref{definition of hN},
\begin{eqnarray}
\mathbb{E} [e^{\gamma h_N(x)}] = \mathbb{E} \big[e^{\sqrt{2}\pi\gamma \sum_{j=1}^N 1_{\lambda_j\leq x} - \sqrt{2}\pi\gamma N F(x)}\big] 
= e^{-\sqrt{2}\pi \gamma N F(x)} \mathbb{E} \big[e^{\sqrt{2}\pi\gamma \sum_{j=1}^N 1_{\lambda_j\leq x}}\big],
\end{eqnarray}
where $F(x)$ is the distribution function defined in \eqref{distribution of eq measure}. By \eqref{Heine identity}, we get
\begin{eqnarray}
\mathbb{E} [e^{\gamma h_N(x)}] = e^{-\sqrt{2}\pi \gamma N F(x)} \frac{N!}{Z_N} D_N (x; \gamma; 0)  = e^{-\sqrt{2}\pi \gamma N F(x)} \frac{D_N (x; \gamma; 0)}{D_N (0; 0; 0)}.  \label{EV-Heine Identity1}
\end{eqnarray}
Then, from Proposition \ref{Asymptotics in the edge}, we can see that as $N\to\infty$,
\begin{equation}\label{EehN}
\log \mathbb{E} [e^{\gamma h_N(x)}] = \frac{\gamma^2}{2}\log N + O(1),
\end{equation}
 uniformly for $|x|\leq 1- \varepsilon$, where $\varepsilon > 0$ is a constant independent of $N$. This implies that for any compact set $A\subset (-1, 1)$, the lower bound \eqref{prop5-1-eq1} holds.

We now turn to the upper bound \eqref{prop5-1-eq2}. For $x \in [-1+\varepsilon, 1-\varepsilon]$ with $\varepsilon > 0$, the result is immediate from Proposition \ref{Asymptotics in the edge} and \eqref{EV-Heine Identity1}. For $x$ close to $-1$, from the definition of $F(x)$ in \eqref{distribution of eq measure}, there exists $C > 0$ such that for $x\leq -1 + \varepsilon$, $(1+x)^{1/2} \leq C F(x)$. Then, by \eqref{eq: aysmptotics near -1} and \eqref{EV-Heine Identity1}, there exists $R_\gamma > 0$ such that for $x \geq -1 + N^{-2}\log\log N$,
\begin{equation} \label{hN moment bulk}
\mathbb{E}[e^{\gamma h_N(x)}] \leq R_\gamma (NF(x))^{\gamma^2/2}. 
\end{equation}
Noting that $F(x) \in [0, 1]$, we have the right-hand side bounded by $R_\gamma N^{\gamma^2/2}$. Next, this estimate is carried over to $-1  < x\leq -1 + N^{-2}\log\log N$. Making use of $h_N(x)$ in \eqref{definition of hN}, we arrive at
\begin{multline}
 h_N(-1+N^{-2}\log\log N) + \sqrt{2}\pi N F(-1+N^{-2}\log \log N) \nonumber\\
= \sqrt{2}\pi \sum_{1\leq j \leq N} 1_{\lambda_j \leq -1 + N^{-2}\log \log N}  \geq \sqrt{2}\pi \sum_{1\leq j \leq N} 1_{\lambda_j \leq x} 
\geq  h_N(x)
\end{multline}
for all $ x\in  (-1 , -1 + N^{-2}\log\log N]$. This gives us 
\begin{equation}
\mathbb{E} [ e^{\gamma h_N(x)}] \leq e^{\sqrt{2}\pi \gamma N  F(-1+N^{-2}\log \log N) } \, \mathbb{E} [e^{\gamma h_N(-1+N^{-2}\log \log N)}].
\end{equation}
Note that $F(-1+N^{-2} \log \log N) \leq C' N^{-1}(\log\log N)^{1/2}$ for some $C'>0$. Applying \eqref{hN moment bulk} at $x = -1+N^{-2} \log \log N$, we obtain
\begin{eqnarray} \label{hN moment edge -1}
\mathbb{E} [e^{\gamma h_N(x)}] &\leq& e^{\sqrt{2}\pi\gamma C' (\log\log N)^{1/2}} R_{\gamma} \Big(N F(-1+N^{-2} \log \log N) \Big)^{\gamma^2 / 2} \nonumber\\
&\leq& e^{\sqrt{2}\pi\gamma C' (\log\log N)^{1/2}} R_{\gamma} \Big( C'(\log\log N)^{1/2} \Big)^{\gamma^2 / 2},
\end{eqnarray}
which is of order $o(N^{\gamma^2/2})$. 

Next, we turn to the case where $x$ is close to $1$. Similarly, there exists $C > 0$ such that for $x \geq 1 - \varepsilon$,  $(1-x)^{3/2} \leq C F(x)$. Then, by \eqref{eq: aysmptotics near 1} and \eqref{EV-Heine Identity1}, there exists $R’_\gamma > 0$ such that for $x \leq 1 - N^{-2/3}\log\log N$,
\begin{equation} \label{hN moment bulk right}
\mathbb{E}[e^{\gamma h_N(x)}] \leq R'_\gamma (NF(x))^{\gamma^2/2},
\end{equation}
which is exactly \eqref{hN moment bulk}. To extend the result to $1 - N^{-2/3}\log\log N \leq x  < 1$, first noting that from the definition of $h_N(x)$ in \eqref{definition of hN}, one has
\begin{equation}
h_N(x) \leq \sqrt{2}\pi N \left(1 - F(1-N^{-2/3}\log\log N) \right).
\end{equation}
Hence for $x\in [1-N^{-2/3}\log\log N, 1)$, one has
\begin{equation} \label{hN moment edge 1}
\mathbb{E}[e^{\gamma h_N(x)}] \leq \mathbb{E} [e^{\sqrt{2}\pi \gamma N \big(1 - F(1-N^{-2/3}\log\log N)\big)}]  \leq e^{C (\log\log N)^{3/2}}
\end{equation}
for some $C>0$, which is also of order $o(N^{\gamma^2/2})$. 

Combining \eqref{hN moment bulk}, \eqref{hN moment edge -1}, \eqref{hN moment bulk right} and \eqref{hN moment edge 1}, we conclude the proof of this proposition.
\end{proof}

\subsection{Proof of Theorem \ref{maximum of hN}: The Maximum of $h_N(x)$}

In this section, we are at the stage of verifying that $h_N(x)$ satisfies all the requirements in \cite[Assumption 2.5]{CFL2021}, hence proving the convergence of $d\mu_N$ constructed in \eqref{eq:dmu-N} to the GMC measure $d\mu^\gamma$. With Proposition \ref{Asymptotics in the Separated Regime} and Proposition \ref{Asymptotics in the merging regime} in hand, it is almost the same as \cite[Section 2.7.2]{CFL2021} and \cite[Section 5.1]{Dai-Lu-JUE}. So we omit the details and give the following proposition.

In this section, we establish that 
$h_N(x)$ fulfills all the requirements of \cite[Assumption 2.5]{CFL2021} by invoking Proposition \ref{Asymptotics in the Separated Regime} and Proposition \ref{Asymptotics in the merging regime}. This naturally leads to the proof of the convergence of 
$d\mu^\gamma$, as defined in \eqref{eq:dmu-N}, to the GMC measure 
$d\mu^\gamma$. Further details of the proof can be found in \cite[Section 2.7.2]{CFL2021} and \cite[Section 5.1]{Dai-Lu-JUE}.

\begin{proposition} \label{prop-GMC-convergence}
Let $0<\gamma<\sqrt{2}$, the eigenvalue counting function $h_N(x)$ be defined in \eqref{definition of hN}, and $X(x)$ be the log-correlated Gaussian field with correlation kernel \eqref{correlation kernel of X}. Then, the measure $d\mu_N^\gamma(x)$ in \eqref{eq:dmu-N} converges to the GMC measure $d\mu^\gamma(x) = \frac{e^{\gamma X(x)}}{\mathbb{E} [e^{\gamma X(x)} ] }$ as $N \to \infty$ in law, with respect to the weak topology. 
\end{proposition}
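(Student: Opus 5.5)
We verify the conditions of \cite[Assumption 2.5]{CFL2021} for the field $h_N$ on $[-1,1]$ and then invoke \cite[Theorem 2.6]{CFL2021}, which gives convergence in law of $d\mu_N^\gamma$ to the GMC measure for $0<\gamma<\sqrt2$. There are three conditions to check. The first is asymptotics of the two-point exponential moment $\mathbb{E}[e^{\gamma_1 h_N(x_1)+\gamma_2 h_N(x_2)}]$, normalized by the one-point moments, in the separated regime. The second is a uniform upper bound on this ratio of the form $e^{\gamma_1\gamma_2\Sigma(x_1,x_2)+O(1)}$ in the merging regime, with $\Sigma(x_1,x_2)$ replaced by $\log\min\{N, |x_1-x_2|^{-1}\}$ there. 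The third is a perturbed-moment condition: asymptotics of $\mathbb{E}[e^{\gamma h_N(x)+\sum_k w_N(\lambda_k)}]$ for a mollified regularization $w_N$ of $X$, analytic only in the shrinking domain $\mathcal S_N$. Throughout, Heine's identity \eqref{Heine identity} converts each moment into a ratio of Hankel determinants $D_N(x_1,x_2;\gamma_1,\gamma_2;w)/D_N(0;0;0)$, multiplied by the explicit factor $e^{-\sqrt2\pi N(\gamma_1F(x_1)+\gamma_2F(x_2))}$.

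\textbf{Separated regime.} Take $x_1<x_2$ in a compact subset of $(-1,1)$ with $|x_1-x_2|\ge\delta$. Apply Proposition \ref{Asymptotics in the Separated Regime} with $w_N$ equal to the smoothed field (a linear statistic built from Chebyshev modes truncated at level $N^{1-\alpha}$, as in \cite[Sect. 2.7]{CFL2021}). This identifies the Gaussian shift $\frac12\sigma^2(w_N)$ and the cross term $\sum_j\frac{\gamma_j}{\sqrt2}\sqrt{1-x_j^2}\,\mathcal U w_N(x_j)$. Through \eqref{eq:sigma-def}, the cross term equals $\gamma_j\mathbb{E}[X(x_j)\,\langle X, w_N'\rangle]$-type covariances built from $\Sigma$. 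Combining with the $w\equiv0$ asymptotics from \cite[Theorem 1.2]{CG2021} produces $\gamma_1\gamma_2\Sigma(x_1,x_2)$ in the two-point ratio, with $o(1)$ error uniform on compacts. I also check that the mean term $\mu(w_N)$ from \eqref{eq: LUE-Global-CLTmean} matches the $\alpha$-dependent terms, so that after normalization only the covariance survives.

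\textbf{Merging regime and edges.} For $|x_1-x_2|<\delta$, Proposition \ref{Asymptotics in the merging regime} reduces the two-point determinant to the one-point determinant with charge $\gamma_1+\gamma_2$. Dividing by the one-point moments, using \eqref{EehN} and the $\frac{\gamma^2}{2}\log N$ law, gives
\[
\frac{\mathbb{E}[e^{\gamma_1h_N(x_1)+\gamma_2h_N(x_2)}]}{\mathbb{E}[e^{\gamma_1h_N(x_1)}]\mathbb{E}[e^{\gamma_2h_N(x_2)}]}=\exp\big(-\gamma_1\gamma_2\max\{0,\log(N|x_1-x_2|)\}+\gamma_1\gamma_2\log N+O(1)\big),
\]
which is exactly the required bound $e^{\gamma_1\gamma_2\log\min(N,|x_1-x_2|^{-1})+O(1)}$. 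Near $\pm1$ I only need the upper bounds of Proposition \ref{Assumption 3.1}; these show that the total mass of $\mu_N^\gamma$ near the endpoints is small in $L^1$, since for $x$ near $-1$ we have $(NF(x))^{\gamma^2/2}/N^{\gamma^2/2}\to0$. Hence boundary neighbourhoods can be discarded by a truncation argument before applying the bulk convergence.

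\textbf{Main obstacle.} The delicate step is the perturbed separated asymptotics with $w_N$ analytic only on $\mathcal S_N$. Here the lens and the local parametrices must fit inside a domain of width $\varepsilon_N=N^{-1+\alpha}$ in the bulk and $\delta_N$ near the edges, while the $R$-jumps must still be $o(1)$. My first approach is to follow \cite[Prop. 7.5]{CFL2021}: open the lenses at distance $\sim\varepsilon_N$, use that $\Re\xi$ grows linearly off the axis to obtain jumps of size $e^{-cN\varepsilon_N}=e^{-cN^{\alpha}}$, and shrink the Bessel/Airy discs to radius $\delta_N$. The matching errors then become $O((N\delta_N^{3/2})^{-1})$ at the soft edge and analogous quantities at the hard edge, and these are $o(1)$ precisely because $\alpha<2/3$. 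Once this is in place, the remaining verification transcribes \cite[Section 2.7.2]{CFL2021} and \cite[Section 5.1]{Dai-Lu-JUE} verbatim.
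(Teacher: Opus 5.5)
Your proposal follows the same route as the paper: verify \cite[Assumption 2.5]{CFL2021} from Propositions \ref{Asymptotics in the Separated Regime} and \ref{Asymptotics in the merging regime} together with the one-point law \eqref{EehN}, then invoke the convergence theorem of \cite{CFL2021}. The paper removes the nonzero $\alpha$-dependent mean by passing to the deterministically recentred field $\tilde h_N$ of \eqref{tilde hN}, which leaves $d\mu_N^\gamma$ unchanged; this is equivalent to your observation that the $\alpha$-terms cancel in the normalised ratios.

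Two asides need correcting. First, the test functions fed into Proposition \ref{Asymptotics in the Separated Regime} cannot be truncated Chebyshev sums: a nonconstant polynomial is unbounded on $\mathcal S_N\supset[1,\infty)$, so it violates that proposition's hypothesis. Bounded smoothings, e.g.\ Poisson/arctan regularisations at scales $\geq\varepsilon_N$, are the kind that fit. Second, discarding neighbourhoods of $\pm1$ needs no moment bounds, because $\mathbb E[\mu_N^\gamma(dx)]=dx$ exactly by construction. Your stated reason also fails on its own terms, since $(NF(x))^{\gamma^2/2}/N^{\gamma^2/2}=F(x)^{\gamma^2/2}$ does not tend to $0$ as $N\to\infty$.
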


\begin{proof}
Since $h_N(x)$ defined in \eqref{definition of hN} is not centered (e.g. see \cite[Corollary 2.2]{CG2021}, we introduce the following modified eigenvalue counting function
\begin{equation} \label{tilde hN}
\tilde{h}_N(x) = h_N(x) - \sqrt{2}\pi \left(\frac{\alpha}{2\pi} \int_{-1}^x \frac{ds}{\sqrt{1-s^2}} - \frac{\alpha}{2}1_{[1, \infty)}(x)\right),
\end{equation}
where we keep the indicator function to make the symbols conform with those in \cite{Dai-Lu-JUE}. Note that $\tilde{h}_N(x)$ differs from $h_N(x)$ by a deterministic quantity.

Adopting the standard arguments of the estimations for the exponential moments of $\tilde{h}_N(x)$, for instance, see \cite[Sec. 2.7.2]{CFL2021} or \cite[Prop. 5.2]{Dai-Lu-JUE}, one can see that $\tilde{h}_N(x)$ satisfies all the conditions in \cite[Assumption 2.5]{CFL2021}, hence 
\begin{equation}
d\mu_N^\gamma = \frac{e^{\gamma h_N(x)}}{\mathbb{E}e^{\gamma h_N(x)}} = \frac{e^{\gamma \tilde{h}_N(x)}}{\mathbb{E}e^{\gamma \tilde{h}_N(x)}}
\end{equation}
converges in law to the GMC measure $d\mu^\gamma$ associated with the log-correlated field $X(x)$ with correlation kernel \eqref{correlation kernel of X}.
\end{proof}

The lower bound for the maximum of $h_N(x)$ is obtained via the convergence of $d\mu_N^\gamma(x)$ to the GMC measure.

\begin{proposition} \label{lower bound of hN}
Let $h_N(x)$ be the eigenvalue counting function defined in \eqref{definition of hN}. Then, for any compact set $A\subset (-1, 1)$ with a positive Lebesgue measure, we have, for any $\delta>0$,
\begin{equation} \label{eq:hn-lowerbound}
\lim_{N\to\infty} \mathbb{P} \left[ \max_{ x \in A} h_N(x) \geq (\sqrt{2}-\delta) \log N\right] = 1.
\end{equation}
\end{proposition}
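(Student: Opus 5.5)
We will follow the standard GMC argument of \cite[Sect. 3]{CFL2021}, which uses two inputs: the convergence of $d\mu_N^\gamma$ to the nondegenerate GMC measure $d\mu^\gamma$ (Proposition \ref{prop-GMC-convergence}), and the sharp size of the normalizing constant $\mathbb{E}[e^{\gamma h_N(x)}]$ (Proposition \ref{Assumption 3.1}).

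\emph{Step 1: reduce to a statement about $\mu_N^\gamma(A)$.} Fix $0<\gamma<\sqrt{2}$. On the event $\{\max_{x\in A} h_N(x) < (\sqrt{2}-\delta)\log N\}$ we can bound the mass pointwise, using the lower bound \eqref{prop5-1-eq1} on $A$:
\begin{equation*}
\mu_N^\gamma(A)=\int_A \frac{e^{\gamma h_N(x)}}{\mathbb{E}[e^{\gamma h_N(x)}]}dx \le \frac{|A|\,N^{\gamma(\sqrt{2}-\delta)}}{C_{\gamma,A}N^{\gamma^2/2}} .
\end{equation*}
Now choose $\gamma$ close to $\sqrt{2}$ so that $\gamma(\sqrt{2}-\delta)-\gamma^2/2<0$. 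This holds whenever $\gamma/2>\sqrt{2}-\delta$, that is, $\gamma\in(2\sqrt{2}-2\delta,\sqrt{2})$, an interval that is nonempty for small $\delta>0$. With this choice the right-hand side tends to $0$. It follows that
\begin{equation*}
\mathbb{P}\Big[\max_A h_N<(\sqrt{2}-\delta)\log N\Big]\le \mathbb{P}\big[\mu_N^\gamma(A)\le \eta\big]
\end{equation*}
for any fixed $\eta>0$ and all sufficiently large $N$.

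\emph{Step 2: use convergence to GMC.} By Proposition \ref{prop-GMC-convergence}, $\mu_N^\gamma\to\mu^\gamma$ in law in the weak topology. We would like to pass this to the mass of $A$. If $A$ is not a continuity set, we replace $A$ by a closed subinterval with nonempty interior, or approximate $\mathbf 1_A$ from below by a continuous compactly supported $\phi\le \mathbf 1_A$ with $\int\phi>0$; this is possible since $A$ has positive Lebesgue measure, and for the interior-free case we use the density-point approximation of \cite{CFL2021}. Then $\mu_N^\gamma(\phi)\to\mu^\gamma(\phi)$ in law, and the portmanteau theorem gives
\begin{equation*}
\limsup_N \mathbb{P}[\mu_N^\gamma(\phi)\le\eta]\le \mathbb{P}[\mu^\gamma(\phi)\le\eta].
\end{equation*}

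\emph{Step 3: conclude.} For $\gamma<\sqrt{2}$ the GMC measure is nondegenerate (Kahane), and in fact $\mu^\gamma(\phi)>0$ almost surely, since positivity of the mass of a fixed set is a zero–one type event for subcritical GMC. Letting $\eta\downarrow0$ therefore gives probability $0$, which yields \eqref{eq:hn-lowerbound}.

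The main obstacle is this almost sure positivity of $\mu^\gamma(\phi)$ for the specific field $X$ with kernel \eqref{correlation kernel of X}. It does not follow from nondegeneracy alone. We would invoke the corresponding result in \cite[Sect. 3]{CFL2021}, or the general theory in \cite{Rho:Var2014}, noting that $\Sigma(x,y)=-\log|x-y|+$ a continuous function on compacts of $(-1,1)^2$, which is the setting those results cover.
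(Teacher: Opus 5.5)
Your outline (compare $\mu_N^\gamma(A)$ with its GMC limit) is the mechanism behind \cite[Theorem 3.4]{CFL2021}, which is all the paper's proof invokes. However, Step~1 is wrong as stated and loses a factor $2$. The inequality $\gamma(\sqrt{2}-\delta)-\gamma^2/2<0$ is equivalent to $\gamma>2\sqrt{2}-2\delta$. The interval $(2\sqrt{2}-2\delta,\sqrt{2})$ is therefore nonempty only for $\delta>\sqrt{2}/2$, not for small $\delta$. For small $\delta$ and every $\gamma<\sqrt{2}$ the exponent $\gamma(\sqrt{2}-\delta-\gamma/2)$ is positive, so your upper bound on $\mu_N^\gamma(A)$ diverges. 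Bounding $e^{\gamma h_N}$ pointwise by $e^{\gamma\max_A h_N}$ and dividing by $N^{\gamma^2/2}$ can only show $\max_A h_N\ge(\gamma/2-o(1))\log N$. That gives the constant $1/\sqrt{2}$ rather than $\sqrt{2}$.

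The missing idea is that $\mu_N^\gamma$ is carried by $\gamma$-thick points: under the tilt $e^{\gamma h_N(x)}/\mathbb{E}[e^{\gamma h_N(x)}]$ the field is typically of size $\gamma\log N$, not $(\gamma/2)\log N$. Making this quantitative requires the upper bound \eqref{prop5-1-eq2} at a shifted parameter, and your argument never uses that bound. Fix $\epsilon>0$ and $0<\theta<\min\{\gamma,2\epsilon\}$. Since $e^{\gamma h}\mathbf{1}_{\{h<c\}}\le e^{\theta c}e^{(\gamma-\theta)h}$, the bounds \eqref{prop5-1-eq1} and \eqref{prop5-1-eq2} give
\begin{equation*}
\mathbb{E}\Big[\mu_N^\gamma\big(\{x\in A:\ h_N(x)<(\gamma-\epsilon)\log N\}\big)\Big]\le\int_A\frac{N^{\theta(\gamma-\epsilon)}\,\mathbb{E}[e^{(\gamma-\theta)h_N(x)}]}{\mathbb{E}[e^{\gamma h_N(x)}]}\,dx\le|A|\,\frac{R_{\gamma-\theta}}{C_{\gamma,A}}\,N^{\theta^2/2-\theta\epsilon}\longrightarrow0.
\end{equation*}
On the event $\{\max_A h_N<(\gamma-\epsilon)\log N\}$, all of $\mu_N^\gamma(A)$ sits on this thin set. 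By Markov's inequality, the probability of that event is at most $\mathbb{P}[\mu_N^\gamma(A)\le\eta]+o(1)$ for each fixed $\eta>0$. Your Steps~2--3 then handle the first term, modulo the empty-interior case you flagged. Letting $\gamma\uparrow\sqrt{2}$ and $\epsilon\downarrow0$ yields \eqref{eq:hn-lowerbound}.
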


\begin{proof}
Applying Propositions \ref{Assumption 3.1} and \ref{prop-GMC-convergence}, we confirm that $h_N(x)$ fulfills the hypotheses of \cite[Theorem 3.4]{CFL2021}; this directly yields the lower bound \eqref{eq:hn-lowerbound}.
\end{proof}

\begin{remark}
The lower bound for the maximum of $h_N(x)$ given by the above Proposition is much stronger than Theorem \ref{maximum of hN}, as it implies that the lower bound of $\max |h_N(x)|$ holds in probability on any compact subset of $[-1, 1]$ with positive Lebesgue measure, rather than merely on the entire interval.
\end{remark}

The upper bound of $h_N(x)$ follows from a same argument by adopting the Markov inequality as \cite[Prop. 5.6]{Dai-Lu-JUE}, hence we only state the result and omit the proof.
\begin{proposition} \label{prop: upper bound of hN}
Let $h_N(x)$ be the eigenvalue counting function defined in \eqref{definition of hN}. Then, we have, for any $\delta>0$, 
\begin{equation} \label{eq:hn-upperbound}
\lim_{N\to\infty} \mathbb{P} \left[ \max_{ x \in [-1, 1]} h_N(x) \leq (\sqrt{2}+\delta) \log N\right] = 1.
\end{equation}
\end{proposition}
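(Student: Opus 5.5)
The plan is a first-moment (Markov) bound on a fine discretization of $[-1,1]$, combined with monotonicity of the counting function between grid points. Fix $\delta>0$ and choose a grid $-1=y_0<y_1<\dots<y_M=1$ with $M\asymp N^{2}$ points, for instance uniform in the variable $F$, so that $N\bigl(F(y_{k+1})-F(y_k)\bigr)\le N^{-1}$.

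\textbf{Step 1: reduce the maximum over $[-1,1]$ to the grid.} For $x\in[y_k,y_{k+1}]$ the counting term $\sum_j 1_{\lambda_j\le x}$ is nondecreasing, and the subtracted term $NF(x)$ changes by at most $N^{-1}$. Hence
\[
h_N(x)\le h_N(y_{k+1})+\sqrt2\pi N\bigl(F(y_{k+1})-F(y_k)\bigr)\le h_N(y_{k+1})+o(1).
\]
It therefore suffices to bound $\max_k h_N(y_k)$. The endpoint $y_0=-1$ is trivial since $h_N(-1)=0$.

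\textbf{Step 2: union bound with exponential Markov.} For $\gamma>0$,
\[
\mathbb P\bigl(h_N(y_k)>(\sqrt2+\delta)\log N\bigr)\le N^{-\gamma(\sqrt2+\delta)}\,\mathbb E\bigl[e^{\gamma h_N(y_k)}\bigr]\le R_\gamma N^{\gamma^2/2-\gamma(\sqrt2+\delta)},
\]
using the uniform upper bound \eqref{prop5-1-eq2} of Proposition \ref{Assumption 3.1}, valid on all of $(-1,1)$. Summing over $M\asymp N^2$ points gives a bound of order $N^{2+\gamma^2/2-\gamma(\sqrt2+\delta)}$. The optimal choice $\gamma=\sqrt2+\delta$ makes the exponent $2-(\sqrt2+\delta)^2/2<0$, so the probability tends to $0$.

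\textbf{Step 3: refine the grid count.} A polynomially large grid does not by itself give exponent $1$, so it is worth checking the counting carefully. With the $N^2$ grid the exponent is $2-(\sqrt2+\delta)^2/2=1-\sqrt2\delta-\delta^2/2$, which is \emph{not} negative. The fix is to use a grid of only $\asymp N^{1+\delta'}$ points, uniform in $F$, so that consecutive points have $N\bigl(F(y_{k+1})-F(y_k)\bigr)\le N^{-\delta'}$. The monotonicity argument of Step 1 still loses only $o(1)$. The union bound is then of order $N^{1+\delta'-(\sqrt2+\delta)^2/2}=N^{\delta'-\sqrt2\delta-\delta^2/2}$, and choosing $\delta'<\sqrt2\delta$ makes it $o(1)$.

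\textbf{Main obstacle.} The moment bound must be uniform up to both edges, including $x$ within $N^{-2}\log\log N$ of $-1$ and within $N^{-2/3}\log\log N$ of $1$. That uniformity is exactly what \eqref{prop5-1-eq2} provides, through \eqref{hN moment edge -1} and \eqref{hN moment edge 1}, so grid points near the edges cause no additional difficulty beyond invoking it.
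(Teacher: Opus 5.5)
Your final argument is correct and is essentially the route the paper intends, which defers to the Markov-inequality argument of \cite[Prop. 5.6]{Dai-Lu-JUE}: the monotonicity reduction of Step 1 on the $N^{1+\delta'}$-point grid of Step 3, then exponential Markov and a union bound via \eqref{prop5-1-eq2}. Just delete the false claim in Step 2 (as you yourself noticed), and observe that the grid point $y_M=1$, which lies outside $(-1,1)$, needs only $h_N(1)\le 0$. The finer grid is unnecessary, though: using the $N$ quantiles $\kappa_j$ costs only an additive $\sqrt2\pi$, as in the proof of Lemma \ref{hN near soft edge first lemma}, and then $\gamma=\sqrt2$ already gives a union bound of order $N^{1+1-\sqrt2(\sqrt2+\delta)}=N^{-\sqrt2\delta}$.
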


Combining the above two propositions, we conclude the proof of Theorem \ref{maximum of hN}.

\subsection{Proof of Theorem \ref{main theorem}: The Eigenvalue Rigidity}

We now proceed to the proof of our main result, Theorem \ref{main theorem}. In this subsection, we will assume $\psi_V(x) = \frac{1}{\pi}$ for simplicity, which is also the coefficient for a standard LUE. We emphasize that no generality is lost since, by our assumption of $\psi_V(x)$, there exist $c$ and $C$ such that $c\leq \psi_V(x) \leq C$ in $[-1, \infty)$. Recalling the definition of $h_N(x)$ from \eqref{definition of hN}, we have
\begin{equation} \label{hN(lambda)}
h_N(\lambda_j)  = \sqrt{2} \pi N (F(\kappa_j) - F(\lambda_j)).
\end{equation}
When $ j \asymp N$ and $ N-j \asymp N$, the above formula yields
\begin{equation} \label{eq: lower-bound-approx}
h_N(\lambda_j)  = \sqrt{2}\pi N F'(\kappa_j)(\kappa_j - \lambda_j) +  O(1).
\end{equation}
Then, the lower bound in \eqref{eq:main-theorem} follows directly from Proposition \ref{lower bound of hN}.  The rest of this section is therefore dedicated to proving the upper bound in \eqref{eq:main-theorem}. 

We claim that it suffices to consider the case where $\lambda_j > \kappa_j$. Otherwise, if $\lambda_j \leq \kappa_j$, we have
\begin{equation} \label{eq 1: only need one side}
|F(\kappa_j) - F(\lambda_j)| = |F'(\xi_j)|(\kappa_j - \lambda_j), \qquad \xi_j\in (\lambda_j, \kappa_j).
\end{equation}
Since $F'(x)>0$ is  monotonically decreasing  in $[-1, 1]$ (cf. \eqref{distribution of eq measure} and \eqref{LUE eq measure}), it follows that
\begin{equation}\label{eq 2: only need one side}
F'(\kappa_j) (\kappa_j - \lambda_j) \leq F'(\xi_j)(\kappa_j - \lambda_j). 
\end{equation}
Hence, the result follows directly from Prop \ref{prop: upper bound of hN}. Thus, we only need to consider the case $\lambda_j>\kappa_j$ for the remainder of the proof.

We will divide the proof into two parts: the regime where $\kappa_j < 0$ and the regime where $\kappa_j > 0$. Noting that for the $j$-s such that $\kappa_j$ is near $0$, we must have $j \asymp N $ and $ N-j \asymp N$, hence \eqref{eq: lower-bound-approx} must hold. As a direct corollary of Proposition \ref{prop: upper bound of hN}, the upper bound in Theorem \ref{main theorem} must hold. Hence we do not need to consider the subtle case where $\kappa_j$ and $\lambda_j$ are of opposite signs and only need to focus on the case that either $j$ or $N-j$ is of order $o(N)$. In what follows, we denote by $C, C'$ positive constants independent of $N$, not necessarily equal in value.

\subsubsection{Eigenvalues near the Hard Edge: $\kappa_j \leq 0$}
Define $N_1$ as the largest integer $j$ such that $\kappa_j \leq 0$. The proof is divided into two parts: we first use an iteration method to push the regime where the upper bound of Theorem \ref{main theorem} holds as close to $-1$ as possible. Then, we refine the estimation of $h_N(x)$ near $-1$ in order to prove the hard edge rigidity.

\smallskip
\noindent\textbf{Bulk Rigidity: The Method of Iteration} 
\smallskip

As mentioned in the above discussion, we first obtain the upper bound in the bulk, and then extend it to the vicinity of the endpoint $-1$.

\begin{proposition} \label{Main theorem: Eigenvalue Rigidity on the Bulk: left}
For any $\delta > 0$ and $F(x)$ defined in \eqref{distribution of eq measure}, we have
\begin{equation} \label{eq: prop5.7-formula}
\lim_{N\to\infty}\mathbb{P}\left(F'(\kappa_j)|\lambda_j - \kappa_j| \leq \frac{(1+\varepsilon)\log N}{\pi N} \text{ for } j = N^\delta, \cdots, N_1 \right) = 1.
\end{equation}
\end{proposition}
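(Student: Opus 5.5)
We reduce to the case $\lambda_j>\kappa_j$, as explained just before the statement, and control $F'(\kappa_j)(\lambda_j-\kappa_j)$ through $h_N(\lambda_j)=\sqrt2\pi N(F(\kappa_j)-F(\lambda_j))$. We work on the event $\mathcal{E}_N=\{\max_{x\in[-1,1]}|h_N(x)|\le(\sqrt2+\delta')\log N\}$, whose probability tends to one by Proposition \ref{prop: upper bound of hN} applied to $h_N$ and $-h_N$; the bound for $-h_N$ follows from the same Markov argument, since Proposition \ref{Assumption 3.1} holds for $-h_N$ as well. On $\mathcal{E}_N$, for every $j\le N_1$,
\begin{equation*}
F(\lambda_j)-F(\kappa_j)\le \frac{(1+\delta'')\log N}{\pi N}.
\end{equation*}
The remaining task is to convert this increment of $F$ into a bound for $F'(\kappa_j)(\lambda_j-\kappa_j)$. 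Since $F'$ is decreasing, the mean value theorem only gives $F(\lambda_j)-F(\kappa_j)=F'(\zeta_j)(\lambda_j-\kappa_j)$ with $F'(\zeta_j)\le F'(\kappa_j)$. So we need to show that $F'(\zeta_j)/F'(\kappa_j)=1+o(1)$ uniformly for $N^\delta\le j\le N_1$.

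For this we use the explicit edge behaviour $F(x)\asymp(1+x)^{1/2}$ and $F'(x)\asymp(1+x)^{-1/2}$ near $-1$, with $F'$ bounded above and below on $[-1+c,0]$. The classical location then satisfies $1+\kappa_j\asymp (j/N)^2$. If $\lambda_j\le\kappa_j+\eta_j$ with $\eta_j=o(1+\kappa_j)$, then $F'(\zeta_j)=F'(\kappa_j)(1+O(\eta_j/(1+\kappa_j)))$, which gives the claim.

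To obtain $\eta_j=o(1+\kappa_j)$ we iterate. The first pass uses only the crude bound: since $F(\lambda_j)\le F(\kappa_j)+C\log N/N$ and $F$ is a concave-type function of $(1+x)^{1/2}$, we get
\begin{equation*}
(1+\lambda_j)^{1/2}\le (1+\kappa_j)^{1/2}+C\frac{\log N}{N}.
\end{equation*}
For $j\ge N^\delta$ we have $(1+\kappa_j)^{1/2}\asymp j/N\ge N^{\delta-1}$, so $(1+\lambda_j)/(1+\kappa_j)\le(1+C\log N/j)^2=1+O(N^{-\delta}\log N)$. Thus $\eta_j/(1+\kappa_j)=O(N^{-\delta}\log N)\to0$ uniformly. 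Plugging this back into the mean value identity gives
\begin{equation*}
F'(\kappa_j)(\lambda_j-\kappa_j)\le (1+O(N^{-\delta}\log N))\,\frac{(1+\delta'')\log N}{\pi N},
\end{equation*}
and choosing $\delta',\delta''$ small relative to $\varepsilon$ yields \eqref{eq: prop5.7-formula}. If the one-step ratio estimate were not uniform enough, we would repeat the argument on dyadic blocks of $j$, feeding each improved bound on $\lambda_j$ back into the ratio $F'(\zeta_j)/F'(\kappa_j)$.

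The main obstacle is the ratio control near the lower end $j\approx N^\delta$. There the relative error $\log N/j$ is only polynomially small, and we must ensure that the hard-edge factor $(1+x)^{-1/2}$ in $F'$ (with $\psi_V$ bounded above and below) does not spoil uniformity. Everything else is routine bookkeeping: the upper bound on $\max h_N$ over the whole interval from Proposition \ref{prop: upper bound of hN}, and the monotonicity of $F'$.
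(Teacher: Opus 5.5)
Your proof is correct, and it takes a more direct route than the one the paper indicates. The paper omits the proof and defers to \cite{Dai-Lu-JUE}. It frames the step as an iteration: the bulk estimate is pushed toward $-1$ in stages, with each stage supplying a priori control of $\lambda_j-\kappa_j$, and hence of the second-order term in \eqref{Taylor expansion of hN}, for the next.

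You get that a priori control in one step by working in the variable $u=(1+x)^{1/2}$. Since $dF/du=2\psi_V(x)\sqrt{1-x}$ is bounded below on $[-1,\tfrac12]$, the global bound on $|h_N|$ gives $(1+\lambda_j)/(1+\kappa_j)=1+O(\log N/j)$ uniformly in $j\le N_1$. Hence $F'(\kappa_j)/F'(\zeta_j)=1+O(\log N/j)$. This buys several things:
\begin{itemize}
\item The proof is short and self-contained, with an explicit rate.
\item The ratio argument does not need monotonicity of $F'$, which the paper has only after setting $\psi_V\equiv 1/\pi$. It also handles $\lambda_j<\kappa_j$ symmetrically.
\item It yields the bound for every $j\le N_1$ with $j/\log N\to\infty$, not just $j\ge N^\delta$.
\end{itemize}
In particular, the dyadic fallback you mention is unnecessary. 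The ``obstacle'' at $j\approx N^\delta$ is not one either, since $N^{-\delta}\log N\to 0$.

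Two details should be written out.
\begin{itemize}
\item The lower bound on $dF/du$ requires $\lambda_j$ to stay away from $1$. On $\mathcal{E}_N$ this follows from $F(\lambda_j)\le F(0)+C\log N/N$. The case $\lambda_j>1$ is excluded, because it would force $h_N(1)\le \sqrt2\pi(N_1-N)$.
\item The bound on $\max(-h_N)$ is not literally Proposition \ref{prop: upper bound of hN}. It follows from the $-h_N$ case of Proposition \ref{Assumption 3.1} by the same Markov and union bound over the grid $\{\kappa_k\}$, using $-h_N(x)\le -h_N(\kappa_k)+\sqrt2\pi$ for $x\in[\kappa_k,\kappa_{k+1}]$.
\end{itemize}
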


\begin{proof}

The proof is similar to that of \cite[Prop 5.7]{Dai-Lu-JUE} since in both cases $F'(x) \asymp (1+x)^{-1/2}$. Hence, we omit the details.

\end{proof}

\smallskip
\noindent\textbf{Hard Edge Rigidity: Refinement of the Bound of $h_N(x)$} 
\smallskip


In this section, we give the proof of the upper bound estimate \eqref{eq:main-theorem} near the edge. As preparation, we need the following two lemmas concerning the bounds of the eigenvalue counting function $h_N(x)$. Since $F'(x) \asymp (1+x)^{-1/2}$ as $x\to -1$ the proofs are similar to those of Lemmas 5.8 and 5.9 in \cite{Dai-Lu-JUE}; hence, we omit the proofs and refer to \cite{Dai-Lu-JUE}.

\begin{lemma} \label{hN near edge first lemma}
Let $h_N(x)$ be the eigenvalue counting function defined in \eqref{definition of hN}. Then, for any constant $C>0$, there exists a sufficiently small $\delta>0$ such that
\begin{equation}
\lim_{N\to\infty} \mathbb{P} [\max_{x\leq \kappa_{2N^{\delta}}} h_N(x) \geq C\log N] = 0.
\end{equation}
By symmetry, similar estimates hold for $-h_N(x)$.
\end{lemma}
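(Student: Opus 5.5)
We will prove Lemma \ref{hN near edge first lemma} by a union bound over a discretization of $[-1,\kappa_{2N^\delta}]$, combined with exponential Markov inequalities that use the moment bound \eqref{hN moment bulk} and its extension \eqref{hN moment edge -1}. The key point is that near $-1$ the exponential moments are governed by $(NF(x))^{\gamma^2/2}$, not $N^{\gamma^2/2}$. For $x\le\kappa_{2N^\delta}$ we have $NF(x)\le 2N^\delta$, so $\mathbb{E}[e^{\gamma h_N(x)}]\le R_\gamma (2N^{\delta})^{\gamma^2/2}$ uniformly on this range. This includes the tiny region $x\le -1+N^{-2}\log\log N$, where \eqref{hN moment edge -1} gives an even smaller bound.

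\textbf{Discretization.} Choose grid points $y_k=\kappa_k$ for $k=0,1,\dots,2N^\delta$, with $\kappa_0=-1$. This gives at most $2N^\delta+1$ points. On each interval $[\kappa_{k},\kappa_{k+1}]$ we use monotonicity of the counting function. The term $\sqrt2\pi\sum_j 1_{\lambda_j\le x}$ is nondecreasing, and $N F(x)$ increases by exactly $1$ across the interval. Hence for $x\in[\kappa_k,\kappa_{k+1}]$,
\begin{equation*}
h_N(x)\le h_N(\kappa_{k+1})+\sqrt2\pi .
\end{equation*}
It therefore suffices to show that $\mathbb{P}(\max_k h_N(\kappa_k)\ge C\log N-\sqrt2\pi)\to0$.

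\textbf{Markov estimate.} Fix $\gamma>0$. By Markov's inequality and the uniform moment bound,
\begin{equation*}
\mathbb{P}\Big(\max_{k\le 2N^\delta} h_N(\kappa_k)\ge C\log N-\sqrt2\pi\Big)\le (2N^\delta+1)\,R_\gamma e^{\sqrt2\pi\gamma}(2N^\delta)^{\gamma^2/2}N^{-\gamma C}.
\end{equation*}
The exponent of $N$ on the right is $\delta(1+\gamma^2/2)-\gamma C$. Take $\gamma=1$ and $\delta<C/(2+1)$, which makes this exponent negative, so the probability tends to $0$. For $-h_N$ the argument is symmetric. Proposition \ref{Assumption 3.1} asserts the same moment bounds for $-h_N$. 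The interpolation step now uses the reversed inequality $-h_N(x)\le -h_N(\kappa_k)+\sqrt2\pi$ on $[\kappa_k,\kappa_{k+1}]$. Near $-1$ the quantity $-h_N\le\sqrt2\pi NF(x)$ is trivially small.

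\textbf{Main obstacle.} The delicate step is the uniformity of \eqref{eq: aysmptotics near -1}, and hence of \eqref{hN moment bulk}, down to $x=-1+N^{-2}\log\log N$. The grid points $\kappa_k$ for small $k$ lie at distance of order $k^2/N^2$ from $-1$. For $k=1$ this falls inside $[-1,-1+N^{-2}\log\log N]$, and there we rely on the comparison argument leading to \eqref{hN moment edge -1}. That bound loses only a factor $e^{O((\log\log N)^{1/2})}=N^{o(1)}$, which is harmless for the exponent count above. One must also check that the $O(1)$ error in Proposition \ref{Asymptotics in the edge} is uniform for $\gamma$ in a compact set, which holds since $\gamma\in[-M,M]$ there.
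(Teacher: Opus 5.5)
Your proposal is correct and follows essentially the paper's route. The paper omits this proof and cites Dai--Lu, but it writes out the identical scheme for the soft-edge analogue, Lemma \ref{hN near soft edge first lemma}: bound $h_N$ on $[\kappa_k,\kappa_{k+1}]$ by a grid value plus $\sqrt2\pi$, apply a union bound with the exponential Markov inequality using $\mathbb{E}[e^{\gamma h_N(\kappa_k)}]\le R_\gamma (NF(\kappa_k))^{\gamma^2/2}$, and treat $x\le -1+N^{-2}\log\log N$ separately via \eqref{hN moment edge -1}. One cosmetic point: all $k\lesssim(\log\log N)^{1/2}$, not only $k=1$, have $\kappa_k$ in that extreme region, but \eqref{hN moment edge -1} is uniform there, so this changes nothing.
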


As $x$ approaches the edge $-1$, the approximation in the above lemma admits further improved.

\begin{lemma} \label{hN near edge second lemma}
Let $h_N(x)$ be the eigenvalue counting function defined in \eqref{definition of hN}. Then, for any constant $C>0$, we have
\begin{equation}
\lim_{N\to\infty} \mathbb{P} \left[\max_{x\leq \kappa_{2\log N}} h_N(x) \geq C (\log N)^{1/2}\right] = 0.
\end{equation}
By symmetry, similar estimates hold for $-h_N(x)$.
\end{lemma}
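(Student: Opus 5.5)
Fix $C>0$. The plan is to bound $h_N$ on $(-1,\kappa_{2\log N}]$ using two ingredients: the exponential moment bounds of Proposition~\ref{Assumption 3.1}, which are sharper near the edge, and a discretization that uses the monotonicity of the counting function. The key input is \eqref{hN moment bulk} and \eqref{hN moment edge -1}. They give
\[
\mathbb{E}[e^{\gamma h_N(x)}]\le R_\gamma (NF(x))^{\gamma^2/2}
\]
for $x\ge -1+N^{-2}\log\log N$, and a bound of order $e^{C'\gamma(\log\log N)^{1/2}}(\log\log N)^{\gamma^2/4}$ below this point. For $x\le \kappa_{2\log N}$ we have $NF(x)\le 2\log N$. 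Hence $\mathbb{E}[e^{\gamma h_N(x)}]\le R_\gamma(2\log N)^{\gamma^2/2}$ uniformly on this range, for every fixed $\gamma>0$, and also for $\gamma$ growing slowly with $N$ if the uniformity in $\gamma\in[-M,M]$ of Proposition~\ref{Asymptotics in the edge} is used.

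Next I discretize. Take the grid points $y_k=\kappa_k$ for $k=0,1,\dots,\lceil 2\log N\rceil$. On $[\kappa_k,\kappa_{k+1}]$ the deterministic term $\sqrt2\pi NF(x)$ changes by exactly $\sqrt2\pi$, and the counting term is nondecreasing. Therefore
\[
\max_{x\in[\kappa_k,\kappa_{k+1}]}h_N(x)\le h_N(\kappa_{k+1})+\sqrt2\pi ,
\]
so it suffices to control $\max_k h_N(\kappa_k)$ over $O(\log N)$ points. Markov's inequality and a union bound give
\[
\mathbb{P}\Big[\max_k h_N(\kappa_k)\ge C(\log N)^{1/2}-\sqrt2\pi\Big]\le (2\log N+1)\,R_\gamma(2\log N)^{\gamma^2/2}e^{-\gamma C(\log N)^{1/2}+\sqrt2\pi\gamma}.
\]
With $\gamma$ fixed, the factor $e^{-\gamma C\sqrt{\log N}}$ beats every power of $\log N$, so the right-hand side tends to $0$. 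This means any fixed $\gamma$, for instance $\gamma=1$, already suffices, and no growing $\gamma$ is needed. The argument for $-h_N$ is the same, using the bound for $-h_N$ in Proposition~\ref{Assumption 3.1}. In that case the monotonicity step bounds $-h_N$ on each cell by $-h_N(\kappa_k)+\sqrt2\pi$, evaluated at the left endpoint.

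The main obstacle is to make sure the moment bound really holds uniformly with the factor $(NF(x))^{\gamma^2/2}$ all the way down to $x\approx -1$. For $x\ge -1+N^{-2}\log\log N$ this follows from \eqref{eq: aysmptotics near -1}, since $\log N+\tfrac12\log(1+x)\asymp \log(NF(x))$. For the leftmost grid points with $x<-1+N^{-2}\log\log N$, I would use \eqref{hN moment edge -1}, which gives a bound of order $e^{O(\sqrt{\log\log N})}$, again far smaller than $e^{\gamma C\sqrt{\log N}}$. For $-h_N$ near $-1$ there is a simpler route: $-h_N(x)\le \sqrt2\pi NF(x)\le \sqrt2\pi\cdot O(\sqrt{\log\log N})$ deterministically there. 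One caveat is that the error term $O(1)$ in Proposition~\ref{Asymptotics in the edge} must be uniform in $x$. The proposition asserts this, so $R_\gamma$ is independent of $x$ and the union bound goes through.
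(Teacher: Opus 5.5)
Your proof is correct and is essentially the argument the paper intends: the paper defers to Dai--Lu, and its own proof of Lemma~\ref{hN near soft edge first lemma} uses the same template. That template discretizes at the quantiles $\kappa_k$ via monotonicity of the counting function, then applies a union bound with Markov's inequality and the moment bounds \eqref{hN moment bulk} and \eqref{hN moment edge -1}; any fixed $\gamma$ suffices because $e^{-\gamma C\sqrt{\log N}}$ beats every power of $\log N$. Your treatment of $-h_N$ below $-1+N^{-2}\log\log N$ through the deterministic bound $-h_N(x)\le\sqrt{2}\pi NF(x)$ is a good point, since the paper derives \eqref{hN moment edge -1} by a monotonicity argument that only covers $+h_N$.
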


Using the two lemmas above, we now prove the following proposition concerning eigenvalue rigidity near the edge.

\begin{proposition} \label{main proposition for edge rigidity}
For $F(x)$ defined in \eqref{distribution of eq measure}, there exists $\delta > 0$ such that
\begin{equation} \label{eq: main proposition for edge rigidity}
\lim_{N\to\infty} \mathbb{P} \left(F'(\kappa_j)|\lambda_j-\kappa_j| \leq \frac{(1+\varepsilon)\log N}{\pi N} \text{ for } 1\leq j \leq N^\delta \right) = 1.
\end{equation}
\end{proposition}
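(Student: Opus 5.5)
We work with the standard case $\psi_V=1/\pi$, so $F'(x)=\frac{1}{\pi}\sqrt{\frac{1-x}{1+x}}$ and $F(x)\asymp (1+x)^{1/2}$ near $-1$. Hence $\kappa_j+1\asymp j^2/N^2$ and $F'(\kappa_j)\asymp N/j$. By the reduction made earlier we only treat $\lambda_j>\kappa_j$. On this event the target inequality reads
\begin{equation*}
F'(\kappa_j)(\lambda_j-\kappa_j)\le \frac{(1+\varepsilon)\log N}{\pi N}.
\end{equation*}
The plan is to convert a bound on $|h_N(\lambda_j)|$, i.e.\ on $N(F(\lambda_j)-F(\kappa_j))$, into a bound on $\lambda_j-\kappa_j$. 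For this we use $F(\lambda_j)-F(\kappa_j)=\int_{\kappa_j}^{\lambda_j}F'$ together with the concavity-type behaviour of $F$ near $-1$. Write $F(x)=G(\sqrt{1+x})$ with $G$ smooth and $G'(0)>0$, so that $\lambda_j-\kappa_j=(\sqrt{1+\lambda_j}-\sqrt{1+\kappa_j})(\sqrt{1+\lambda_j}+\sqrt{1+\kappa_j})$.

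\textbf{Step 1 (localization).} First we show that, with probability tending to one, $\lambda_j\le \kappa_{2N^\delta}$ for all $j\le N^\delta$, so that Lemma~\ref{hN near edge first lemma} applies at $x=\lambda_j$. This follows from $h_N(\kappa_{2N^\delta})\ge -C\log N$, i.e.\ $\#\{\lambda_i\le\kappa_{2N^\delta}\}\ge 2N^\delta - C\log N/(\sqrt2\pi)> N^\delta$, which is the $-h_N$ version of Lemma~\ref{hN near edge first lemma}. Similarly, Lemma~\ref{hN near edge second lemma} localizes $\lambda_j\le\kappa_{2\log N}$ for $j\le\log N$.

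\textbf{Step 2 (range $\log N\le j\le N^\delta$).} By Lemma~\ref{hN near edge first lemma}, with any fixed small $C$ we have $N(F(\lambda_j)-F(\kappa_j))\le C'\log N$, where $C'=C/(\sqrt2\pi)$. Writing $s=\sqrt{1+x}$, this gives $s_{\lambda_j}-s_{\kappa_j}\lesssim C'\log N/N$, while $s_{\kappa_j}\asymp j/N$. Hence
\begin{equation*}
\lambda_j-\kappa_j\lesssim \frac{C'\log N}{N}\Big(\frac{j}{N}+\frac{C'\log N}{N}\Big),
\end{equation*}
and multiplying by $F'(\kappa_j)\asymp N/j$ gives
\begin{equation*}
F'(\kappa_j)(\lambda_j-\kappa_j)\lesssim \frac{C'\log N}{N}\Big(1+\frac{C'\log N}{j}\Big)\lesssim \frac{C'\log N}{N}.
\end{equation*}
The final bound holds since $j\ge\log N$. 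Choosing $C$ small makes this at most $\frac{\log N}{\pi N}$.

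\textbf{Step 3 (range $j\le \log N$).} Here Lemma~\ref{hN near edge second lemma} gives $N(F(\lambda_j)-F(\kappa_j))\lesssim C(\log N)^{1/2}$. The same computation then yields
\begin{equation*}
F'(\kappa_j)(\lambda_j-\kappa_j)\lesssim \frac{(\log N)^{1/2}}{N}\Big(1+\frac{(\log N)^{1/2}}{j}\Big)\lesssim \frac{\log N}{N}\cdot C\Big(\frac{1}{(\log N)^{1/2}}+\frac{1}{j}\Big).
\end{equation*}
\textbf{This is the main obstacle.} For $j$ bounded, the term $C\log N/(jN)$ is of the critical order with constant $C/j$, so we need $C$ small, which the lemma permits for any $C>0$. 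I would therefore apply Lemma~\ref{hN near edge second lemma} with $C$ chosen so that the implied constant times $C$ is below $1/\pi$; this handles $j\ge1$ uniformly.

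If the constants in the asymptotics $F'(\kappa_j)\asymp N/j$ are not tight enough, I would instead use the exact form $F(x)=\frac{2}{\pi}\sqrt{2}\sqrt{1+x}\,(1+O(1+x))$ to get sharp constants. Finally, we take the intersection of these high-probability events, which finishes the proof.
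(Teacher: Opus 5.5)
Your proof is correct. It rests on the same ingredients as the paper's proof: the reduction to $\lambda_j>\kappa_j$, the split at $j=\log N$, and the lower-tail ($-h_N$) forms of Lemmas~\ref{hN near edge first lemma} and~\ref{hN near edge second lemma}. Where you differ is in converting a bound on $h_N$ into a bound on $\lambda_j-\kappa_j$.

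The paper argues by contradiction at the deterministic point $x_j=\kappa_j+\frac{j\log N}{4N^2}$. If $\lambda_j>x_j$, at most $j-1$ eigenvalues lie below $x_j$, so $h_N(x_j)<-\sqrt2\pi N\bigl(F(x_j)-F(\kappa_j)\bigr)$. The mean value theorem makes the right side $\le -C\log N$, or $\le -C\sqrt{\log N}$ when $j\le\log N$. Since $x_j\le\kappa_{2N^\delta}$ (respectively $x_j\le\kappa_{2\log N}$) holds deterministically, the paper never has to locate $\lambda_j$. It applies the lemmas with the fixed constant that the mean value estimate produces.

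You instead use the exact identity $h_N(\lambda_j)=\sqrt2\pi N\bigl(F(\kappa_j)-F(\lambda_j)\bigr)$ at the random point $\lambda_j$. This costs you two things:
\begin{itemize}
\item the extra localization step, which you handle correctly through the eigenvalue counts at $\kappa_{2N^\delta}$ and $\kappa_{2\log N}$;
\item invoking the lemmas with a small constant $C$, which their ``for any $C>0$'' form allows, at the price of a smaller $\delta$ in Lemma~\ref{hN near edge first lemma}.
\end{itemize}

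In return, the substitution $s=\sqrt{1+x}$ gives the exact factorization $F'(\kappa_j)(\lambda_j-\kappa_j)=\frac{G'(s_{\kappa_j})}{2s_{\kappa_j}}(s_{\lambda_j}-s_{\kappa_j})(s_{\lambda_j}+s_{\kappa_j})$. This equals $G'(s_{\kappa_j})\,\Delta s$ plus the quadratic term $G'(s_{\kappa_j})(\Delta s)^2/(2s_{\kappa_j})$, where $\Delta s=s_{\lambda_j}-s_{\kappa_j}$. It makes transparent that the quadratic term is the only obstruction, and hence why the $\sqrt{\log N}$ refinement is exactly what is needed for $j\le\log N$. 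In the paper, the same dichotomy appears as the case distinction on whether $j^2/N^2$ or $j\log N/N^2$ dominates in $1+\kappa_j+\frac{j\log N}{4N^2}$.
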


\begin{proof}

From the discussion in \eqref{eq 1: only need one side} and \eqref{eq 2: only need one side}, we only need to consider the case $\lambda_j>\kappa_j$. By defition \eqref{distribution of eq measure}, we have  
\begin{equation}
F'(\kappa_j) = \frac{1}{\pi} \sqrt{\frac{1-\kappa_j}{1+\kappa_j}}  \sim \frac{4N}{j \pi^2}
\end{equation}
as $N\to\infty$. Therefore, it suffices to show that there exists some $\delta>0$ such that
\begin{equation}\label{edge: Prob: lambda-kappa}
 \lim_{N\to\infty} \mathbb{P} \left(\lambda_j-\kappa_j\leq \frac{j\log N}{4N^2} \quad\textrm{ for } j\leq N^\delta\right) = 1.
\end{equation}

Next, we proceed with the proof using the above two lemmas.

\textbf{Step 1: Consider the regime where $\log N\leq j\leq N^\delta$.} Assume that there exists $j$ such that $\lambda_j - \kappa_j > \frac{j\log N}{4N^2}$. From the definition of $h_N(x)$ in \eqref{definition of hN}, we have
\begin{equation}\label{same}
\frac{1}{\sqrt{2}\pi N} h_N\left(\kappa_j + \frac{j\log N}{4N^2}\right) \leq  F(\kappa_j) - F\left(\kappa_j+\frac{j\log N}{4N^2}\right)
\end{equation}
for sufficiently large $N$. Recall that
\begin{equation}\label{same2}
\kappa_j + \frac{j\log N}{4N^2} = -1+C\frac{j^2}{N^2}+\frac{j\log N}{4N^2} +  O(j^4/N^4)
\end{equation}
as $N\to\infty$. Observe that for $\log N \leq j \leq N^\delta$, the expansion is dominated by the $j^2/N^2$ term. Applying the mean value theorem together with the behavior of $F(x)$ near $-1$, we can find a constant $C > 0$ such that
\begin{equation} \label{same3}
F\left(\kappa_j+\frac{j\log N}{4N^2}\right) - F(\kappa_j) \geq  C\cdot \frac{N}{j}\cdot \frac{j\log N}{4N^2}  C\frac{\log N}{4N}.
\end{equation}
Substituting this into \eqref{same} yields
\begin{equation}\label{star}
\frac{1}{\sqrt{2}\pi N} h_N\left(\kappa_j + \frac{j\log N}{4N^2}\right) \leq -C\frac{\log N}{N}.
\end{equation}
As $\kappa_j +1 \asymp \frac{j^2}{N^2}$, we have $
\kappa_j +\frac{j\log N}{4N^2} \leq \kappa_{2j}\leq \kappa_{2N^\delta}$, which implies that $\min_{x\leq \kappa_{2N^\delta}} h_N(x) \leq -C\log N$ for some $C>0$.  Using Lemma \ref{hN near edge first lemma}, one can readily see that the probability of this event tends to $0$. Consequently, for sufficiently small $\delta>0$, inequality \eqref{edge: Prob: lambda-kappa} is valid for  $\log N\leq j\leq N^\delta$.

\textbf{Step 2: Consider the regime where $1\leq j\leq \log N$.} Again, assume that there exists $j$ such that $\lambda_j - \kappa_j > \frac{j\log N}{4N^2}$. The formulas \eqref{same} and \eqref{same2} still hold. However, since $1\leq j\leq \log N$, the term $\frac{j\log N}{4N^2}$ in expansion \eqref{same2} dominates over $\frac{j^2}{N^2}$. Consequently, the approximation in \eqref{same3} is modified to
\begin{equation}
F\left(\kappa_j+\frac{j\log N}{4N^2}\right) - F(\kappa_j) \geq C\cdot  \frac{2N}{\sqrt{j\log N}}\cdot \frac{j\log N}{4N^2} = C\frac{\sqrt{j\log N}}{2N}.
\end{equation}
Substituting this result into \eqref{same} yields
\begin{equation}\label{star2}
\frac{1}{\sqrt{2}\pi N} h_N\left(\kappa_j + \frac{j\log N}{4N^2}\right) \leq -C\frac{\sqrt{j\log N}}{4N} \leq -C\frac{ \sqrt{\log N}}{4N}.
\end{equation}
From \eqref{same2}, we have $
\kappa_j + \frac{j\log N}{4N^2} \leq \kappa_{2\log N}
$
for sufficiently large $N$. Therefore, inequality \eqref{star2} implies that $\min_{x\leq \kappa_{2\log N}} h_N(x) \leq -C \sqrt{\log N}$.  Lemma \ref{hN near edge second lemma} implies that the probability of this event tends to $0$ as $N \to \infty$, which in turn yields that  \eqref{edge: Prob: lambda-kappa} holds for $1\leq j\leq \log N$.

\smallskip

We have now established \eqref{edge: Prob: lambda-kappa} which completes the proof of Proposition \ref{main proposition for edge rigidity}.
\end{proof}

This completes the proof for the regime where $\kappa_j \leq 0$. Now let us turn to the right regime where $\kappa_j \geq 0$, and in this region the soft edge displays a very different pattern from the hard edge. 

\subsubsection{Eigenvalues near the Soft Edge: $\kappa_j > 0$}

We will now prove Theorem \ref{main theorem} in the regime $\kappa_j > 0$, or equivalently, $j > N_1$. Similarly, we first prove that the upper bound of \eqref{eq:main-theorem} holds in the bulk region which can be extended to very close to the soft edge $1$. Then, we use the fact that $F'(1) = 0$ and the monotonicity of $F'(x)$ together to prove the edge rigidity.

\smallskip
\noindent\textbf{Bulk Rigidity} 
\smallskip

Different from the regime where $\kappa_j \leq 0$, we do not have an apriori estimate of $|\lambda_j - \kappa_j|$;  since from \eqref{LUE eq measure}, it can be expected that $|\lambda_j - \kappa_j|$ will be larger and larger as we approach $1$, but this will be ``balanced'' by the decrease of $F'(x)$. Hence, the method of iteration does not work. Instead, we first prove a weaker estimation of $|\lambda_j - \kappa_j|$ in the bulk and close to the edge, i.e.,  push the result to the regime where $N-j \sim o(N)$. Then we argue that it suffices for our proof of bulk rigidity.

\begin{lemma} \label{lemma: bulk estimation of lambda-kappa near 1}
For $N_1< j \leq N- [\log N]^4$, one has
\begin{equation}\label{eq: bulk estimation of lambda-kappa near 1}
\lim_{N\to\infty} \mathbb{P} \left( \lambda_j - \kappa_j \leq N^{-2/3} \right) = 1.
\end{equation}
\end{lemma}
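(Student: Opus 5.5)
We argue by contradiction, converting a large displacement $\lambda_j-\kappa_j$ into a large negative value of $h_N$, exactly as in Step 1 of the proof of Proposition \ref{main proposition for edge rigidity}. The statement is to be read uniformly in $j$, i.e. the event that $\lambda_j-\kappa_j\le N^{-2/3}$ for all $N_1<j\le N-[\log N]^4$ simultaneously; this only helps, because everything will be deduced from the single event $\{\min_{x\in[-1,1]}h_N(x)\ge -(\sqrt2+1)\log N\}$. That event has probability tending to one by Proposition \ref{prop: upper bound of hN} applied to $-h_N$, whose exponential moments satisfy the same bound \eqref{prop5-1-eq2} by Proposition \ref{Assumption 3.1}, so the Markov/union argument goes through verbatim.

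Suppose that for some $j$ in the range we have $\lambda_j>\kappa_j+N^{-2/3}$, and set $y=\kappa_j+N^{-2/3}$. Then fewer than $j$ eigenvalues lie in $(-1,y]$. From \eqref{definition of hN} we get
\[
h_N(y)\le \sqrt2\pi\big(j-NF(y)\big)=-\sqrt2\pi N\big(F(y)-F(\kappa_j)\big).
\]
If $y\ge1$, then $F(y)=1$, and the right side equals $-\sqrt2\pi(N-j)\le-\sqrt2\pi(\log N)^4$. Since $h_N(x)=h_N(1)$ for $x\ge1$, this would force $\min_{[-1,1]} h_N\le -c(\log N)^4$, contradicting the good event. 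So we may assume $y<1$.

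The key step is a lower bound on $N(F(y)-F(\kappa_j))$. Write $d_j=1-\kappa_j$. Since $1-F(x)\asymp(1-x)^{3/2}$ near $1$, we have $d_j\asymp((N-j)/N)^{2/3}$, and hence $d_j\gtrsim N^{-2/3}(\log N)^{8/3}\gg N^{-2/3}$ throughout the range. On $[\kappa_j,y]$ the density satisfies $F'(x)\asymp(1-x)^{1/2}\ge c\,(d_j-N^{-2/3})^{1/2}\ge c'd_j^{1/2}$. This gives
\[
N\big(F(y)-F(\kappa_j)\big)\ge c\,N\cdot N^{-2/3}d_j^{1/2}\ge c\,N^{1/3}\cdot N^{-1/3}(\log N)^{4/3}=c(\log N)^{4/3}.
\]
For $j$ with $\kappa_j$ bounded away from $1$ the bound is far better, of order $N^{1/3}$. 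In every case $h_N(y)\le -c(\log N)^{4/3}$, which is much smaller than $-(\sqrt2+1)\log N$ for large $N$. This contradicts the good event, so the assumed displacement is impossible for every $j$ in the range.

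The only delicate points are the following. The exponent $4$ in $[\log N]^4$ is what makes $(\log N)^{4/3}$ beat $\log N$; this is the balance between the $(1-x)^{1/2}$ vanishing of $F'$ and the $N^{-2/3}$ window, and I would check the constants carefully there. The other point is applying Proposition \ref{prop: upper bound of hN} to $-h_N$, which needs the symmetric moment bound. I expect the density estimate to be the main obstacle: it needs uniform two-sided bounds $F'(x)\asymp(1-x)^{1/2}$ on $[0,1]$, which follow from $\psi_V>0$ real-analytic. The case $\kappa_j$ near $0$ is harmless because there $F'$ is bounded below.
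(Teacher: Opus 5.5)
Your proposal is correct and is essentially the paper's argument: you assume $\lambda_j>\kappa_j+N^{-2/3}$, deduce $h_N(\kappa_j+N^{-2/3})\le -c(\log N)^{4/3}$, and contradict the $O(\log N)$ bound on $|h_N|$ from Theorem \ref{maximum of hN}, applied also to $-h_N$. The only cosmetic difference is that the paper reduces to the extreme index $j=N-[\log N]^4$ using the monotonicity of $F'$ (available because it takes $\psi_V\equiv 1/\pi$), whereas your pointwise bound $F'(x)\asymp(1-x)^{1/2}$ also works for general $\psi_V$; your separate case $y\ge 1$ is vacuous, since $d_j\gg N^{-2/3}$ forces $y<1$ (and in any case $h_N$ need not be constant on $[1,\infty)$ for the LUE, because eigenvalues can exceed $1$).
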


\begin{proof}
The idea of the proof is also similar to Proposition \ref{main proposition for edge rigidity}. Assume that there exists $N_1<j \leq N - [\log N]^4$ such that $\lambda_j - \kappa_j > N^{-2/3}$. From the definition of $h_N(x)$ in \eqref{definition of hN}, one has
\begin{equation}
\frac{1}{\sqrt{2}\pi N} h_N\left(\kappa_j + N^{-2/3}\right) \leq F(\kappa_j) - F\left(\kappa_j + N^{-2/3}\right)
\end{equation}
for sufficiently large $N$. Since $F'(x)$ decreases in $(-1, 1)$, we have
\begin{eqnarray}\label{estimation of hN near 1: bulk}
& & \frac{1}{\sqrt{2}\pi N} h_N\left(\kappa_j + N^{-2/3}\right)  \leq  F(\kappa_{N-[\log N]^4}) - F\left( \kappa_{N-[\log N]^4} + N^{-2/3} \right) 
\end{eqnarray}
But from the definition of $\kappa_j$, we have
\begin{equation}
1 - \kappa_{N-[\log N]^4} \asymp \left(\frac{(\log N)^4}{N} \right)^{2/3}
\end{equation}
Using \eqref{estimation of hN near 1: bulk}, one has
\begin{eqnarray}
& & \frac{1}{\sqrt{2}\pi N} h_N\left(\kappa_j + N^{-2/3}\right)  \leq -C N^{-2/3} \cdot \left(\frac{(\log N)^4}{N} \right)^{1/3} = -C (\log N)^{4/3} / N,
\end{eqnarray}
which further implies that
\begin{equation}
h_N(\kappa_j + N^{-2/3}) \leq -C (\log N)^{4/3} 
\end{equation}
for some $N_1 < j \leq N - [\log N]^4$. By Theorem \ref{maximum of hN}, the limit probability of this event tends to $0$ as $N\to\infty$.
\end{proof}

Now we can present our result for bulk rigidity in the right-half regime.

\begin{proposition} \label{Main theorem: Eigenvalue Rigidity on the Bulk: right}
For $F(x)$ defined in \eqref{distribution of eq measure}, we have
\begin{equation} \label{eq: Eigenvalue Rigidity on the Bulk: right}
\lim_{N\to\infty}\mathbb{P}\left(F'(\kappa_j)|\lambda_j - \kappa_j| \leq \frac{(1+\varepsilon)\log N}{\pi N} \text{ for } j = N_1, \cdots, N - [\log N]^4 \right) = 1.
\end{equation}
\end{proposition}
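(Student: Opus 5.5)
We reduce the statement to Theorem \ref{maximum of hN} (more precisely, Proposition \ref{prop: upper bound of hN} applied to $-h_N$) via a Taylor expansion. Lemma \ref{lemma: bulk estimation of lambda-kappa near 1} controls the quadratic remainder. As in the hard-edge discussion, the case $\lambda_j\le\kappa_j$ is already handled by \eqref{eq 1: only need one side}–\eqref{eq 2: only need one side}, so we assume $\lambda_j>\kappa_j$. We work on the event $\mathcal{E}_N$ where three things hold:
\begin{itemize}
\item $\max_x(-h_N(x))\le(\sqrt2+\delta')\log N$;
\item $\max_x h_N(x)\le(\sqrt2+\delta')\log N$;
\item $0<\lambda_j-\kappa_j\le N^{-2/3}$ for all $N_1< j\le N-[\log N]^4$.
\end{itemize}
By Proposition \ref{prop: upper bound of hN} (and its symmetric version for $-h_N$, which follows in the same way from the bounds for $-h_N$ in Proposition \ref{Assumption 3.1}) together with Lemma \ref{lemma: bulk estimation of lambda-kappa near 1}, we have $\mathbb{P}(\mathcal{E}_N)\to1$.

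On $\mathcal{E}_N$, \eqref{hN(lambda)} and the mean value theorem give
\[
\sqrt2\pi N F'(\zeta_j)(\lambda_j-\kappa_j)=-h_N(\lambda_j)\le(\sqrt2+\delta')\log N,
\]
with $\zeta_j\in(\kappa_j,\lambda_j)$. Since $F'$ is decreasing, $F'(\zeta_j)\le F'(\kappa_j)$, so we cannot compare in the favourable direction directly. Instead we need the ratio
\[
\frac{F'(\kappa_j)}{F'(\zeta_j)}\le 1+o(1)
\]
uniformly in the range. With $\psi_V=1/\pi$, $F'(x)=\frac1\pi\sqrt{(1-x)/(1+x)}$. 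For $\kappa_j\in[0,1)$ the factor $(1+x)^{-1/2}$ is harmless, and the ratio is essentially
\[
\left(\frac{1-\kappa_j}{1-\zeta_j}\right)^{1/2}\le\left(1+\frac{N^{-2/3}}{1-\kappa_j-N^{-2/3}}\right)^{1/2}.
\]
For $j\le N-[\log N]^4$ we have $1-\kappa_j\gtrsim((\log N)^4/N)^{2/3}=(\log N)^{8/3}N^{-2/3}$. The correction is therefore $O((\log N)^{-8/3})=o(1)$. Hence
\[
F'(\kappa_j)(\lambda_j-\kappa_j)\le(1+o(1))\frac{(\sqrt2+\delta')\log N}{\sqrt2\pi N}\le\frac{(1+\varepsilon)\log N}{\pi N}
\]
once $\delta'$ is chosen small relative to $\varepsilon$, which gives \eqref{eq: Eigenvalue Rigidity on the Bulk: right}.

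The only delicate step is this uniform control of $F'(\kappa_j)/F'(\zeta_j)$ near the soft edge. It is exactly why the cutoff $[\log N]^4$ and the a priori bound $N^{-2/3}$ from Lemma \ref{lemma: bulk estimation of lambda-kappa near 1} are needed. Their compatibility hinges on the exponent comparison $(\log N)^{8/3}N^{-2/3}\gg N^{-2/3}$, which we would double-check, including the constants in $1-\kappa_j\asymp((N-j)/N)^{2/3}$ coming from $F(1)-F(x)\asymp(1-x)^{3/2}$. For general $\psi_V$ the same argument works, since $\psi_V$ is real-analytic and positive, so $\psi_V(\kappa_j)/\psi_V(\zeta_j)=1+O(N^{-2/3})$.
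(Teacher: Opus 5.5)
Your proof is correct and is essentially the paper's argument. The paper expands $h_N(\lambda_j)$ to second order via \eqref{Taylor expansion of hN} and shows the $F''(\zeta_j)$ remainder is $O((\log N)^{-4/3})$, using Lemma \ref{lemma: bulk estimation of lambda-kappa near 1} together with $1-\kappa_j\gtrsim(\log N)^{8/3}N^{-2/3}$. You instead take the first-order mean-value form and bound $F'(\kappa_j)/F'(\zeta_j)=1+O((\log N)^{-8/3})$ from exactly the same two inputs, which is the multiplicative version of the same estimate.

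One small fix: do not put $0<\lambda_j-\kappa_j$ for all $j$ into $\mathcal{E}_N$, since that event does not have probability tending to $1$. Keep only $\lambda_j-\kappa_j\le N^{-2/3}$ in $\mathcal{E}_N$, and restrict to the indices with $\lambda_j>\kappa_j$ afterwards.
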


\begin{proof}
Recalling the Taylor expansion of $h_N(x)$ in \eqref{Taylor expansion of hN} and the maximum of $h_N(x)$ in Theorem \ref{maximum of hN}, it suffices to show the second-order term in \eqref{Taylor expansion of hN} is at most of order $O(1)$. In fact, one can see that
\begin{eqnarray}
\zeta_j \leq \lambda_j \leq \lambda_{N - [\log N]^4} \leq \kappa_{N - [\log N]^4} + N^{-2/3} \leq 1- C (\log N)^{8/3} N^{-2/3} + N^{-2/3}
\end{eqnarray}
holds in probability as $N\to\infty$. It follows that
\begin{eqnarray}
\sqrt{2}\pi N \frac{|F''(\zeta_j)|}{2}(\lambda_j - \kappa_j)^2 \leq C N (\log N)^{-4/3} N^{1/3} N^{-4/3} \leq C'(\log N)^{-4/3},
\end{eqnarray}
which is of order $o(1)$. In the first inequality, we have used Lemma \ref{lemma: bulk estimation of lambda-kappa near 1}. Therefore, we conclude the proof of the proposition.

\end{proof}

\smallskip
\noindent\textbf{Soft Edge Rigidity: Refinement of the Bound of $h_N(x)$} 
\smallskip

Now we will turn to proving the soft edge rigidity, i.e., the regime where $j\geq N - [\log N]^4$. Similarly, we will show that the upper bound of $h_N(x)$ can also be refined near the soft edge $1$. To be specific, we have the following lemma.
\begin{lemma} \label{hN near soft edge first lemma}
Let $h_N(x)$ be the eigenvalue counting function defined in \eqref{definition of hN}. Then, for any constant $C>0$, one has
\begin{equation}
\lim_{N\to\infty} \mathbb{P} [\max_{x\geq \kappa_{N-[\log N]^4}} h_N(x) \geq C (\log\log N)^3] = 0.
\end{equation}
By symmetry, similar estimates hold for $-h_N(x)$.
\end{lemma}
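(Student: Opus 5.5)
The plan is to bound $h_N$ on $[\kappa_{N-[\log N]^4},\infty)$ by a deterministic part plus a count of eigenvalues, and to control that count with the edge exponential moment \eqref{eq: aysmptotics near 1} at well-chosen points. For $x\ge 1$ and $\lambda_N\le 1+\epsilon$, $h_N(x)=\sqrt2\pi(\#\{\lambda_j\le x\}-N)\le 0$. Hence only $x\in[\kappa_{N-[\log N]^4},1)$ matters. For such $x$,
\[
h_N(x)=-\sqrt2\pi\big(\#\{\lambda_j>x\}-N(1-F(x))\big)\le \sqrt2\pi N(1-F(x))\le \sqrt2\pi[\log N]^4 .
\]
This crude bound is far too weak, so a finer grid argument is needed, modelled on the proof of Lemma \ref{hN near edge second lemma}.

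\textbf{Step 1.} Split the window into a near-edge part $x\ge x_*:=1-N^{-2/3}(\log\log N)^{2}$ and an intermediate part $[\kappa_{N-[\log N]^4},x_*]$.

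On the near-edge part, monotonicity of the counting function gives $h_N(x)\le \sqrt2\pi N(1-F(x_*))\asymp (\log\log N)^3$, since $N(1-F(x_*))\asymp N(1-x_*)^{3/2}$. I will choose the exponent so that this is $\le \tfrac{C}{2}(\log\log N)^3$, adjusting the power of $\log\log N$ in $x_*$ to get a small constant if needed.

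On the intermediate part, place grid points $y_k$ with $N(1-F(y_k))=m_k$, geometric in $k$, say $m_k=2^k$, ranging from $(\log\log N)^3$ up to $[\log N]^4$. That gives $O(\log\log N)$ points. Between consecutive grid points, monotonicity gives
\[
h_N(x)\le h_N(y_{k+1})+\sqrt2\pi N\big(F(y_{k+1})-F(y_k)\big),
\]
where $y_{k+1}>y_k$ is the larger point. The drift term is of size $m_k$, which is too big for large $k$. So I will instead take the grid with additive step $(\log\log N)^3/10$ in $m$ when $m$ is large. This costs $\le [\log N]^4$ grid points, so a union bound requires tails of size $o((\log N)^{-4})$ at each point.

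\textbf{Step 2 (main obstacle).} Obtain the pointwise tail bounds by Markov's inequality with a large $\gamma$, using \eqref{eq: aysmptotics near 1}. That formula gives
\[
\mathbb E e^{\gamma h_N(y)}\le C_\gamma\big(N(1-y)^{3/2}\big)^{\gamma^2/2}\asymp C_\gamma\, m^{\gamma^2/2}\le C_\gamma(\log N)^{2\gamma^2}.
\]
Then
\[
\mathbb P\big(h_N(y)\ge \tfrac C2(\log\log N)^3\big)\le C_\gamma(\log N)^{2\gamma^2}e^{-\gamma\frac C2(\log\log N)^3}.
\]
This is superpolynomially small in $\log N$ for any fixed $\gamma$, because $(\log\log N)^3\gg\log\log N$. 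So the union over $\le(\log N)^4$ points still gives $o(1)$.

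The delicate point is uniformity. Proposition \ref{Asymptotics in the edge} holds uniformly for $\gamma\in[-M,M]$, but only for $x\le 1-N^{-2/3}\log\log N$. Our grid lies in that range, since $x_*$ was chosen with $(\log\log N)^2>\log\log N$. I must also check that the $O(1)$ error term is uniform in $x$ there, which the proposition asserts.

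\textbf{Step 3.} Combine the near-edge bound with the union bound over the grid to get the claim for $h_N$. The same argument with $\gamma<0$ (the bounds for $-h_N$ in Proposition \ref{Assumption 3.1} and \eqref{eq: aysmptotics near 1} with $-\gamma$) gives the symmetric statement for $-h_N$. On the near-edge part, $-h_N(x)\le \sqrt2\pi\#\{\lambda_j>x\}\le\sqrt2\pi\#\{\lambda_j>x_*\}$. This is controlled by $-h_N(x_*)$ plus $N(1-F(x_*))$, both already handled.
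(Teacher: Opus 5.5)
Your argument is correct and has the same skeleton as the paper's: discretize $[\kappa_{N-[\log N]^4},1)$, apply Markov and a union bound with the soft-edge moment asymptotics \eqref{eq: aysmptotics near 1}, and use the trivial bound $h_N(x)\le\sqrt2\pi N(1-F(x))$ on the last window $x\ge 1-N^{-2/3}(\log\log N)^{a}$. It differs at the one step that actually matters.

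You bound $\mathbb{E}e^{\gamma h_N(y)}\le C_\gamma\,(N(1-F(y)))^{\gamma^2/2}\le C_\gamma(\log N)^{2\gamma^2}$, reading this directly off the $\frac{3\gamma^2}{4}\log(1-x)$ term. The paper instead estimates $\mathcal{P}_1$ via \eqref{hN moment bulk right}, i.e.\ with $(NF(\kappa_j))^{\gamma^2/2}=j^{\gamma^2/2}$, and with $\gamma=\sqrt2$ arrives at $C'N[\log N]^4e^{-\sqrt2C(\log\log N)^3}$. Since $e^{(\log\log N)^3}=N^{o(1)}$, that bound does not tend to $0$. Your edge-scaled bound is what makes the union bound close. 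The at most $[\log N]^4$ grid points and the moment bound together cost only a fixed power of $\log N$, which $e^{-\gamma\frac C2(\log\log N)^3}=(\log N)^{-\gamma\frac C2(\log\log N)^2}$ absorbs for any fixed $\gamma>0$.

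Your deterministic treatment of the final window is equivalent to the paper's $\mathcal{P}_2$, because \eqref{hN moment edge 1} is just that deterministic bound exponentiated.

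The only thing to fix is the mesh. An additive step $(\log\log N)^3/10$ in $m=N(1-F(y))$ gives drift $\frac{\sqrt2\pi}{10}(\log\log N)^3$, which stays below $\frac C2(\log\log N)^3$ only when $C\ge\sqrt2\pi/5$. For arbitrary $C>0$, take the step at most $\frac{C}{2\sqrt2\pi}(\log\log N)^3$. Alternatively, use the grid $\{\kappa_j\}$ as the paper does: step $1$ in $m$, drift $\sqrt2\pi$, and still at most $[\log N]^4$ points.
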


\begin{proof}
This proof is very similar to the proof of Lemma \ref{hN near edge first lemma}, with some tiny modifications. From the definition of $h_N$ in \eqref{definition of hN},  one can see that for $\kappa_j \leq x \leq \kappa_{j+1}$, one has
\begin{equation}
h_N(\kappa_j) - \sqrt{2}\pi \leq h_N(x) \leq h_N(\kappa_{j+1}) + \sqrt{2}\pi.
\end{equation}
And it further implies that
\begin{eqnarray*}
\mathbb{P} \left(\max_{x\geq \kappa_{N - [\log N]^4}} h_N(x) \geq C (\log\log N)^3 \right) &\leq& \mathbb{P} \left(\max_{j\geq N - [\log N]^4} h_N(\kappa_j) \geq C (\log\log N)^3 - \sqrt{2}\pi\right)  \\
&\leq& \sum_{j\geq N - [\log N]^4} \mathbb{P} \left(h_N(\kappa_j)\geq C (\log\log N)^3 -\sqrt{2}\pi\right).
\end{eqnarray*}
By Markov's inequality, we have, for a constant $\gamma > 0$,
\begin{equation}
\mathbb{P} \left(\max_{x\geq \kappa_{N - [\log N]^4}} h_N(x) \geq C (\log\log N)^3 \right)
\leq  \sum_{j\geq N - [\log N]^4} C' \frac{\mathbb{E} [e^{\gamma h_N(\kappa_j)}]}{e^{\gamma C (\log\log N)^3}}.
\end{equation}
We split the above sum into two parts based on the location of $\kappa_j$. Define the threshold $t_N = 1 - N^{-2/3} \log\log N$, and let
\begin{align*}
\mathcal{P}_1 = \sum_{j : \, \kappa_{N- [\log N]^4} \leq \kappa_j < t_N } C' \frac{\mathbb{E} [e^{\gamma h_N(\kappa_j)} ]}{e^{\gamma C (\log\log N)^3}}, \qquad 
\mathcal{P}_2 = \sum_{j : \, \kappa_j > t_N} C' \frac{\mathbb{E} [ e^{\gamma h_N(\kappa_j)} ]}{e^{\gamma C (\log\log N)^3}}.
\end{align*}
We now bound $\mathcal{P}_1$ and $\mathcal{P}_2$ separately. For $\mathcal{P}_1$, using \eqref{hN moment bulk right}, we have
\begin{eqnarray}
\mathcal{P}_1 \leq  \sum_{j : \, \kappa_{N- [\log N]^4} \leq \kappa_j < t_N }  C' \frac{(N F(\kappa_j))^{\gamma^2 / 2}}{e^{\gamma C (\log\log N)^3}}.
\end{eqnarray}
By letting $\gamma = \sqrt{2}$, we obtain
\begin{eqnarray}
\mathcal{P}_1 \leq \sum_{j \geq N - [\log N]^4} C' \frac{j}{e^{\sqrt{2} C (\log\log N)^3}} < C' \frac{ N[\log N]^4}{e^{\sqrt{2} C (\log\log N)^3}}.
\end{eqnarray}
Hence, we have $\mathcal{P}_1 \to 0$ as $N\to\infty$. 

For $\mathcal{P}_2$, using \eqref{hN moment edge 1}, we have
\begin{eqnarray}
\mathcal{P}_2 \leq \sum_{j : \, \kappa_j > t_N} C' \frac{e^{C(\log\log N)^{3/2}}}{e^{\gamma C (\log\log N)^3}}.
\end{eqnarray}
Since the number of such indices $j$ is of the order $ O\left((\log\log N)^{3/2} \right)$, we have 
\begin{eqnarray}
\mathcal{P}_2 \leq C' (\log\log N)^{3/2} \frac{e^{C(\log\log N)^{3/2}}}{e^{\gamma C (\log\log N)^3}},
\end{eqnarray}
which tends to $0$ as $N\to\infty$. This concludes the proof of the lemma.
\end{proof}

With this lemma in hand, a simple argument concludes our proof of soft edge rigidity.

\begin{proposition} \label{Main theorem: Eigenvalue Rigidity near Soft Edge}
For $F(x)$ defined in \eqref{distribution of eq measure}, we have
\begin{equation} \label{eq: Eigenvalue Rigidity near Soft Edge}
\lim_{N\to\infty}\mathbb{P}\left(F'(\kappa_j)|\lambda_j - \kappa_j| \leq \frac{(1+\varepsilon)\log N}{\pi N} \text{ for } j = N - [\log N]^4, \cdots, N \right) = 1.
\end{equation}
\end{proposition}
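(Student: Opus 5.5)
We need to bound $F'(\kappa_j)|\lambda_j-\kappa_j|$ for $j\ge N-[\log N]^4$. The plan is to use the facts that $F'(x)=\frac1\pi\sqrt{(1-x)/(1+x)}$ vanishes at $1$ and is decreasing, together with the refined bound of Lemma \ref{hN near soft edge first lemma}. As in \eqref{eq 1: only need one side}--\eqref{eq 2: only need one side}, there are two sides to handle, but here the monotonicity works in the opposite direction. If $\lambda_j\ge\kappa_j$, then $F'$ is smaller on $[\kappa_j,\lambda_j]$ than at $\kappa_j$, so the mean value argument does not apply directly. If $\lambda_j<\kappa_j$, then $F'(\kappa_j)\le F'(\zeta)$ for $\zeta\in(\lambda_j,\kappa_j)$, which gives
\[
F'(\kappa_j)(\kappa_j-\lambda_j)\le F(\kappa_j)-F(\lambda_j)=\frac{h_N(\lambda_j)}{\sqrt2\pi N}.
\]
By Lemma \ref{hN near soft edge first lemma}, provided $\lambda_j\ge\kappa_{N-[\log N]^4}$, this is at most $C(\log\log N)^3/N=o(\log N/N)$. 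If instead $\lambda_j<\kappa_{N-[\log N]^4}$, we can still invoke Theorem \ref{maximum of hN}, which bounds $\max h_N$ by $\sqrt2(1+\varepsilon)\log N$. This gives the bound $(1+\varepsilon)\log N/(\pi N)$ directly.

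For the case $\lambda_j>\kappa_j$ we argue directly. For $j\ge N-[\log N]^4$, we have $1-\kappa_j\asymp((N-j)/N)^{2/3}$, hence
\[
F'(\kappa_j)\asymp (1-\kappa_j)^{1/2}\lesssim \big((\log N)^4/N\big)^{1/3}.
\]
It then suffices to show that $\lambda_j-\kappa_j\le \frac{\log N}{N}\big/F'(\kappa_j)$ with high probability. We split according to whether $\lambda_j\le 1+\eta_N$ for a small $\eta_N$.

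\textbf{Step 1.} With high probability $\lambda_N\le 1+N^{-2/3}\log N$. This follows from Lemma \ref{hN near soft edge first lemma}, or alternatively from the standard largest eigenvalue estimate. Take $x=1+N^{-2/3}\log N$. Since $F(x)=1$ and $\lim\mathbb P(h_N(1)=0)=1$ as noted in the introduction, having $h_N(x)\le -\sqrt2\pi$ means some eigenvalue exceeds $x$. Controlling this event requires a large deviation bound for $\lambda_N$. I expect this step to be the main obstacle. My first attempt would be to use Markov's inequality on $e^{-\gamma h_N}$ with the moment bound \eqref{hN moment edge 1}, applied at points beyond $1$. If that fails, I would fall back on the Airy-edge RH asymptotics from Case (IV).

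\textbf{Step 2.} Given Step 1, we have
\[
\lambda_j-\kappa_j\le (1-\kappa_j)+N^{-2/3}\log N.
\]
We consider two subcases.

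When $1-\kappa_j\ge N^{-2/3}\log N$, we return to the counting argument. Suppose $\lambda_j-\kappa_j>t$ with $t=c\frac{\log N}{N F'(\kappa_j)}$. Then
\[
h_N(\kappa_j+t)\le -\sqrt2\pi N\big(F(\kappa_j+t)-F(\kappa_j)\big).
\]
Since $F'$ is comparable to $F'(\kappa_j)$ on $[\kappa_j,\kappa_j+t]$ whenever $t\ll 1-\kappa_j$, the right-hand side is at most $-c'\log N$. This contradicts Lemma \ref{hN near soft edge first lemma} (the version for $-h_N$), because $c'\log N\gg(\log\log N)^3$. When $t\gtrsim 1-\kappa_j$, we instead use $F(\kappa_j+t)-F(\kappa_j)\gtrsim(1-\kappa_j)^{3/2}$. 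Since $N(1-\kappa_j)^{3/2}\gtrsim(\log N)^{3/2}$, this again contradicts the lemma.

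When $1-\kappa_j<N^{-2/3}\log N$, the bound is direct:
\[
F'(\kappa_j)(\lambda_j-\kappa_j)\lesssim N^{-1/3}(\log N)^{1/2}\cdot 2N^{-2/3}\log N=O\big((\log N)^{3/2}/N\big).
\]
This is too crude as it stands. I would refine it by replacing $\log N$ in the definition of $x$ in Step 1 with $(\log\log N)^{2}$, which Step 1 should also permit. That yields $O((\log\log N)^{3}/N)=o(\log N/N)$.

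Combining the two cases gives \eqref{eq: Eigenvalue Rigidity near Soft Edge}.
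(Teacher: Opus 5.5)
The gap is Step 1, and it is not a technicality. For the top indices, the claim itself carries edge information: for $j=N-1$ it forces $\lambda_{N-1}\le 1+O(N^{-2/3}\log N)$ with high probability. Your subcase B needs $\lambda_N\le 1+N^{-2/3}\omega_N$ with high probability, for some $\omega_N\to\infty$ with $\omega_N=o((\log N)^{2/3})$. None of the routes you suggest delivers this.

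\begin{itemize}
\item Lemma~\ref{hN near soft edge first lemma} applied to $-h_N$ only says that $-h_N(x)=\sqrt2\pi\,\#\{i:\lambda_i>x\}$, $x\ge 1$, stays below $C(\log\log N)^3$. It bounds how many eigenvalues lie beyond the edge, not where they are. Ruling out a single eigenvalue beyond $1+N^{-2/3}\omega_N$ would need the threshold $\sqrt2\pi$ instead.
\item The bound \eqref{hN moment edge 1} concerns $\mathbb{E}\,e^{+\gamma h_N(x)}$ for $x$ just below $1$. It comes from the deterministic inequality $h_N\le\sqrt2\pi N(1-F)$, and there is no analogue for $e^{-\gamma h_N}$ beyond $1$.
\item The Case (IV) asymptotics stop at $x\le 1-N^{-2/3}\log\log N$, on the wrong side of the edge.
\item You cannot lean on the introduction's claim $\lim\mathbb{P}(h_N(1)=0)=1$ either. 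Since $\mathbb{P}(h_N(1)=0)=\mathbb{P}(\lambda_N\le 1)$, it converges to a Tracy--Widom probability strictly less than $1$. Your equivalence only uses $F=1$ on $[1,\infty)$, so you do not need it.
\end{itemize}

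Step 1 has to be imported from outside the paper. One way: use $\mathbb{P}(\lambda_N>x)\le\int_x^\infty K_N(y,y)\,dy$, where $K_N(y,y)$ is the eigenvalue intensity of the unperturbed ensemble. Combine this with its Airy-type asymptotics near $1$ and exponential decay further out, which follow from the standard steepest descent with $\gamma=0$. This gives $\lim_{s\to\infty}\limsup_{N\to\infty}\mathbb{P}(\lambda_N>1+sN^{-2/3})=0$, i.e.\ tightness of $N^{2/3}(\lambda_N-1)$.

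With $\omega_N=(\log\log N)^2$, your subcase B then gives $O((\log\log N)^3/N)$. In subcase A, the branch $t\gtrsim 1-\kappa_j$ now only yields $-h_N\gtrsim(\log\log N)^3$. That still suffices, because the lemma holds for every $C>0$.

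Your treatment of $\lambda_j<\kappa_j$ is the paper's reduction, and subcase A is sound. The paper splits at $N-(\log N)^{1/2}$. Below that index it proves $\lambda_j-\kappa_j<N^{-2/3}$ with the same lemma and then uses \eqref{Taylor expansion of hN}. Above it, it bounds $F'(\kappa_j)(\lambda_j-\kappa_j)$ by $F'(\kappa_{N-(\log N)^{1/2}})(1-\kappa_{N-(\log N)^{1/2}})$ ``with probability 1''. That step tacitly treats $\lambda_j\le 1$ as certain, which it is not: $\mathbb{P}(\lambda_N>1)$ does not even tend to $0$. What it really needs is that the top eigenvalues overshoot $1$ by at most $N^{-2/3}\omega_N$ with high probability, which is exactly your Step 1; an overshoot of order $N^{-2/3}\log\log N$ is harmless there.

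So your decomposition is the paper's in substance. Your subcase A is a cleaner substitute for its first regime: a direct counting argument at scale $t=c\log N/(NF'(\kappa_j))$, with no second-order Taylor term. You have also located the real obstacle, which the paper passes over. But the proposal is not a proof until Step 1 is supplied by such an external soft-edge estimate.
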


\begin{proof}
We further divide this regime into two parts. \\
\underline{Regime 1: $j = N- [\log N]^4, \cdots, N - (\log N)^{1/2}$.} We claim that in this regime, one has
\begin{equation}
\lim_{N\to\infty} \mathbb{P} (\lambda_j - \kappa_j < N^{-2/3}) = 1.
\end{equation}
The proof is almost the same as Lemma \ref{lemma: bulk estimation of lambda-kappa near 1}, with the last equation replaced by the result in Lemma \ref{hN near soft edge first lemma}. The same argument as Proposition \ref{Main theorem: Eigenvalue Rigidity on the Bulk: right} applies, hence we have
\begin{equation} \label{eq: Eigenvalue Rigidity near Soft Edge 1}
\lim_{N\to\infty}\mathbb{P}\left(F'(\kappa_j)|\lambda_j - \kappa_j| \leq \frac{(1+\varepsilon)\log N}{\pi N} \text{ for } j = N - [\log N]^4, \cdots, N-(\log N)^{1/2} \right) = 1.
\end{equation}

\noindent\underline{Regime 2: $j = N - (\log N)^{1/2}, \cdots, N$.} This is almost direct by noting that for $j$ in this regime, one has
\begin{eqnarray}
F'(\kappa_j)(\lambda_j - \kappa_j) &\leq& F'(\kappa_{N-(\log N)^{1/2}}) (1 - \kappa_{N-(\log N)^{1/2}}) \nonumber\\
&\leq & C \frac{(\log N)^{1/2}}{N},
\end{eqnarray}
for sufficiently large $N$ with probability $1$. This concludes the proof of the proposition.

\end{proof}

Combining Proposition \ref{Main theorem: Eigenvalue Rigidity on the Bulk: left}, \ref{main proposition for edge rigidity}, \ref{Main theorem: Eigenvalue Rigidity on the Bulk: right}, \ref{Main theorem: Eigenvalue Rigidity near Soft Edge}, we finally conclude the proof of Theorem \ref{main theorem}.

\appendix

\section{Several Model RH problems}

\subsection{Confluent Hypergeometric Parametrix}\label{Appendix: Section: HG model RHP}

Denote by $\Phi_{\text{HG}}(z)=\Phi_{\text{HG}}(z;\beta)$ the confluent hypergeometric parametrix, where $\beta$ is a parameter. It satisfies the following RH problem.

\begin{rhp} \label{Appendix: HG model RHP}
\hfill

\begin{itemize}
  \item[(a)]   $\Phi_{\text{HG}}(z)$ is analytic in $\mathbb{C}\setminus \{\cup^6_{j=1}\widehat\Sigma_j\cup\{0\}\}$, where the contours $\widehat\Sigma_j$, $j=1,\ldots,6,$ are indicated in Figure. \ref{fig:jumps-Phi-C}.

  \begin{figure}[h]
\centering
   \setlength{\unitlength}{1truemm}
   \begin{picture}(100,70)(-5,2)
       \put(40,40){\line(-2,-3){18}}
       \put(40,40){\line(-2,3){18}}
       \put(40,40){\line(-1,0){30}}
       \put(40,40){\line(1,0){30}}
  \put(40,40){\line(2,-3){18}}
    \put(40,40){\line(2,3){18}}

       \put(30,55){\thicklines\vector(2,-3){1}}
       \put(30,40){\thicklines\vector(1,0){1}}
       \put(50,40){\thicklines\vector(1,0){1}}
       \put(30,25){\thicklines\vector(2,3){1}}
      \put(50,25){\thicklines\vector(2,-3){1}}
       \put(50,55){\thicklines\vector(2,3){1}}


       \put(42,36.9){$0$}
         \put(72,40){$\widehat \Sigma_1$}
           \put(60,69){$\widehat \Sigma_2$}
             \put(20,69){$\widehat \Sigma_3$}
              \put(3,40){$\widehat \Sigma_4$}
       \put(18,10){$\widehat \Sigma_5$}
          \put(60,11){$ \widehat \Sigma_6$}

%

       \put(40,40){\thicklines\circle*{1}}
\end{picture}
   \caption{The jump contours for the RH problem for $\Phi_{\text{HG}}$.}
   \label{fig:jumps-Phi-C}

\end{figure}

  \item[(b)] $\Phi_{\text{HG}}$ satisfies the following jump condition:
  \begin{equation}\label{HJumps}
  \Phi_{\text{HG}, +}(z)=\Phi_{\text{HG}, -} \widehat J_i(z), \quad z \in \widehat\Sigma_i,\quad j=1,\ldots,6,
  \end{equation}
  where
  \begin{equation*}
 \widehat J_1(z) = \begin{pmatrix}
    0 &   e^{-\beta \pi i} \\
    -  e^{\beta \pi i} &  0
    \end{pmatrix}, \qquad \widehat J_2(z) = \begin{pmatrix}
    1 & 0 \\
    e^{ \beta \pi i } & 1
    \end{pmatrix}, \qquad
    \widehat J_3(z) = \begin{pmatrix}
    1 & 0 \\
    e^{ -\beta\pi i} & 1
    \end{pmatrix},                                                         
  \end{equation*}
  \begin{equation*}
  \widehat J_4(z) = \begin{pmatrix}
    0 &   e^{\beta\pi i} \\
     -  e^{-\beta\pi i} &  0
     \end{pmatrix}, \qquad
      \widehat J_5(z) = \begin{pmatrix}
     1 & 0 \\
     e^{- \beta\pi i} & 1
     \end{pmatrix},\qquad
     \widehat J_6(z) = \begin{pmatrix}
   1 & 0 \\
   e^{\beta\pi i} & 1
   \end{pmatrix}.
  \end{equation*}

  \item[(c)] $\Phi_{\text{HG}}$ satisfies the following asymptotic behavior at infinity:
  \begin{equation} \label{Appendix: model RHP HG asymptotics}
\Phi_{\text{HG}}(z) = e^{-\frac{\pi i}{2}\beta} (I + O(z^{-1})) (z)^{-\beta\sigma_3} e^{-i\frac{z}{2}\sigma_3} \tilde{\chi}(z), \qquad z\to\infty
\end{equation}
where $\tilde{\chi}(z)$ is given by
\begin{eqnarray}
\tilde{\chi}(z) = \left\{\begin{array}{ll}
e^{\pi i \beta\sigma_3}, & \arg z\in (0, \pi)\\
\left(
  \begin{array}{cc}
    0 & -1 \\
    1 & 0 \\
  \end{array}
\right), & \arg z \in (\pi, 2\pi).
\end{array}
\right.
\end{eqnarray}
  
\item[(d)] As $z\to 0$, we have $\Phi_{\text{HG}}(z)=\Boh(\log |z|)$.

\end{itemize}

\end{rhp}

From \cite{IK}, it follows that the above RH problem can be solved explicitly in the following way. For $z$ belonging to the region bounded by the rays $\widehat \Sigma_1$ and $\widehat \Sigma_2$,
\begin{equation}\label{Hsolution}
\Phi_{\text{HG}}(z)=C_1\left(\begin{array}{ll}
\psi(\beta,1,e^{\frac{\pi i}{2}}z)e^{2 \beta \pi i}e^{-\frac{iz}{2}}&-\frac{\Gamma(1-\beta)}{\Gamma(\beta)}\psi(1-\beta,1,e^{-\frac{\pi i}{2}}z)e^{\beta \pi i}e^{\frac{iz}{2}}\\
-\frac{\Gamma(1+\beta)}{\Gamma(-\beta)}\psi(1+\beta,1,e^{\frac{\pi i}{2}}z)e^{\beta \pi i}e^{-\frac{iz}{2}}
&\psi(-\beta,1,e^{-\frac{\pi i}{2}}z)e^{\frac{iz}{2}}\end{array}\right),
\end{equation}
where the confluent hypergeometric function $\psi(a,b;z)$ is the unique solution to the Kummer's equation
\begin{equation} \label{Kummer-equation}
z\frac{\ud^2y}{\ud z^2}+(b-z)\frac{\ud y}{\ud z}-ay=0
\end{equation}
satisfying the boundary condition $\psi(a,b,z)\sim  z^{-a}$ as $z\to \infty$ and $-\frac{3\pi }{2} < \arg z < \frac{3\pi}{2}$;
see \cite[Chapter 13]{DLMF}.  The branches of the multi-valued functions are chosen such that $-\frac{\pi}{2}<\arg z<\frac{3\pi}{2}$ and
$$
C_1=
\begin{pmatrix}
e^{-\frac32 \beta \pi i} & 0
\\
0 & e^{\frac12 \beta \pi i}
\end{pmatrix}
$$
is a constant matrix. The explicit formula of $\Phi_{\text{HG}}(z)$ in the other sectors is then determined by using the jump condition \eqref{HJumps}. 

From \cite[Lemma C.1]{CFL2021}, one can show that for $\Re\beta = 0$, taking the limit where $z$ approaches $0$ with $\arg z\in (\pi/4,\pi/2)$, one has
\begin{equation}\label{Appendix: HG specific value}
\lim_{z\to 0} \left(\Phi_{HG}(z)\right)_{1,1} = \Gamma (1-\beta),\qquad \lim_{z\to 0} \left(\Phi_{HG}(z)\right)_{2,1} = \Gamma (1-\beta)
\end{equation}
and
\begin{equation} \label{Appendix: HG Lemma}
\lim_{z\to 0}\left(\Phi_{\text{HG}}^{-1}(z)\frac{d}{dz}\Phi_{\text{HG}}(z)\right)_{2,1}={ -\frac{2\pi  \beta}{e^{\pi i \beta}-e^{-\pi i \beta}}}.
\end{equation}

\subsection{A Modified Painlev\'e V Model RH Problem.} \label{Appendix: Section: PV model RHP}

The modified Painlev\'e V parametrix $\widehat{\Phi}_{PV}(z; s)$ with $s$ being a parameter is a solution to the following RH problem. 

\begin{rhp} \label{Appendix: PV RHP}
\hfill
\begin{itemize}
    \item[(a)] $\widehat{\Phi}_{PV}:\mathbb C\setminus \{ \cup_{j=1}^7 \widehat{\Gamma}_j \cup \{0,1\}\} \to \mathbb C^{2\times 2}$ is analytic, where $\widehat{\Gamma}_j$, $j=1,2,...,7$ are depicted in Figure \ref{fig:hatPsi}.
    \item[(b)] $\widehat{\Phi}_{PV}$ satisfies the jump conditions
    \begin{equation}\label{jump hatPsi}\widehat{\Phi}_{PV, +}(z)=\widehat{\Phi}_{PV, -}(z)\widehat J_k,\qquad
    z\in\widehat\Gamma_k,\end{equation}  where
                \begin{align*}
                &\widehat J_1=\begin{pmatrix}1&0\\e^{-{\sqrt{2}\pi}(\gamma_1+\gamma_2)}&1\end{pmatrix},
                &&\widehat J_2=\begin{pmatrix}1&0\\e^{-{\sqrt{2}\pi}(\gamma_1+\gamma_2)}&1\end{pmatrix},\\
                &\widehat J_3=\begin{pmatrix}1&0\\1&1\end{pmatrix},
                &&\widehat J_4=\begin{pmatrix}1&0\\ 1&1\end{pmatrix},\\
                &\widehat J_5=\begin{pmatrix}1&e^{{\sqrt{2}\pi}\gamma_2}\\0&1\end{pmatrix},
&&\widehat J_6=\begin{pmatrix}0&e^{{\sqrt{2}\pi}\gamma_1+{\sqrt{2}\pi}\gamma_2}\\ -e^{-{\sqrt{2}\pi}\gamma_1-{\sqrt{2}\pi}\gamma_2}&0\end{pmatrix},\\
                &\widehat J_7=\begin{pmatrix}0&1\\-1&0\end{pmatrix},                
                \end{align*}
    \item[(c)] $\Phi_{\text{PV}}$ satisfies the following asymptotic behavior at infinity:
    \begin{equation}\label{Psi ashat}
    \widehat{\Phi}_{PV}(z)=\left(I+\frac{\widehat{\Phi}_1}{z}+\frac{\widehat{\Phi}_2}{z^2}+O(z^{-3})\right)
    \widehat{\Phi}^\infty(z)e^{\pm \frac{s}{2}z\sigma_3} \qquad  \mbox{ as $z\to \infty$ \text{ and } $\pm \mathrm{Im}z>0$,}
    \end{equation}
    where
    $\widehat{\Phi}^\infty$ is defined in \eqref{hatpsiinfty}.
                \item[(d)] As $z \to x\in\{0,1\}$, we have
                \begin{equation*}
                \widehat{\Phi}_{PV}(z;s)=O(\log(z-x)).
                \end{equation*}
    \end{itemize}

\begin{figure}[h]
\centering
    \setlength{\unitlength}{0.8truemm}
    \begin{picture}(75,65)(5,10)
    \put(45,50){\thicklines\circle*{.8}}
    \put(60,50){\thicklines\circle*{.8}}
    \put(30,50){\thicklines\circle*{.8}}
    \put(55,50){\thicklines\vector(1,0){.0001}}
    \put(40,50){\thicklines\vector(1,0){.0001}}
    \put(15,50){\thicklines\vector(1,0){.0001}}
    \put(75,50){\thicklines\vector(1,0){.0001}}
    \put(0,50){\line(1,0){90}}
    \put(60,50){\line(1,1){25}}
    \put(30,50){\line(-1,1){25}}
    \put(60,50){\line(1,-1){25}}
    \put(30,50){\line(-1,-1){25}}
    \put(75,65){\thicklines\vector(1,1){.0001}}
    \put(15,65){\thicklines\vector(1,-1){.0001}}
    \put(75,35){\thicklines\vector(1,-1){.0001}}
    \put(15,35){\thicklines\vector(1,1){.0001}}

    \put(75,75){\small $\widehat{\Gamma}_4$}
    \put(15,75){\small $\widehat{\Gamma}_1$}
    \put(75,20){\small $\widehat{\Gamma}_3$}
    \put(15,20){\small $\widehat{\Gamma}_2$}
    \put(45,54){\small $\widehat{\Gamma}_5$}
    \put(15,54){\small $\widehat{\Gamma}_6$}
    \put(75,54){\small $\widehat{\Gamma}_7$}

    \put(30,53){\small $0$}
    \put(59,53){\small $1$}

    \end{picture}

    \caption{The jump contour for $\widehat{\Phi}_{PV}$.}
    \label{fig:hatPsi}
\end{figure}
    
\end{rhp}

It is shown in \cite[Lemma 6.1]{CFL2021} that the above RH problem is uniquely solvable when $\gamma_1, \gamma_2 \in \mathbb{R}$ and $s \in -i \mathbb{R}_+$. The solution can be constructed by using the $\psi$-functions associated with the Painlev\'e V equation; see also \cite[Sec. 3]{Cla:Kra2015}.

\subsection{A Model RH Problem for Soft Edge}\label{Appendix:subsec:modelrhp-softedge}

A new model RH problem for studying the case where the jump point $x$ is close to the soft edge $1$ is constructed in \cite[Sec. 6.2]{CFL2021}. It is stated below with only a slight difference in the asymptotics of $\Phi_{\text{soft}}$.

\begin{figure}[htbp]

\centering

\tikzset{every picture/.style={line width=0.75pt}} 

\begin{tikzpicture}[x=0.75pt,y=0.75pt,yscale=-1,xscale=1]

\draw    (100,123) -- (194.78,122.67) -- (289.56,122.33) -- (384.33,122) ;
\draw [shift={(152.39,122.82)}, rotate = 179.8] [fill={rgb, 255:red, 0; green, 0; blue, 0 }  ][line width=0.08]  [draw opacity=0] (8.93,-4.29) -- (0,0) -- (8.93,4.29) -- cycle    ;
\draw [shift={(247.17,122.48)}, rotate = 179.8] [fill={rgb, 255:red, 0; green, 0; blue, 0 }  ][line width=0.08]  [draw opacity=0] (8.93,-4.29) -- (0,0) -- (8.93,4.29) -- cycle    ;
\draw [shift={(341.94,122.15)}, rotate = 179.8] [fill={rgb, 255:red, 0; green, 0; blue, 0 }  ][line width=0.08]  [draw opacity=0] (8.93,-4.29) -- (0,0) -- (8.93,4.29) -- cycle    ;
\draw    (194.78,122.67) .. controls (221.33,84) and (266.33,83) .. (294.78,122.67) ;
\draw [shift={(250.05,93.61)}, rotate = 183.01] [fill={rgb, 255:red, 0; green, 0; blue, 0 }  ][line width=0.08]  [draw opacity=0] (8.93,-4.29) -- (0,0) -- (8.93,4.29) -- cycle    ;
\draw    (194.78,122.67) .. controls (218.33,162) and (270.33,162) .. (294.78,122.67) ;
\draw [shift={(250.14,151.87)}, rotate = 177.33] [fill={rgb, 255:red, 0; green, 0; blue, 0 }  ][line width=0.08]  [draw opacity=0] (8.93,-4.29) -- (0,0) -- (8.93,4.29) -- cycle    ;
\draw    (116.33,45) -- (194.78,122.67) ;
\draw [shift={(159.11,87.35)}, rotate = 224.71] [fill={rgb, 255:red, 0; green, 0; blue, 0 }  ][line width=0.08]  [draw opacity=0] (8.93,-4.29) -- (0,0) -- (8.93,4.29) -- cycle    ;
\draw    (117.33,198) -- (194.78,122.67) ;
\draw [shift={(159.64,156.85)}, rotate = 135.79] [fill={rgb, 255:red, 0; green, 0; blue, 0 }  ][line width=0.08]  [draw opacity=0] (8.93,-4.29) -- (0,0) -- (8.93,4.29) -- cycle    ;

\draw (178,133) node [anchor=north west][inner sep=0.75pt]   [align=left] {$\displaystyle -1$};
\draw (290,128) node [anchor=north west][inner sep=0.75pt]   [align=left] {$\displaystyle 0$};
\draw (123,199) node [anchor=north west][inner sep=0.75pt]   [align=left] {$\displaystyle \Sigma _{\Phi ,1}$};
\draw (123,24) node [anchor=north west][inner sep=0.75pt]   [align=left] {$\displaystyle \Sigma _{\Phi ,1}$};
\draw (235,170) node [anchor=north west][inner sep=0.75pt]   [align=left] {$\displaystyle \Sigma _{\Phi ,2}$};
\draw (242,53) node [anchor=north west][inner sep=0.75pt]   [align=left] {$\displaystyle \Sigma _{\Phi ,2}$};

\end{tikzpicture}
\caption{The jump contour $\Sigma_{\Phi}$}\label{ContourPhi}
\end{figure}

\begin{rhp}\label{model-rhp-soft-edge}
\hfill

\begin{itemize}
\item[(a)] $\Phi_{\text{soft}} = \Phi_{\text{soft}}(\cdot;u)$ is analytic on $\mathbb C\setminus (\mathbb R \cup \Sigma_{\Phi,1}\cup\Sigma_{\Phi,2})$. The contours $\Sigma_{\Phi,1}$ and $\Sigma_{\Phi,2}$ are as in Figure \ref{ContourPhi}: they consist of straight lines near $0$, and near $-1$ they will be specified below.
\item[(b)] On $\mathbb R \cup \Sigma_{\Phi,1}\cup\Sigma_{\Phi,2} \setminus \{-1,0\}$, $\Phi_{\text{soft}}$ has the following jumps:
\begin{equation}
\begin{aligned}
\Phi_{\text{soft}, +}(\lambda)&=\Phi_{\text{soft}, -}(\lambda) \begin{pmatrix}
1&0\\e^{-{\sqrt{2}\pi}\gamma}e^{\frac{4}{3}(\lambda u)^{3/2}}&1
\end{pmatrix}	&& \textrm{for } \lambda\in \Sigma_{\Phi,1},
\\
\Phi_{\text{soft}, +}(\lambda)&=\Phi_{\text{soft}, -}(\lambda) \begin{pmatrix}
1&0\\e^{\frac{4}{3}(\lambda u)^{3/2}}&1
\end{pmatrix}	&& \textrm{for } \lambda\in \Sigma_{\Phi,2},
\\
\Phi_{\text{soft}, +}(\lambda)&=\Phi_{\text{soft}, -}(\lambda) \begin{pmatrix}
0&e^{{\sqrt{2}\pi}\gamma}\\-e^{-{\sqrt{2}\pi}\gamma}&0
\end{pmatrix}	&& \textrm{for } \lambda\in (-\infty,-1),
\\
\Phi_{\text{soft}, +}(\lambda)&=\Phi_{\text{soft}, -}(\lambda) \begin{pmatrix}
0&1\\-1&0
\end{pmatrix}	&& \textrm{for } \lambda\in (-1,0),
\\
\Phi_{\text{soft}, +}(\lambda)&=\Phi_{\text{soft}, -}(\lambda) \begin{pmatrix}
1&e^{-\frac{4}{3}(\lambda u)^{3/2}}\\0&1
\end{pmatrix}	&& \textrm{for } \lambda\in (0,\infty),\label{eq:jumpPhi}
\end{aligned}
\end{equation}
where principal branches are chosen, and where $u>0$ and { $\gamma \in \mathbb R$} are parameters.
\item[(c)] As $\lambda\to \infty$,
\begin{equation}\label{eq:Phiasy}
\Phi_{\text{soft}}(\lambda)={  (I+O(\lambda^{-1}))\begin{pmatrix}
1&0\\ i{\sqrt{2}}\gamma &1
\end{pmatrix}\lambda^{-\frac{1}{4}\sigma_3} B
e^{-{\sqrt{2}\pi}\frac{\gamma}{2}\sigma_3},}
\end{equation}
where principal branches are chosen, and $B=\frac{1}{\sqrt 2}\begin{pmatrix}
1&i\\i&1
\end{pmatrix}$.
\item[(d)] As $\lambda\to 0$, $\Phi_{\text{soft}}(\lambda)$ remains bounded and as $\lambda\to -1$, $\Phi_{\text{soft}}(\lambda)=\mathcal O(\log(\lambda+1))$.
\end{itemize}
\end{rhp}

As discussed in \cite[Sec. 6.2]{CFL2021}, we do not need a specific solution to this RH problem. The existence and asymptotics of the solution $\Phi_{\text{soft}}$ has been established in \cite{CFL2021}, and they are sufficient for our study.

\subsection{A Model RH Problem for Hard Edge}\label{Appendix:subsec:modelrhp-hardedge}

A new model RH problem for studying the case where the jump point $x$ is close to the hard edge is constructed in \cite[Sec. 3.5.1]{Dai-Lu-JUE}. 

\begin{figure}[htbp]

\centering

\tikzset{every picture/.style={line width=0.75pt}} 

\begin{tikzpicture}[x=0.75pt,y=0.75pt,yscale=-1,xscale=1]

\draw    (100,122) -- (300.17,122.57) ;
\draw [shift={(300.17,122.57)}, rotate = 0.16] [color={rgb, 255:red, 0; green, 0; blue, 0 }  ][fill={rgb, 255:red, 0; green, 0; blue, 0 }  ][line width=0.75]      (0, 0) circle [x radius= 3.35, y radius= 3.35]   ;
\draw [shift={(205.08,122.3)}, rotate = 180.16] [fill={rgb, 255:red, 0; green, 0; blue, 0 }  ][line width=0.08]  [draw opacity=0] (8.93,-4.29) -- (0,0) -- (8.93,4.29) -- cycle    ;
\draw    (300.17,122.57) -- (500.33,123.14) ;
\draw [shift={(500.33,123.14)}, rotate = 0.16] [color={rgb, 255:red, 0; green, 0; blue, 0 }  ][fill={rgb, 255:red, 0; green, 0; blue, 0 }  ][line width=0.75]      (0, 0) circle [x radius= 3.35, y radius= 3.35]   ;
\draw [shift={(405.25,122.87)}, rotate = 180.16] [fill={rgb, 255:red, 0; green, 0; blue, 0 }  ][line width=0.08]  [draw opacity=0] (8.93,-4.29) -- (0,0) -- (8.93,4.29) -- cycle    ;
\draw [shift={(300.17,122.57)}, rotate = 0.16] [color={rgb, 255:red, 0; green, 0; blue, 0 }  ][fill={rgb, 255:red, 0; green, 0; blue, 0 }  ][line width=0.75]      (0, 0) circle [x radius= 3.35, y radius= 3.35]   ;
\draw    (191.33,14.14) -- (300.17,122.57) ;
\draw [shift={(249.29,71.88)}, rotate = 224.89] [fill={rgb, 255:red, 0; green, 0; blue, 0 }  ][line width=0.08]  [draw opacity=0] (8.93,-4.29) -- (0,0) -- (8.93,4.29) -- cycle    ;
\draw    (185.33,227.14) -- (300.17,122.57) ;
\draw [shift={(246.45,171.49)}, rotate = 137.68] [fill={rgb, 255:red, 0; green, 0; blue, 0 }  ][line width=0.08]  [draw opacity=0] (8.93,-4.29) -- (0,0) -- (8.93,4.29) -- cycle    ;
\draw    (300.17,122.57) .. controls (351.33,61.14) and (437.33,59.14) .. (500.33,123.14) ;
\draw [shift={(405.5,76.15)}, rotate = 182.72] [fill={rgb, 255:red, 0; green, 0; blue, 0 }  ][line width=0.08]  [draw opacity=0] (8.93,-4.29) -- (0,0) -- (8.93,4.29) -- cycle    ;
\draw    (300.17,122.57) .. controls (347.33,180.14) and (448.33,181.14) .. (500.33,123.14) ;
\draw [shift={(405.22,166.07)}, rotate = 178.76] [fill={rgb, 255:red, 0; green, 0; blue, 0 }  ][line width=0.08]  [draw opacity=0] (8.93,-4.29) -- (0,0) -- (8.93,4.29) -- cycle    ;

\draw (288,143) node [anchor=north west][inner sep=0.75pt]   [align=left] {$\displaystyle -1$};
\draw (494,143) node [anchor=north west][inner sep=0.75pt]   [align=left] {$\displaystyle 0$};
\draw (228,29) node [anchor=north west][inner sep=0.75pt]   [align=left] {$\displaystyle \Gamma _{\Phi ,1}$};
\draw (222,197) node [anchor=north west][inner sep=0.75pt]   [align=left] {$\displaystyle \Gamma _{\Phi ,1}$};
\draw (382,44) node [anchor=north west][inner sep=0.75pt]   [align=left] {$\displaystyle \Gamma _{\Phi ,2}$};
\draw (376,176) node [anchor=north west][inner sep=0.75pt]   [align=left] {$\displaystyle \Gamma _{\Phi ,2}$};

\end{tikzpicture}

\caption{The jump contours of the RH problem for $\Phi$} \label{Figure: Model RHP}

\end{figure}

This model RH problem is similar to that in \cite[Sec. 6.2]{CFL2021}, but with a slightly different contour. This difference arises because the endpoint $z=1$ is a hard edge in \cite{Dai-Lu-JUE}, in contrast to the soft edge in \cite{CFL2021}.

\begin{rhp} \label{model rhp for Phi}
\hfill
\begin{itemize}
\item[(a)] $\Phi = \Phi(\lambda; u)$ is analytic on $\mathbb{C}\setminus ((-\infty, 0]\cup \Gamma_{\Phi, 1}\cup \Gamma_{\Phi, 2})$; see Figure \ref{Figure: Model RHP}.

\item[(b)] On $(-\infty, 0]\cup \Gamma_{\Phi, 1}\cup \Gamma_{\Phi, 2}\setminus \{-1, 0\}$, $\Phi$ satisfies the following jump conditions:
    \begin{eqnarray} \label{Model-jump-Phi}
    \begin{array}{ll}
    \Phi_+(\lambda) = 
    \left\{
    \begin{array}{ll}
    \Phi_-(\lambda)\left(\begin{array}{ll} 1 & 0\\ e^{-\sqrt{2}\pi\gamma}e^{-4(\lambda u)^{1/2}}e^{\pm \alpha\pi i} & 1
     \end{array}\right) & \text{ for } \lambda\in\Gamma_{\Phi, 1}, \text{ and } \lambda \in \mathbb{C}^{\pm}, \\
     \Phi_-(\lambda)\left(\begin{array}{ll} 1 & 0\\ e^{-4(\lambda u)^{1/2}}e^{\pm \alpha\pi i} & 1
     \end{array}\right) & \text{ for } \lambda\in\Gamma_{\Phi, 2}, \text{ and } \lambda \in \mathbb{C}^{\pm}, \\
     \Phi_-(\lambda)\left(\begin{array}{ll} 0 & e^{\sqrt{2}\pi\gamma}\\ -e^{-\sqrt{2}\pi\gamma} & 0
     \end{array}\right) & \text{ for } \lambda\in (-\infty, -1),\\
     \Phi_-(\lambda)\left(\begin{array}{ll} 0 & 1\\ -1 & 0
     \end{array}\right) & \text{ for } \lambda \in (-1, 0),
     \end{array}
     \right.
    \end{array}
    \end{eqnarray}
   where principal branches are chosen in the square roots, and $u>0$ and $\gamma\in\mathbb{R}$ are parameters.

\item[(c)] As $\lambda\to\infty$,
\begin{equation}\label{Phiasym}
\Phi(\lambda) = (I+O(\lambda^{-1}))\left(\begin{array}{cc} 1 & 0\\ i\sqrt{2}\gamma & 1\end{array}\right)\lambda^{-\frac{1}{4}\sigma_3} B e^{-\sqrt{2}\pi \frac{\gamma}{2}\sigma_3},
\end{equation}
where the principal branches are chosen, and
\begin{equation}\label{A}
B = \frac{1}{\sqrt 2}\left(\begin{array}{cc} 1 & i\\ i & 1\end{array}\right).
\end{equation}

\item[(d)] As $\lambda\to -1$, $\Phi(\lambda) = O(\log |\lambda+1|)$.

\item[(e)] For $\alpha<0$, $\Phi(\lambda)$ has the following behaviour:
\begin{eqnarray}
\Phi(\lambda) = O\left(
                   \begin{array}{cc}
                     |\lambda|^{\alpha / 2} & |\lambda|^{\alpha / 2} \\
                     |\lambda|^{\alpha / 2} & |\lambda|^{\alpha / 2} \\
                   \end{array}
                 \right), \text{ as } \lambda\to 0.
\end{eqnarray}
For $\alpha=0$, $\Phi(\lambda)$ has the following behaviour:
\begin{eqnarray}
\Phi(\lambda) = O\left(
                   \begin{array}{cc}
                     \log |\lambda| & \log |\lambda| \\
                     \log |\lambda| & \log |\lambda| \\
                   \end{array}
                 \right), \text{ as } \lambda\to 0.
\end{eqnarray}
For $\alpha>0$, $\Phi(\lambda)$ has the following behaviour:
\begin{eqnarray}
\Phi(\lambda) = \left\{\begin{array}{ll}
O\left(
   \begin{array}{cc}
     |\lambda|^{\alpha/2} & |\lambda|^{-\alpha/2} \\
     |\lambda|^{\alpha/2} & |\lambda|^{-\alpha/2} \\
   \end{array}
 \right), & \text{as } \lambda\to 0 \text{ outside the lens,}\\
O\left(
   \begin{array}{cc}
     |\lambda|^{-\alpha/2} & |\lambda|^{-\alpha/2} \\
     |\lambda|^{-\alpha/2} & |\lambda|^{-\alpha/2} \\
   \end{array}
 \right), & \text{as } \lambda\to 0 \text{ inside the lens,}
\end{array}
\right.
\end{eqnarray}
\end{itemize}
\end{rhp}

Similarly, to obtain the asymptotics of the related Hankel determinants, an explicit expression for $\Phi(\lambda; u)$ is unnecessary; only its existence for sufficiently large $u$ is required. This existence can be established by analyzing the asymptotic behavior of the RH problem as $u \to +\infty$.

\paragraph*{Acknowledgements}
The authors would like to express their sincere gratitude to Prof. Dan Dai for his invaluable and constructive guidance and discussions.

\end{document}